\newcommand{\ul}[1]{{\underline{#1}}}
\newcommand{\Hs}{\mathcal{H}}
\newcommand{\idmtx}{\mathcal{I}}
\DeclareMathOperator{\Tr}{Tr}
\DeclareMathOperator{\sr}{SR}
\newtheorem{definition}{Definition}
\newtheorem{theorem}{Theorem}
\newtheorem{lemma}{Lemma}
\newtheorem{corollary}{Corollary}
\newcommand{\A}{\mathbb{A}}
\newcommand{\C}{\mathbb{C}}
\newcommand{\rd}[1]{{\color{red}#1}}
\title{Entanglement Cost of Erasure Correction in\\[0.05cm]Quantum MDS Codes}
\author{\IEEEauthorblockN{\,\\
{\large Kaushik Senthoor}
\\[0.05cm]QuTech, Delft University of Technology, Lorentzweg 1, 2628 CJ Delft, The Netherlands}
\thanks{Corresponding author: Kaushik Senthoor (r.k.senthoor@tudelft.nl)}}
\begin{document}

\maketitle

\begin{abstract}
In distributed quantum storage, physical qubits of a code will be stored across the network.
When qubits in one of the nodes are lost \textit{i.e.} when the node is erased, the remaining nodes need to communicate with a new node to replace the lost qubits.
Here, we look at the problem of how much entanglement cost is needed to perform such a distributed quantum erasure correction.
We focus on distributed quantum storage based on quantum maximum distance separable (MDS) codes.
We derive lower bounds on the entanglement cost when the quantum network used for the erasure correction has a star topology.
We show that the simple method of downloading the non-erased qudits and performing operations at a single node is optimal when the minimal number of non-erased nodes are accessed.
It remains to be seen what the entanglement cost will be when a non-minimal number of non-erased nodes are accessed.
The techniques used in this work can be developed further to study the entanglement cost of quantum erasure correction in more general code families and network topologies.
\end{abstract}

\section{Introduction}
\label{s:intro}
\thispagestyle{empty}

Over the last three decades, there has been growing research in the field of quantum information.
This research is mainly aimed at applications in cryptography, computation and metrology which can provide advantage over using classical information.
However, storing and processing quantum information within a single physical system is sometimes not practical.
Here we look at an interesting problem related to storing quantum information in a distributed system with efficient storage and communication costs.
Specifically, we study the entanglement cost (or the amount of quantum communication needed) to perform distributed erasure correction in quantum MDS codes.

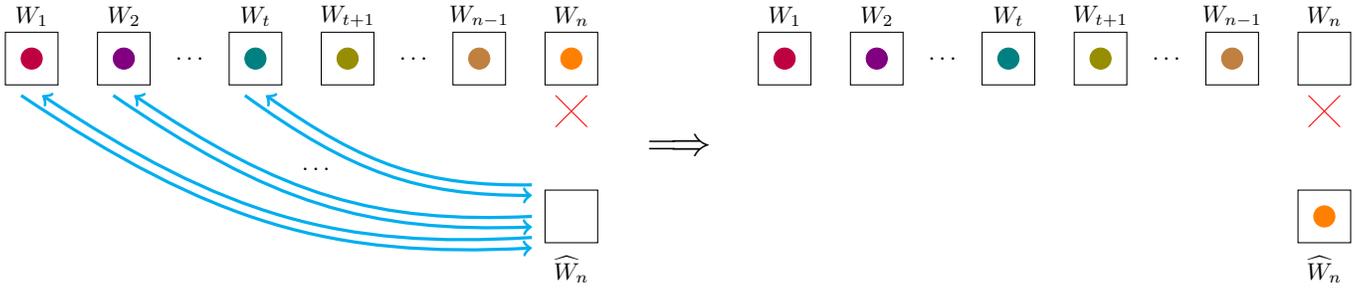
\begin{figure*}[t]
\begin{center}
\begin{tikzpicture}[xscale=0.7,yscale=1.4,every node/.style={scale=0.9}]
\draw (0,0) rectangle (1,0.5);
\draw [purple,fill=purple] (0.5,0.25) ellipse (0.2cm and 0.1cm);
\node at (0.5,0.65) {$W_1$};
\draw [<-,very thick,cyan] (10,-1.55) to [bend left=10] (0.3,-0.1);
\draw [->,very thick,cyan] (10,-1.45) to [bend left=10] (0.7,-0.1);
\draw (1.75,0) rectangle (2.75,0.5);
\draw [violet,fill=violet] (2.25,0.25) ellipse (0.2cm and 0.1cm);
\node at (2.25,0.65) {$W_2$};
\draw [<-,very thick,cyan] (10,-1.35) to [bend left=10] (2.05,-0.1);
\draw [->,very thick,cyan] (10,-1.25) to [bend left=10] (2.45,-0.1);
\node at (3.5,0.25) {$\hdots$};
\node at (5.9,-0.8) {$\hdots$};
\draw (4.25,0) rectangle (5.25,0.5);
\node at (4.75,0.65) {$W_t$};
\draw [teal,fill=teal] (4.75,0.25) ellipse (0.2cm and 0.1cm);
\draw [<-,very thick,cyan] (10,-1.05) to [bend left=10] (4.55,-0.1);
\draw [->,very thick,cyan] (10,-0.95) to [bend left=10] (4.95,-0.1);
\draw (6,0) rectangle (7,0.5);
\node at (6.5,0.65) {$W_{t+1}$};
\draw [olive,fill=olive] (6.5,0.25) ellipse (0.2cm and 0.1cm);
\node at (7.75,0.25) {$\hdots$};
\draw (8.5,0) rectangle (9.5,0.5);
\node at (9,0.65) {$W_{n-1}$};
\draw [brown,fill=brown] (9,0.25) ellipse (0.2cm and 0.1cm);
\draw (10.25,0) rectangle (11.25,0.5);
\node at (10.75,0.65) {$W_n$};
\draw [orange,fill=orange] (10.75,0.25) ellipse (0.2cm and 0.1cm);
\draw [red] (10.45,-0.4) -- (11.05,-0.1);
\draw [red] (10.45,-0.1) -- (11.05,-0.4);
\draw (10.25,-1.5) rectangle (11.25,-1);
\node at (10.75,-1.75) {$\widehat{W}_n$};
\node at (12.8,-0.6) {\LARGE$\Longrightarrow$};
\begin{scope}[xshift=14.3cm]
\draw (0,0) rectangle (1,0.5);
\draw [purple,fill=purple] (0.5,0.25) ellipse (0.2cm and 0.1cm);
\node at (0.5,0.65) {$W_1$};
\draw (1.75,0) rectangle (2.75,0.5);
\draw [violet,fill=violet] (2.25,0.25) ellipse (0.2cm and 0.1cm);
\node at (2.25,0.65) {$W_2$};
\node at (3.5,0.25) {$\hdots$};
\draw (4.25,0) rectangle (5.25,0.5);
\node at (4.75,0.65) {$W_t$};
\draw [teal,fill=teal] (4.75,0.25) ellipse (0.2cm and 0.1cm);
\draw (6,0) rectangle (7,0.5);
\node at (6.5,0.65) {$W_{t+1}$};
\draw [olive,fill=olive] (6.5,0.25) ellipse (0.2cm and 0.1cm);
\node at (7.75,0.25) {$\hdots$};
\draw (8.5,0) rectangle (9.5,0.5);
\node at (9,0.65) {$W_{n-1}$};
\draw [brown,fill=brown] (9,0.25) ellipse (0.2cm and 0.1cm);
\draw (10.25,0) rectangle (11.25,0.5);
\node at (10.75,0.65) {$W_n$};
\draw [red] (10.45,-0.4) -- (11.05,-0.1);
\draw [red] (10.45,-0.1) -- (11.05,-0.4);
\draw (10.25,-1.5) rectangle (11.25,-1);
\node at (10.75,-1.75) {$\widehat{W}_n$};
\draw [orange,fill=orange] (10.75,-1.25) ellipse (0.2cm and 0.1cm);
\end{scope}
\end{tikzpicture}
\captionsetup{justification=justified}
\caption{Download-and-return protocol for erasure correction of the $e$th physical qudit in an $[[n,2t-n]]_Q$ quantum MDS code with the $t$ helper nodes in $W_T$ using star network $H_1$ which has replacement node $\widehat{W}_e$ as hub.
Here, $e=n$ and $T=\{1,2,\hdots,t\}$.
The total number of qudits communicated equals $2t$.}
\label{fig:network_diag_1}
\end{center}
\end{figure*}

\subsection{Motivation}

Performing quantum error correction (QEC) in a distributed quantum storage will be an essential phenomenon in many applications of quantum computing and communication.
A clear example is modular quantum computing, which is an important alternative for individual quantum computers~\cite{grover97,cirac99}.
In recent times, this technique is being actively explored due to hardware challenges which place limits on the number of qubits a single node can hold.
Modular quantum computing architectures connect multiple nodes, each with a small number of qubits, through photonic interconnects.
Building hardware capabilities~\cite{aghaee25,weaver25,sunami25}, architectures~\cite{van10,monroe14,shapourian25} and distributed algorithms~\cite{main25,muralidharan25} needed for modular quantum computing are ongoing research topics.
Quantum data centers are another example where distributed QEC becomes important.
Today, classical data centers store large amounts of data, which can be accessed by end users on demand.
These data centers use distributed error correction for reliable and efficient data storage~\cite{dau18}.
In the future, we will need similar \textit{quantum} data centers to provide quantum applications as a service on a large scale.
For instance, Liu et al~\cite{liu23,liu24} consider quantum data centers for potential applications such as centralized generation of magic states and multi-party private quantum communication.

In distributed quantum storage, to handle errors in qubits stored within a single node, the qubits can be locally encoded using a quantum error correction code.
A small number of qubit errors in a node can be corrected through error correction cycles within that node.
However, a large number of correlated qubit errors may occur in a node due to disturbances such as cosmic ray events (CRE) or other environmental factors~\cite{cardani20,mcewen22,wu25}.
In such a case, to repair the node, we would need a quantum error correction code encoding across multiple nodes.
Xu et al~\cite{xu2022} have illustrated the idea of inter-node encoding to protect against cosmic ray events.
Their analysis showed that such inter-node encoding can effectively reduce the rate of CRE-induced erasures from once in 10 seconds to less than once in a month.
Scaling up the number of nodes, the amount of quantum communication in the network needed for the inter-node error-correction cycles will also increase.
Hence, it is important to know how to do inter-node quantum error correction with an efficient consumption of the EPR pairs generated in the network.

Distributed QEC is also relevant to quantum network applications using multipartite entanglement.
For cryptographic applications such as quantum secret sharing and conference key agreement, using multipartite entanglement is advantageous under certain settings~\cite{epping17,memmen23}.
Some qubits in a multipartite entangled state in a network can get lost before being used by the application, due to transmission loss or decoherence.
In this situation, the parties in the network may drop the multipartite state altogether and generate a new multipartite state.
As an alternative approach, the parties can simply replace the lost qubits using erasure correction of the existing multipartite state.
The entanglement cost needed to perform such an erasure correction is a key parameter in deciding between the two approaches.
Of particular interest are the absolutely maximally entangled (AME) states~\cite{helwig12}.
AME states is a special class of multipartite entangled states in which any bipartition of the parties is maximally entangled.
An AME state can also be defined as the logical state in a quantum MDS code encoding zero logical qudits.
For instance, AME states can be used in parallel teleportation protocols with more than two parties, in which the sender and the receiver can be decided after the distribution of the entangled state~\cite{helwig14}.

\subsection{Related work}

Any distributed QEC implementation needs to take into account inter-node connectivity and entanglement constraints.
Topological codes is a common class of quantum codes that is being considered for such implementations~\cite{nickerson13,bone24,singh24,chandra25}.
Sutcliffe et al~\cite{sutcliffe25} propose using hyperbolic Floquet codes, which have high rates and require only one EPR pair per syndrome measurement.
Nadkarni et al~\cite{nadkarni20} gives a construction for modified graph state codes in which the erasure correction operations are performed within a neighborhood of the erased node up to four edges away.
In quantum locally recoverable codes~\cite{golowich23,luo23}, an erased node accesses only a small number of nodes during erasure correction.
Delfosse et al~\cite{delfosse2024} introduced the quantum Gabidulin codes for inter-node quantum error correction in a stacked quantum memory with multi-qubit nodes.
We can also look at our problem in relation to regenerating codes~\cite{ramkumar22}, which give a reduced communication cost for erasure correction in distributed (classical) storage.

On the other hand, the entanglement cost needed for various distributed quantum operations (including distributed QEC) has also been studied.
These works look at the minimal entanglement needed for distributed quantum operations irrespective of the protocol used to perform the operations.
Yamasaki et al have derived bounds on the entanglement cost needed for quantum state construction~\cite{yamasaki17} and quantum state merging~\cite{yamasaki18} in exact and approximate settings.
Later, they extended these works to the entanglement cost of encoding and decoding quantum codes~\cite{yamasaki19}.
Devetak et al~\cite{devetak08} give information-theoretic expressions for minimal entanglement and quantum communication needed for redistribution of multipartite entanglement.
Hayden et al~\cite{hayden01} and related work~\cite{hayden03,yamasaki24} study the entanglement cost of quantum state transformations.

\subsection{Problem setting}

\begin{figure*}[t]
\begin{center}
\begin{tikzpicture}[xscale=0.7,yscale=1.4,every node/.style={scale=0.9}]
\draw (0,0) rectangle (1,0.5);
\draw [purple,fill=purple] (0.5,0.25) ellipse (0.2cm and 0.1cm);
\node at (0.5,0.65) {$W_1$};
\draw (1.75,0) rectangle (2.75,0.5);
\draw [violet,fill=violet] (2.25,0.25) ellipse (0.2cm and 0.1cm);
\node at (2.25,0.65) {$W_2$};
\draw [->,very thick,cyan] (1.2,-0.1) to [bend left=-30] (2.15,-0.1);
\draw [<-,very thick,cyan] (1,-0.1) to [bend left=-30] (2.35,-0.1);
\node at (3.5,0.25) {$\hdots$};
\node at (2.8,-0.35) {$\hdots$};
\draw (4.25,0) rectangle (5.25,0.5);
\node at (4.75,0.65) {$W_{t-1}$};
\draw [magenta,fill=magenta] (4.75,0.25) ellipse (0.2cm and 0.1cm);
\draw [->,very thick,cyan] (0.6,-0.1) to [bend left=-25] (4.65,-0.1);
\draw [<-,very thick,cyan] (0.4,-0.1) to [bend left=-25] (4.85,-0.1);
\draw (6,0) rectangle (7,0.5);
\node at (6.5,0.65) {$W_t$};
\draw [->,very thick,cyan] (0.2,-0.1) to [bend left=-25] (6.4,-0.1);
\draw [<-,very thick,cyan] (0,-0.1) to [bend left=-25] (6.6,-0.1);
\draw [teal,fill=teal] (6.5,0.25) ellipse (0.2cm and 0.1cm);
\node at (7.75,0.25) {$\hdots$};
\draw (8.5,0) rectangle (9.5,0.5);
\node at (9,0.65) {$W_{n-1}$};
\draw [brown,fill=brown] (9,0.25) ellipse (0.2cm and 0.1cm);
\draw (10.25,0) rectangle (11.25,0.5);
\node at (10.75,0.65) {$W_n$};
\draw [orange,fill=orange] (10.75,0.25) ellipse (0.2cm and 0.1cm);
\draw [red] (10.45,-0.4) -- (11.05,-0.1);
\draw [red] (10.45,-0.1) -- (11.05,-0.4);
\draw (10.25,-1.5) rectangle (11.25,-1);
\node at (10.75,-1.75) {$\widehat{W}_n$};
\draw [->,very thick,cyan] (-0.2,-0.1) to [bend left=-18] (10,-1.25);
\node at (12.8,-0.6) {\LARGE$\Longrightarrow$};
\begin{scope}[xshift=14.3cm]
\draw (0,0) rectangle (1,0.5);
\draw [purple,fill=purple] (0.5,0.25) ellipse (0.2cm and 0.1cm);
\node at (0.5,0.65) {$W_1$};
\draw (1.75,0) rectangle (2.75,0.5);
\draw [violet,fill=violet] (2.25,0.25) ellipse (0.2cm and 0.1cm);
\node at (2.25,0.65) {$W_2$};
\node at (3.5,0.25) {$\hdots$};
\draw (4.25,0) rectangle (5.25,0.5);
\node at (4.75,0.65) {$W_{t-1}$};
\draw [magenta,fill=magenta] (4.75,0.25) ellipse (0.2cm and 0.1cm);
\draw (6,0) rectangle (7,0.5);
\node at (6.5,0.65) {$W_t$};
\draw [teal,fill=teal] (6.5,0.25) ellipse (0.2cm and 0.1cm);
\node at (7.75,0.25) {$\hdots$};
\draw (8.5,0) rectangle (9.5,0.5);
\node at (9,0.65) {$W_{n-1}$};
\draw [brown,fill=brown] (9,0.25) ellipse (0.2cm and 0.1cm);
\draw (10.25,0) rectangle (11.25,0.5);
\node at (10.75,0.65) {$W_n$};
\draw [red] (10.45,-0.4) -- (11.05,-0.1);
\draw [red] (10.45,-0.1) -- (11.05,-0.4);
\draw (10.25,-1.5) rectangle (11.25,-1);
\node at (10.75,-1.75) {$\widehat{W}_n$};
\draw [orange,fill=orange] (10.75,-1.25) ellipse (0.2cm and 0.1cm);
\end{scope}
\end{tikzpicture}
\captionsetup{justification=justified}
\caption{Download-and-return protocol for erasure correction of the $e$th physical qudit in an $[[n,2t-n]]_Q$ quantum MDS code with the $t$ helper nodes in $W_T$ using star network $H_2$ which has one of the helper nodes $W_{j_1}$ ($j_1\in T$) as hub.
Here $e=n$, $T=\{1,2,\hdots,t\}$ and $j_1=1$.
The total number of qudits communicated equals $2t-1$.}
\label{fig:network_diag_2}
\end{center}
\end{figure*}
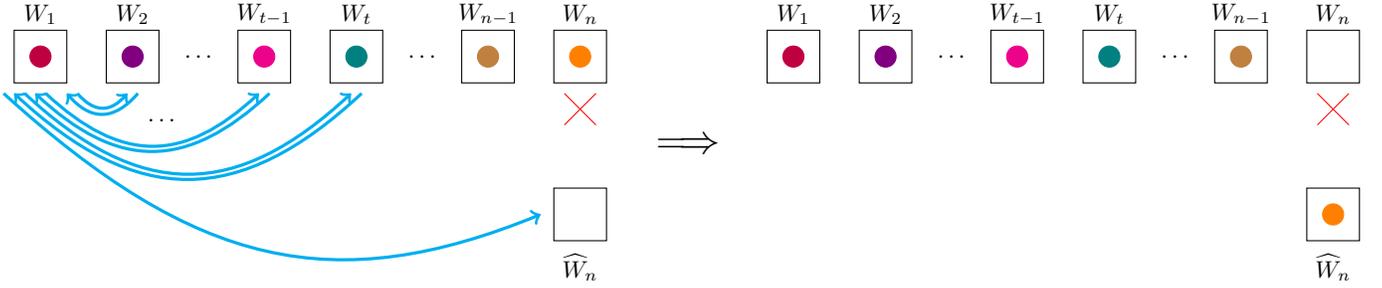

In this paper, we consider only the distributed erasure correction in quantum MDS codes.
This is due to two reasons.
The information-theoretic structure of quantum MDS codes is already well studied in the literature~\cite{huber20,grassl22}.
Secondly, the quantum MDS codes give the lowest possible storage overhead in distributed quantum storage (for suitable dimensions of the quantum state stored).
Consider a distributed quantum storage system with $n$ nodes which needs to be resilient against erasure of any $n-t$ nodes, where $1<t<n\leq 2t$.
Let $F$ be the dimension of the quantum state encoded and stored in this system.
In the case when $F=Q^{2t-n}$, where $Q$ is some prime power, an $[[n,2t-n]]_Q$ quantum MDS code encoding the state and storing one ($Q$-dimensional) physical qudit in each node gives the lowest storage overhead.
See Section~\ref{ss:qmds} for the definition of quantum MDS codes and some of their properties.

The new node which replaces an erased node in a distributed quantum storage is called the \textit{replacement node}.
At the end of erasure correction, the $n-1$ non-erased nodes and the replacement node together should contain the original encoded state.
We observe that quantum communication between the replacement node and any $t$ non-erased nodes is necessary and sufficient for erasure correction in an $[[n,2t-n]]_Q$ quantum MDS code (see Lemma~\ref{lm:d_geq_t} and \ref{lm:t_is_suff}).
As the main result of this paper, we find the entanglement cost for erasure correction happening over a quantum network of star topology containing any $t$ non-erased nodes and the replacement node.
The non-erased nodes in this quantum network are called \textit{helper nodes}.
We define the entanglement cost as the minimum number of $Q$-dimensional qudits which need to be communicated in the network for successful erasure correction.
This definition of entanglement cost is similar to the \textit{total graph-associated entanglement cost} defined in \cite[Section~II-B]{yamasaki17}.

\subsection{Contributions}

We obtain the entanglement cost of erasure correction in an $[[n,2t-n]]_Q$ quantum MDS code as follows.
The quantum network (with star topology) used for the erasure correction contains the replacement node and any $t$ helper nodes.
Any classical communication among the replacement node and the $n-1$ non-erased nodes is allowed.
\begin{itemize}
\item If the replacement node is the hub in the quantum network, the entanglement cost is $2t$ qudits.
\item If one of the $t$ helper nodes is the hub in the quantum network, the entanglement cost is $2t-1$ qudits.
\end{itemize}
First, we derive the lower bound on entanglement cost for each of the two star topologies.
When the replacement node is the hub, the lower bound can be simply achieved by transferring qudits from the $t$ helper nodes to the hub, performing operations locally and sending back the $t$ qudits.
Similarly, when one of the helper nodes is the hub, the lower bound can be achieved by transferring qudits from the $t-1$ other non-erased nodes to the hub, performing operations locally, sending back the downloaded $t-1$ qudits and sending the recovered qudit to the replacement node.
We refer to such erasure correction protocols as \textit{download-and-return} protocols (see Fig.~\ref{fig:network_diag_1} and Fig.~\ref{fig:network_diag_2}).

Note that our results also apply to the AME states, by taking the specific case of $n=2t$.
The techniques used in this paper should also be helpful in studying the entanglement cost of erasure correction for more general network topologies and code families.

\subsection{Approach}

In the work by Yamasaki et al~\cite[Theorem~7]{yamasaki19}, a lower bound on the entanglement cost of encoding of a quantum code over a tree network is derived.
They model the distribution of the encoded physical qudits over a network as an LOCC protocol using a resource state having maximally entangled states.
Then, by applying the monotonicity of Schmidt rank under LOCC, they give a lower bound on the number of EPR pairs needed.
They also show that the lower bound on the entanglement cost is achievable using a specific algorithm.

Here, we follow a similar approach for deriving the lower bound on the entanglement cost of quantum erasure correction.
Quantum erasure correction is formulated as a circuit with an LOCC protocol acting on a combination of the encoded state and the resource state.
Then, by applying the monotonicity property of Schmidt rank under LOCC, we obtain the lower bounds for two different star networks.
There is a key difference in our approach for deriving the lower bound compared to that of Yamasaki et al.
We assume that a suitable unitary operator is applied and then inverted on the nodes not directly involved in the LOCC protocol.
Such an assumption gives us a better bound than applying the Schmidt rank monotonicity with just the LOCC protocol.

\subsection{Organization}
In Section~\ref{s:bg}, we give the necessary background on LOCC protocols and quantum MDS codes.
We also state and prove some properties needed to build our model.
In Section~\ref{s:eras_corr}, we explain our problem setting in detail.
Here, we model the quantum erasure correction as a circuit using an LOCC protocol.
In Section~\ref{s:ent_cost}, we derive lower bounds on the entanglement cost needed in two different star network topologies.
We also show that these bounds are optimal.
In Section~\ref{s:conc}, we summarize the paper and discuss the future directions of our work.

\section{Background}
\label{s:bg}

\subsection{Notation}

A finite-dimensional Hilbert space of dimension $M$ is denoted as $\Hs_M$.
We denote the set of real numbers as $\mathbb{R}$ and the set of complex numbers as $\C$.
For any positive integer $m$, we write $[m]:=\{1,2,\hdots,m\}$ and $\A_m=\{0,1,\hdots,m-1\}$.
In this paper, we mostly use $[m]$ to index elements 1 to $m$, and $\A_m$ to denote the set of computational basis states in $\Hs_m$.
For any two positive integers $m_1$ and $m_2$ such that $m_1\leq m_2$, we write $[m_1,m_2]=\{m_1,m_1+1,\hdots,m_2\}$.

For a vector $\ul{x}=(x_1,x_2,\hdots,x_\ell)\in\A_Q^\ell$, the computational basis state $\ket{x_1}\ket{x_2}\hdots\ket{x_\ell}\in\Hs_Q^\ell$ is also denoted as $\ket{\ul{x}}$.
Let $P_1$, $P_2$, $\hdots$~, $P_m$ be $m$ different parties.
For any subset $J=\{j_1,j_2,\hdots,j_\ell\}\subseteq[m]$, the set of parties $\{P_{j_1}$, $P_{j_2}$, $\hdots$~, $P_{j_\ell}\}$ is denoted as $P_J$.
For any party $P$, we denote the identity operator acting on the subsystem held by $P$ as $\idmtx_P$.

Let $\ket{\phi}_{AB}\in\Hs_{M_A}\otimes\Hs_{M_B}$ be a pure state jointly held by two parties $A$ and $B$.
Here $M_A$ (or $M_B$) is the dimension of the subsystem held by $A$ (or $B$). 
The state of the subsystem held by party $A$ is given by the density matrix
\begin{equation}
\rho(A)_\phi:=\text{Tr}_B\ket{\phi}\bra{\phi}_{\,AB}.
\end{equation}
Let the Schmidt decomposition of $\ket{\phi}$ between $A$ and $B$ be
\begin{equation}
\ket{\phi}_{AB}=\sum_{i=1}^{N}\lambda_i\ket{a_i}_A\ket{b_i}_B.
\label{eq:schmidt_decomp}
\end{equation}
where $N=\min\{M_A,M_B\}$, $\lambda_i\geq 0$ for all $i\in[N]$ and $\{\ket{a_i}\}_{i=1}^N$ (or $\{\ket{b_i}\}_{i=1}^N$) gives a set of orthonormal pure states in $\Hs_{M_A}$ (or $\Hs_{M_B}$).
For the state $\ket{\phi}$ with Schmidt decomposition in Eq.~\eqref{eq:schmidt_decomp}, the subsystem held by $A$ is given by the density operator
\begin{eqnarray}
\rho(A)_\phi=\sum_{i=1}^N\lambda_i^2\ket{a_i}\bra{a_i}.
\end{eqnarray}
We denote the vector of Schmidt coefficients as \begin{eqnarray}
\ul{\lambda}(A)_\phi:=(\lambda_1,\lambda_2,\hdots,\lambda_N).
\end{eqnarray}
Note that $\ul{\lambda}(A)_\phi=\ul{\lambda}(B)_\phi$ and the entries in $\ul{\lambda}(A)_\phi$ satisfy
\begin{equation}
\sum_{i=1}^{N}\lambda_i^2=1.
\end{equation}

The Schmidt rank of the state $\ket{\phi}_{AB}$ for the subsystem held by $A$ is defined as the total number of positive entries in $\ul{\lambda}(A)_\phi$.
It is denoted by $\sr(A)_\phi$.
The von Neumann entropy of $\ket{\phi}$ for the subsystem held by $A$ is defined as
\begin{eqnarray}
\mathbf{S}(A)_\phi:=\sum_{i=1}^{N}\lambda_i^2\,\log\frac{1}{\lambda_i^2}\,.
\end{eqnarray}

\subsection{LOCC protocols}

We refer to a protocol of local operations and classical communication involving $m$ parties $P_1$, $P_2$, $\hdots$, $P_m$ as an LOCC$(P_1,P_2,\hdots,P_m)$ protocol or simply as an LOCC protocol.
In this article, we consider only deterministic LOCC protocols and not stochastic LOCC protocols.
In other words, if an LOCC protocol $L$ is said to transform the state $\ket{\psi_\text{in}}$ to $\ket{\psi_\text{out}}$, the probability of obtaining the state $\ket{\psi_\text{out}}$ from the input state $\ket{\psi_\text{in}}$ is 1.

We assume that each party in an LOCC protocol has an unlimited number of ancilla qudits.
This means any local CPTP operator can be seen as a unitary operator due to Stinespring dilation~\cite[Theorem~21.2]{bertlmann23}.
Similarly, any local measurement can be seen as a projective measurement~\cite[Section~2.2.8]{nielsen10}.

Quantum teleportation is one of the well-known two-party LOCC protocols.
This protocol shows how to use an entangled state as a resource state to perform quantum communication.
Bennett et al~\cite{bennett93} have shown that an arbitrary quantum state of dimension $M$ can be communicated (in either direction) between two parties $A$ and $B$ with LOCC if they already share the maximally entangled state
\begin{equation}
\ket{M^+}_{A,B}:=\frac{1}{\sqrt{M}}\,\sum_{i=0}^{M-1}\ket{i}_A\ket{i}_B
\,\,\in\,\,\Hs_M\otimes\Hs_M.
\end{equation}
This idea can also be extended to communicating a sequence of quantum states between $A$ and $B$ as follows.
\begin{lemma}[Quantum teleportation]
\label{lm:q_tele}
A sequence of arbitrary quantum states of dimensions ${M_1,M_2,\hdots,M_\ell}$ can be communicated (in either direction) between parties $A$ and $B$ using an LOCC protocol if the two parties already share the maximally entangled state $\ket{M^+}_{A,B}$ where $M\geq M_1 M_2\hdots M_\ell$.
\end{lemma}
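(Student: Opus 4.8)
The plan is to reduce the claim to the single-state teleportation result of Bennett et al.\ recalled just above. The key observation is that a sequence of $\ell$ states of dimensions $M_1,M_2,\hdots,M_\ell$ is nothing but a single state living in the tensor product space $\Hs_{M_1}\otimes\Hs_{M_2}\otimes\hdots\otimes\Hs_{M_\ell}$, which has dimension $D:=M_1 M_2\hdots M_\ell$. Since this space is isomorphic to $\Hs_D$, communicating the sequence is the same as communicating a single $D$-dimensional state, and it suffices to show that the shared $\ket{M^+}_{A,B}$ can accomplish this whenever $M\geq D$.

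First I would fix an embedding of $\Hs_D$ into $\Hs_M$. Because $M\geq D$ by hypothesis, I can identify the computational basis $\{\ket{i}:i\in\A_D\}$ of $\Hs_D$ with the first $D$ basis vectors of $\Hs_M$, so that any state in $\Hs_D$ is regarded as a state of $\Hs_M$ supported only on $\{\ket{i}:i\in\A_D\}$. Under this identification the sequence to be transmitted becomes an ordinary (if special) $M$-dimensional state. I would then invoke the single-state protocol: since $A$ and $B$ share $\ket{M^+}_{A,B}$, Bennett et al.'s protocol lets them teleport an \emph{arbitrary} $M$-dimensional state in either direction using only local operations and classical communication. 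Applying it to the embedded state transfers it exactly, after which the receiver undoes the embedding and reads off the factors $\Hs_{M_1},\hdots,\Hs_{M_\ell}$ to recover the original sequence. (Alternatively, one could first convert $\ket{M^+}_{A,B}$ to $\ket{D^+}_{A,B}$ by LOCC, which is possible since the flat Schmidt vector of the former is majorized by that of the latter, and then teleport directly; the embedding route is cleaner because it avoids invoking Nielsen's theorem.)

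The one point needing care is that teleportation must preserve any entanglement the transmitted system shares with a reference, which matters in our later application where these states are parts of a larger encoded state. This is guaranteed because the single-state protocol acts as the identity channel on the input register; equivalently, one runs the argument on one half of a maximally entangled state between the input and a purifying reference and verifies the output is left unchanged. The restriction to the $D$-dimensional subspace is harmless, since the unused basis vectors of $\Hs_M$ never carry amplitude. Hence the shared $\ket{M^+}_{A,B}$ suffices to communicate the whole sequence in either direction, proving the lemma. I expect the faithfulness-on-a-reference step to be the only substantive subtlety; the dimensional bookkeeping is routine.
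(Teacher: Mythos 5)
The paper gives no proof of this lemma at all---it is stated as an immediate extension of the Bennett et al.\ result---so there is nothing to match your argument against line by line. Your reduction is essentially right in spirit, and the point you flag about preserving entanglement with a reference (teleportation implements the identity channel) is correct and worth keeping. However, your \emph{primary} route---tensoring the $\ell$ states into a single $D$-dimensional state with $D=M_1M_2\hdots M_\ell$, embedding $\Hs_D$ into $\Hs_M$, and running one batch teleportation---only establishes the lemma under the weakest reading, in which all $\ell$ systems are simultaneously available at one party and the whole collection travels in a single direction. That is not how the lemma is used: in the proof of Theorem~\ref{th:ec_locc}, the quantity $\beta_h=\eta_1\eta_2\hdots\eta_\ell$ aggregates \emph{all} systems sent over channel $h$ during an adaptive protocol, produced at different stages and travelling in \emph{both} directions (each edge of the download-and-return protocols in Fig.~\ref{fig:network_diag_1} and Fig.~\ref{fig:network_diag_2} carries a qudit out and a qudit back). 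You cannot gather into one register states that are held at different times and at opposite ends of the channel, so the batch argument does not cover the application.

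The fix is exactly the alternative you relegate to a parenthesis, and it should be the main argument: by Lemma~\ref{lm:nielsen_theorem} the parties first LOCC-convert $\ket{M^+}_{A,B}$ to $\ket{D^+}_{A,B}$ (your majorization direction is correct), and $\ket{D^+}_{A,B}$ factorizes, after relabelling the computational basis of $\A_D$ as $\A_{M_1}\times\A_{M_2}\times\hdots\times\A_{M_\ell}$, as $\bigotimes_{i=1}^{\ell}\ket{M_i^+}_{A,B}$. The $i$th factor is then consumed to teleport the $i$th state whenever it becomes available and in whichever direction is required; since a maximally entangled state is symmetric between $A$ and $B$, each factor supports either direction independently. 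With that reorganization your proof is complete and does everything the paper needs.
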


The following lemma shows the monotonicity of Schmidt rank under LOCC.
Schmidt rank of a party either decreases or remains the same after an LOCC protocol.
\begin{lemma}\cite[Lemma~1]{lo01}
Let $\ket{\psi_\text{in}}_{P_1P_2\hdots P_m}$ and $\ket{\psi_\text{out}}_{P_1P_2\hdots P_m}$ be two $m$-partite pure states.
If an LOCC$(P_1,P_2,\hdots P_m)$ protocol $L$ transforms the state $\ket{\psi_\text{in}}$ to $\ket{\psi_\text{out}}$, then
\begin{equation}
\sr(P_J)_{\psi_\text{out}}\leq\,
\sr(P_J)_{\psi_\text{in}}
\end{equation}
for any $J\subseteq[m]$.
\label{lm:monotone_sr}
\end{lemma}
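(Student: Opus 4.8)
The plan is to reduce the multipartite statement to a single bipartite cut and then track how the Schmidt rank across that cut behaves under each elementary step of an LOCC protocol. \emph{First}, I would fix $J\subseteq[m]$ and group the $m$ parties into two collective parties: $A:=P_J$ and $B:=P_{J^c}$, where $J^c=[m]\setminus J$. Every local operation performed by some $P_i$ acts entirely within $A$ (if $i\in J$) or entirely within $B$ (if $i\in J^c$), and every classical message exchanged among the $P_i$ can be viewed as classical communication between $A$ and $B$. Hence the LOCC$(P_1,\dots,P_m)$ protocol $L$ is, in particular, an LOCC$(A,B)$ protocol, and since $\sr(P_J)_\psi=\rank\rho(A)_\psi$, it suffices to prove the two-party statement $\sr(A)_{\psi_\text{out}}\le\sr(A)_{\psi_\text{in}}$.

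\emph{Second}, I would put the protocol into a canonical form. Using the standing assumption that each party holds unlimited ancillas, Stinespring dilation lets me replace every local CPTP map by a local unitary on an enlarged space, and every local measurement by a local projective measurement. Thus $L$ becomes a finite, classically-controlled sequence of three kinds of moves: (i) a local unitary on $A$ or on $B$; (ii) appending a fresh local ancilla prepared in a fixed pure state (or, dually, removing a disentangled one); and (iii) a local projective measurement on $A$ or on $B$ whose outcome is broadcast classically and used to select subsequent moves. Because $L$ is \emph{deterministic}, every branch of measurement outcomes occurring with positive probability yields the same final state $\ket{\psi_\text{out}}$ up to local unitaries (the outcome-dependent corrections). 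I would therefore fix one such branch and bound the Schmidt rank of its conditional state.

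\emph{Third}, I would compute the effect of each move on the Schmidt rank along the chosen branch using the coefficient-matrix representation: writing $\ket{\psi}=\sum_{i,j}C_{ij}\ket{i}_A\ket{j}_B$, one has $\sr(A)_\psi=\rank(C)$. A local unitary on $A$ sends $C\mapsto UC$ and one on $B$ sends $C\mapsto CV^{\mathsf T}$; both preserve rank since $U,V$ are invertible. Appending a local ancilla in a pure state tensors a rank-one factor onto one side and leaves $\rank(C)$ unchanged, and removing a disentangled ancilla does likewise. A projective-measurement outcome applies a projector $\Pi$ on the measured side, sending $C\mapsto\Pi C$ (or $C\mapsto C\Pi^{\mathsf T}$), and $\rank(\Pi C)\le\rank(C)$ because multiplication by a matrix cannot increase rank. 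Composing these steps along the branch gives a conditional state of Schmidt rank at most $\rank(C_\text{in})=\sr(A)_{\psi_\text{in}}$; the final local unitaries relating this conditional state to $\ket{\psi_\text{out}}$ preserve rank, so $\sr(A)_{\psi_\text{out}}\le\sr(A)_{\psi_\text{in}}$, which is the claim for $J$.

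\emph{The main obstacle} I anticipate is the bookkeeping around determinism and branching: I must justify rigorously that a deterministic pure-to-pure LOCC transformation forces each positive-probability measurement branch to reproduce $\ket{\psi_\text{out}}$ up to local unitaries, so that the single-branch rank bound transfers to the actual output. A secondary point requiring care is verifying that classical communication alone never alters the quantum state---it only conditions which local move is applied next---and that the grouping of parties into $A$ and $B$ grants no operations beyond genuine bipartite LOCC. Everything else reduces to the elementary fact that rank is non-increasing under matrix multiplication and invariant under multiplication by invertible matrices.
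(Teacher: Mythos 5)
The paper does not give its own proof of this lemma; it is quoted directly from Lo and Popescu \cite{lo01}. Your argument is correct and is essentially the standard proof of that cited result: reduce to the bipartite cut $P_J$ versus $P_{[m]\setminus J}$, represent the state by its coefficient matrix, and observe that local unitaries (invertible multiplication) preserve rank while measurement projectors cannot increase it, with determinism ensuring every positive-probability branch ends in $\ket{\psi_\text{out}}$.
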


The following lemma says that an $m$-partite LOCC protocol can be replaced with an $(m-1)$-partite LOCC protocol if the Schmidt coefficients of the $m$th party in the input and output states remain the same and are all equal. 
\begin{lemma}
\label{lm:locc_ent_unchanged}
Let $\ket{\psi_\text{in}}_{P_1P_2\hdots P_m}$ and $\ket{\psi_\text{out}}_{P_1P_2\hdots P_m}$ be two $m$-partite pure states such that
\begin{equation}
\ul{\lambda}(P_m)_{\psi_\text{out}}=\ul{\lambda}(P_m)_{\psi_\text{in}}
=\left(\frac{1}{\sqrt{Q}},\frac{1}{\sqrt{Q}},\hdots,\frac{1}{\sqrt{Q}}\right)\in\mathbb{R}^Q.
\end{equation}
An LOCC$(P_1,P_2,\hdots,P_m)$ protocol $L_1$ which transfroms $\ket{\psi_\text{in}}$ to $\ket{\psi_\text{out}}$ exists if and only if an LOCC$(P_1,P_2,\hdots,P_{m-1})$ protocol $L_2$ exists such that $\ket{\psi_\text{out}}$ can be obtained by
\begin{enumerate}
\item applying $(L_2)_{P_1P_2\hdots P_{m-1}}\,\otimes\,\idmtx_{P_m}$ on $\ket{\psi_\text{in}}$ to obtain the state $\ket{\psi_\text{mid}}$ and
\item applying $\idmtx_{P_1P_2\hdots P_{m-1}}\,\otimes\,G(\ul{v})_{P_m}$ on $\ket{\psi_\text{mid}}$.
Here $G(\ul{v})$ is a unitary operator dependent on the local measurements $\ul{v}$ obtained from the execution of $L_2$ in the previous step.
\end{enumerate}
\end{lemma}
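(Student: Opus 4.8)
The plan is to prove the two implications separately, with the reverse direction being routine assembly and the forward direction carrying all the content. For the ($\Leftarrow$) direction I would simply build $L_1$ out of the given data: run $(L_2)_{P_1\ldots P_{m-1}}\otimes\idmtx_{P_m}$ with $P_m$ idle, have the parties $P_{[m-1]}$ broadcast their measurement record $\ul{v}$ to $P_m$ (classical communication is free in an LOCC protocol), and let $P_m$ apply $G(\ul{v})$. This is a legitimate LOCC$(P_1,\ldots,P_m)$ protocol and, by the assumed two-step property, it deterministically sends $\ket{\psi_\text{in}}$ to $\ket{\psi_\text{out}}$; the hypothesis on $\ul{\lambda}(P_m)$ is not even needed here.

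For the ($\Rightarrow$) direction, suppose $L_1$ transforms $\ket{\psi_\text{in}}$ into $\ket{\psi_\text{out}}$, and consider the bipartition $P_m\,|\,P_{[m-1]}$. The first step is to pin down the entanglement across this cut at every stage. Applying Lemma~\ref{lm:monotone_sr} to the suffix of $L_1$ starting from any positively reachable intermediate state $\ket{\psi_t}$ (which, by determinism, still reaches $\ket{\psi_\text{out}}$) gives $\sr(P_m)_{\psi_\text{out}}\le\sr(P_m)_{\psi_t}$, while monotonicity along the prefix gives $\sr(P_m)_{\psi_t}\le\sr(P_m)_{\psi_\text{in}}$. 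Since both endpoints have Schmidt rank $Q$, the rank is pinned to $Q$ in every branch. To upgrade this to flatness I would use that the entanglement entropy $\mathbf{S}(P_m)$ cannot increase under the deterministic suffix $\ket{\psi_t}\to\ket{\psi_\text{out}}$: as $\mathbf{S}(P_m)_{\psi_t}\le\log Q$ for a rank-$Q$ state and $\mathbf{S}(P_m)_{\psi_\text{out}}=\log Q$, we get $\mathbf{S}(P_m)_{\psi_t}=\log Q$, which forces $\ul{\lambda}(P_m)_{\psi_t}=(1/\sqrt{Q},\ldots,1/\sqrt{Q})$. Thus every reachable intermediate state is maximally entangled across the cut.

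The key consequence is that each operation $P_m$ performs acts on the global state as a unitary on $P_m$ alone, leaving the state of $P_{[m-1]}$ untouched. For a $P_m$-unitary this is immediate. For a $P_m$-measurement $\{\Pi_k\}$ on a maximally entangled $\ket{\psi_t}=\tfrac{1}{\sqrt{Q}}\sum_i\ket{r_i}\ket{s_i}$, the requirement that the rank-$Q$ flat structure survive in every outcome branch forces $\{\Pi_k\ket{s_i}\}_i$ to be orthogonal and of equal norm, so $\Pi_k$ restricted to $\mathrm{span}\{\ket{s_i}\}$ is a scaled isometry; completing it to a unitary $U_k$ with the free ancillas of $P_m$, the outcome-$k$ post-state equals $(\idmtx_{P_{[m-1]}}\otimes U_k)\ket{\psi_t}$. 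I would then strip $P_m$'s measurements one at a time, starting from the last: fixing a reference outcome $k_0$ and commuting the $P_m$-local factors past the disjoint subsequent $P_{[m-1]}$-operations shows that running those same $P_{[m-1]}$-operations directly on $\ket{\psi_t}$ reproduces $\ket{\psi_\text{out}}$ up to a single residual unitary on $P_m$. Iterating removes every $P_m$-measurement, leaves behind an LOCC$(P_1,\ldots,P_{m-1})$ protocol $L_2$ with outcomes $\ul{v}$, and folds all deferred $P_m$-unitaries into one correction $G(\ul{v})$ — exactly the two-step form in the statement.

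I expect the main obstacle to be precisely the bookkeeping of this deferral: in $L_1$ the later operations of $P_{[m-1]}$ may be conditioned on $P_m$'s measurement outcomes, yet in the target form $P_m$ never measures, so those conditionings must be eliminated without changing the output. What makes this possible — and what I would have to argue carefully — is that a $P_m$-measurement alters the global state only through a deferred $P_m$-unitary and leaves the $P_{[m-1]}$-side invariant, so collapsing onto a reference branch $k_0$ cannot change what $P_{[m-1]}$ sees, the only cost being an outcome-dependent unitary on $P_m$, which is harmless precisely because $\ul{\lambda}(P_m)$ is flat and maximal. A secondary point to nail down rigorously is the intermediate-flatness claim above, which relies on entropy (or majorization) monotonicity under LOCC beyond the bare Schmidt-rank monotonicity of Lemma~\ref{lm:monotone_sr}.
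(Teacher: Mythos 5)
Your proposal is correct and its architecture matches the paper's: establish that the Schmidt spectrum of $P_m$ stays flat at every intermediate stage, conclude that each local measurement by $P_m$ acts branch-wise as a unitary on $P_m$ alone, and defer all of $P_m$'s operations into the single final correction $G(\ul{v})$. The differences are confined to how two sub-steps are executed. For flatness, the paper runs a single majorization chain through Nielsen's theorem (its Lemma~\ref{lm:nielsen_theorem}) rather than your rank-plus-entropy argument; both work, and both quietly apply a deterministic-transformation monotone to the probabilistic \emph{prefix} of the protocol, which strictly speaking needs the branch-wise (SLOCC) version of the monotone. For orthonormality of the post-measurement branch vectors, the paper uses the entropy bound of \cite[Theorem~11.10]{nielsen10} with its equality condition, which is the same mechanism as your "equal norm and orthogonal" claim. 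The only substantive divergence is the stripping step: the paper preserves the classical outcome record by having $P_1$ locally prepare and measure $\sum_{i}\sqrt{p(i)}\ket{i}$ to simulate $P_m$'s outcome distribution, so any later $P_{[m-1]}$-operations conditioned on that outcome are literally unchanged; you instead collapse onto a fixed reference branch $k_0$. Your version is salvageable --- the $k_0$-branch of a deterministic protocol is itself a deterministic protocol reaching $\ket{\psi_\text{out}}$, so committing to it is legitimate --- but your stated justification ("collapsing onto $k_0$ cannot change what $P_{[m-1]}$ sees") addresses only the quantum state of $P_{[m-1]}$ and ignores the classical message, which is precisely the thing the later conditioning acts on; that sentence is where your write-up would need to be tightened to match the rigor of the paper's coin-flip device.
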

\begin{proof}
See Appendix~\ref{ap:locc}.
\end{proof}

\subsection{Quantum MDS codes}
\label{ss:qmds}

An $((n,K))_Q$ quantum code is a subspace of $\Hs_Q^{\otimes n}$ with dimension $K$ where $1\leq K\leq Q^n$.
Each state in the code is defined over $n$ physical qudits, each of dimension $Q$.
When $K=Q^k$, we can also denote this code as an $[[n,k]]_Q$ quantum code.
We refer to such a code as encoding $k$ logical qudits into $n$ physical qudits.

An encoding operator for an $[[n,k]]_Q$ quantum code can be defined as a unitary operator $E$ which gives an encoding map
\begin{equation}
\ket{\ul{s}}\mapsto E\left(\ket{\ul{s}}\ket{0}^{\otimes(n-k)}\right)
\label{eq:enc_basis_state}
\end{equation}
for all $\ul{s}\in\A_Q^k$.
Here $\ket{\ul{s}}$ is a computational basis state of $\Hs_Q^{\otimes k}$ and $\ket{0}$ is an ancilla qudit of dimension $Q$.
Note that the same code can have different encoding operators depending on the mapping chosen.

\begin{definition}[Distance]\cite[Section III-A]{scott04}
Consider an $((n,K))_Q$ quantum code with $K>1$ having an orthonormal basis of code states $\{\ket{\phi_i}\}_{\ell=1}^{K}$.
The code is said to have distance $D$ if for any operator $\Gamma$ acting on $D-1$ or less physical qudits,
\begin{equation}
\bra{\phi_i}\Gamma\ket{\phi_j}=c(\Gamma)\,\delta_{i,j}
\label{eq:k_great_1_distance_cond}
\end{equation}
where $c(\Gamma)$ is a function independent of $i$ and $j$.
Here $\delta_{i,j}$ is the function whose value is 1 if $i=j$ and 0 otherwise.

An $((n,1))_Q$ quantum code with the code state $\ket{\phi}$ is said to have distance $D$ if for any operator $\Gamma$ acting on $D-1$ or less physical qudits,
\begin{equation}
\bra{\phi}\Gamma\ket{\phi}=Q^{-n}\Tr(\Gamma).
\label{eq:k_equals_1_distance_cond}
\end{equation}
Here $\Tr(\Gamma)$ indicates the trace of the operator $\Gamma$.
\end{definition}

Note that the definition of distance for quantum codes with $K>1$ is different from those with $K=1$.
For $K=1$, the condition in Eq.~\eqref{eq:k_great_1_distance_cond} becomes trivially true.
For a quantum code with distance $D$, erasure of any $D-1$ or fewer physical qudits can be corrected.

From the quantum Singleton bound~\cite[Theorem~2]{rains99}, we know that
\begin{equation}
2(D-1)\leq n-\log_Q K.
\label{eq:q_singleton_bound}
\end{equation}
A quantum code having the maximum possible distance given by this bound is called a quantum maximum distance separable code (also called as quantum MDS code or QMDS code).
\begin{definition}[QMDS code]
\cite[Section~III]{rains99}
An $[[n,k]]_Q$ quantum code with its distance $D$ satisfying
\begin{equation}
2(D-1)=n-k
\label{eq:q_singleton_bound_equal}
\end{equation}
is called a quantum MDS code (or a QMDS code).
\end{definition}
The $[[5,1,3]]_2$ code given by \cite{laflamme96,bennett96} is a well-known example of quantum MDS codes.
Note that $[[n,2t-n]]_Q$ quantum MDS code for given $n$ and $t$ does not always exist for qubits \textit{i.e.} with $Q=2$.
Hence, we consider codes using the more general $Q$-dimensional qudits where $Q\geq 2$.
When $Q$ is a power of 2 such that $Q=2^w$, then each node can be considered to be storing $w$ qubits.
For more details on quantum MDS codes, their properties and constructions, see~\cite{rains99,grassl04,grassl15,huber20,alsina21}.

Let $t$ be an integer such that $1<t<n\leq 2t$.
Consider an $[[n,2t-n]]_Q$ quantum MDS code.
This code has distance
\begin{equation}
D=n-t+1
\end{equation}
and can correct the erasure of any $n-t$ or fewer qudits.
Here we assume that the $n$ physical qudits of the $[[n,2t-n]]_Q$ quantum MDS code are stored in $n$ different nodes.
The node containing the $i$th physical qudit is denoted as $W_i$.
We denote the set of nodes given by a subset $J\subseteq[n]$ as $W_J$.

It has been shown that any $n-t$ physical qudits in the encoded state of the quantum MDS code are in the maximally mixed state~\cite{huber20,grassl22}.
The following lemma uses this fact to provide an encoding circuit for a given encoding operator $E$ of the quantum MDS code.

\begin{lemma}
\label{lm:unit_qmds}
Consider an $[[n,2t-n]]_Q$ quantum MDS code with an encoding operator $E$.
For any $T\subseteq[n]$ of size $|T|=t$, there exists a unitary operator $U_T^{(E)}$ over $\Hs_Q^{\otimes t}$ such that the encoding is given by
\begin{flalign}
&\,\,E\left(\ket{\ul{s}}\ket{0}^{\otimes(2n-2t)}\right)&\nonumber
\\&\,\,\,\,\,\,=\frac{1}{\sqrt{Q^{n-t}}}\sum_{\substack{r_1,r_2,\hdots,\\r_{n-t}\,\in\,\A_Q}}
\!\!U_T^{(E)}\!\left(\ket{\ul{s}}\ket{r_1}\!\ket{r_2}\hdots\ket{r_{n-t}}\right)_{W_T}&
\nonumber
\\[-0.5cm]&\hspace{4.2cm}\big(\ket{r_1}\!\ket{r_2}\hdots\ket{r_{n-t}}\big)_{W_{[n]\setminus{T}}}\!\!\!\!\!\!\!&
\label{eq:unit_qmds}
\end{flalign}
for all $\ul{s}\in\A_Q^{2t-n}$.
\end{lemma}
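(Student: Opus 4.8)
The plan is to construct the unitary $U_T^{(E)}$ explicitly from the encoding operator $E$ by exploiting the maximal mixedness of any $n-t$ qudits in the code. The key structural fact to invoke is that in an $[[n,2t-n]]_Q$ quantum MDS code, any set of $n-t$ physical qudits is in the maximally mixed state, and equivalently (by the quantum MDS / complementary-support property) the code can be ``spread'' so that the complement $W_{[n]\setminus T}$ of any $t$-subset $T$ carries a fully entangled purifying register. I would first fix the encoding $\ket{\ul{s}}\mapsto E\big(\ket{\ul{s}}\ket{0}^{\otimes(2n-2t)}\big)$ for an arbitrary basis state $\ket{\ul{s}}$ with $\ul{s}\in\A_Q^{2t-n}$, and write the encoded state over the full register $W_{[n]}$.

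\emph{Main construction step.} The heart of the argument is to show that, after tracing out $W_T$, the reduced state on $W_{[n]\setminus T}$ is maximally mixed on a $Q^{n-t}$-dimensional space, so that the Schmidt decomposition of the encoded state across the cut $W_T \mid W_{[n]\setminus T}$ has exactly $Q^{n-t}$ equal Schmidt coefficients $1/\sqrt{Q^{n-t}}$. Concretely, I would argue that because erasing the $n-t$ qudits in $W_{[n]\setminus T}$ is correctable (the code has distance $D=n-t+1$), the logical information is recoverable from $W_T$ alone; combined with the maximal-mixedness statement cited just before the lemma, this pins down the Schmidt spectrum across the cut to be flat with multiplicity $Q^{n-t}$. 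The Schmidt decomposition then reads
\begin{equation}
E\left(\ket{\ul{s}}\ket{0}^{\otimes(2n-2t)}\right)
=\frac{1}{\sqrt{Q^{n-t}}}\sum_{\ul{r}\in\A_Q^{n-t}}
\ket{a_{\ul{s},\ul{r}}}_{W_T}\,\ket{b_{\ul{r}}}_{W_{[n]\setminus T}},
\end{equation}
where $\{\ket{b_{\ul{r}}}\}$ is an orthonormal basis of $\Hs_Q^{\otimes(n-t)}$ held by $W_{[n]\setminus T}$ and $\{\ket{a_{\ul{s},\ul{r}}}\}_{\ul{r}}$ is, for each fixed $\ul{s}$, an orthonormal set in $W_T$. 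The subtle point I would need to nail down is that the $W_{[n]\setminus T}$ basis $\{\ket{b_{\ul{r}}}\}$ can be chosen \emph{independently of $\ul{s}$}: this follows because the reduced states on $W_{[n]\setminus T}$ are all identical (maximally mixed) and the distinct logical inputs $\ul{s}$ are perfectly distinguishable from $W_T$, so the correctability of the erasure lets me align the Schmidt bases across different $\ul{s}$.

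\emph{Defining the unitary.} Once the bases are aligned, I would fix the computational labeling $\ket{b_{\ul{r}}}=\ket{r_1}\ket{r_2}\cdots\ket{r_{n-t}}_{W_{[n]\setminus T}}$ by an appropriate local unitary on $W_{[n]\setminus T}$, absorbed into the definition of $E$ or handled by a relabeling, and then \emph{define} $U_T^{(E)}$ on $\Hs_Q^{\otimes t}$ by its action on the product basis via
\begin{equation}
U_T^{(E)}\left(\ket{\ul{s}}\ket{r_1}\ket{r_2}\cdots\ket{r_{n-t}}\right):=\ket{a_{\ul{s},\ul{r}}},
\end{equation}
for $\ul{s}\in\A_Q^{2t-n}$ and $\ul{r}\in\A_Q^{n-t}$. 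Since $(\ul{s},\ul{r})$ ranges over an orthonormal product basis of $\Hs_Q^{\otimes t}$ (as $(2t-n)+(n-t)=t$) and the target vectors $\{\ket{a_{\ul{s},\ul{r}}}\}$ form an orthonormal set of the same cardinality $Q^t$, this assignment extends to a genuine unitary on $\Hs_Q^{\otimes t}$; substituting it back reproduces exactly Eq.~\eqref{eq:unit_qmds}.

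\emph{Expected obstacle.} The routine parts (dimension count, flatness of the Schmidt spectrum, orthonormality of $\{\ket{a_{\ul{s},\ul{r}}}\}$ for fixed $\ul{s}$) are straightforward. The genuinely delicate step is the $\ul{s}$-independence of the $W_{[n]\setminus T}$ Schmidt basis and the orthonormality of $\{\ket{a_{\ul{s},\ul{r}}}\}$ \emph{across different} $\ul{s}$, which together are needed for $U_T^{(E)}$ to be well-defined and unitary. I expect to prove both by invoking the distance/erasure-correction condition in its information-theoretic form: correctability of the erasure of $W_{[n]\setminus T}$ forces the encoded states for different $\ul{s}$ to be perfectly distinguishable by a measurement on $W_T$ while having identical reductions on $W_{[n]\setminus T}$, which is precisely the statement that the global states share a common $W_{[n]\setminus T}$-basis with mutually orthogonal $W_T$-partners.
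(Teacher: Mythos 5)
Your proposal matches the paper's own proof: both use the fact that any $n-t$ physical qudits of a QMDS code state are maximally mixed to obtain a flat Schmidt decomposition across the cut $W_T\mid W_{[n]\setminus T}$ with the computational basis on the complement, and then define $U_T^{(E)}$ by $\ket{\ul{s}}\ket{\ul{r}}\mapsto\ket{a_{\ul{s},\ul{r}}}$; you are in fact more explicit than the paper about the cross-$\ul{s}$ orthogonality needed for unitarity, which does follow from the distance condition (take $\Gamma=\ket{\ul{r}'}\bra{\ul{r}}$ on the $n-t=D-1$ qudits of $W_{[n]\setminus T}$) exactly as you anticipate. One small correction: no local unitary on $W_{[n]\setminus T}$ --- and certainly no modification of $E$, which would change the statement being proved --- is needed to make the Schmidt basis the computational basis, since the reduced state there is exactly $Q^{-(n-t)}\sum_{\ul{r}}\ket{\ul{r}}\bra{\ul{r}}$ and its full degeneracy means the computational basis is already a valid Schmidt basis.
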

\begin{proof}
See Appendix~\ref{ap:qmds}.
\end{proof}

We can also use the operator $U_T^{(E)}$ defined above to construct smaller quantum MDS codes.
\begin{lemma}
\label{lm:unit_p1_qmds}
Let $U_T^{(E)}$ be the unitary operator as defined in Lemma~\ref{lm:unit_qmds}.
The $[[t+1,t-1]]_Q$ quantum code with the encoding
\begin{flalign}
&\hspace{0.2cm}\ket{s_1}\ket{s_2}\hdots\ket{s_{t-1}}&\nonumber
\\*[0.05cm]&\hspace{1cm}\mapsto\,\frac{1}{\sqrt{Q}}
\sum_{r\in\A_Q}U_T^{(E)}\big(\!\ket{s_1}\ket{s_2}\hdots\ket{s_{t-1}}\ket{r}\!\big)
\,\ket{r}&\nonumber
\\*[-0.2cm]&\hspace{4.1cm}\forall (s_1,s_2,\hdots,s_{t-1})\in\A_Q^{t-1}\!\!\!\!\!&
\label{eq:ud_p1_qmds_enc}
\end{flalign}
is a quantum MDS code.
\end{lemma}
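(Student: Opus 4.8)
The plan is to show the new code has distance exactly $2$, which makes it a QMDS code. Its parameters satisfy $(t+1)-(t-1)=2$, so the quantum Singleton bound \eqref{eq:q_singleton_bound} already forces $D\leq 2$; it therefore suffices to prove $D\geq 2$, i.e.\ that the distance condition \eqref{eq:k_great_1_distance_cond} holds for every operator $\Gamma$ supported on a single physical qudit. A quick preliminary step, using only $(U_T^{(E)})^\dagger U_T^{(E)}=\idmtx$ and $\braket{r'|r}=\delta_{r,r'}$, shows that the states defined in \eqref{eq:ud_p1_qmds_enc} are orthonormal, so the construction is a genuine $((t+1,Q^{t-1}))_Q$ code. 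I would label its $t+1$ physical qudits so that qudits $1,\dots,t$ carry the output of $U_T^{(E)}$ (these correspond to the nodes $W_T$ of the original code) and qudit $t+1$ is the appended register, and then treat two cases according to the support of $\Gamma$.

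If $\Gamma$ acts on qudit $t+1$, a direct collapse of the double sum over $r,r'$ using unitarity of $U_T^{(E)}$ gives $\bra{\psi_{\ul s}}\Gamma\ket{\psi_{\ul s'}}=\delta_{\ul s,\ul s'}\,\Tr(\Gamma)/Q$, so this register is maximally mixed and decoupled from the logical information and the case is immediate. The substantive case is $\Gamma=\Gamma_a$ acting on a qudit $a\in[t]$. Setting $M:=(U_T^{(E)})^\dagger\,\Gamma_a\,U_T^{(E)}$, the same collapse reduces the matrix element to $\tfrac1Q\sum_r\bra{s_1\dots s_{t-1}\,r}M\ket{s'_1\dots s'_{t-1}\,r}$, and the goal is to show this equals $c\,\delta_{\ul s,\ul s'}$ for a constant $c$ independent of $\ul s,\ul s'$.

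The heart of the argument is to recognize this as a matrix element of the \emph{original} QMDS code, exploiting that both codes are built from the same $U_T^{(E)}$. Using the representation \eqref{eq:unit_qmds} from Lemma~\ref{lm:unit_qmds}, I would apply the original code's distance condition to an operator $\Gamma'$ supported on the node carrying qudit $a$ together with all but one of the external nodes $W_{[n]\setminus T}$: place $\Gamma_a$ on the former, and the transition operators $\ket{s_i}\bra{s'_i}$ on the latter for the ``middle'' logical components $i=2t-n+1,\dots,t-1$, leaving one external node untouched to supply the summed index $r$. This support has size exactly $n-t$, so it is erasure-correctable and \eqref{eq:k_great_1_distance_cond} applies; evaluating $\bra{\Phi_{\ul u}}\Gamma'\ket{\Phi_{\ul u'}}$ then reproduces $\tfrac1{Q^{n-t}}\sum_r\bra{s_1\dots s_{t-1}\,r}M\ket{s'_1\dots s'_{t-1}\,r}$, matching the new-code expression up to the factor $Q^{n-t-1}$.

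Finally I would pin down the constant. The distance condition gives $\bra{\Phi_{\ul u}}\Gamma'\ket{\Phi_{\ul u'}}=c(\Gamma')\,\delta_{\ul u,\ul u'}$, but $\delta_{\ul u,\ul u'}$ only matches the first $2t-n$ logical components; the remaining delta factors must come from $c(\Gamma')$. Here I use that any $n-t$ qudits of a QMDS code are maximally mixed, so the common reduced state on the support $S$ is $\idmtx/Q^{n-t}$ and $c(\Gamma')=\Tr(\Gamma'|_S)/Q^{n-t}$, whose transition-operator factors contribute $\prod_i\braket{s'_i|s_i}=\prod_i\delta_{s_i,s'_i}$. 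Multiplying, the two partial deltas combine into $\delta_{\ul s,\ul s'}$ over all $t-1$ components, yielding $\bra{\psi_{\ul s}}\Gamma_a\ket{\psi_{\ul s'}}=\delta_{\ul s,\ul s'}\,\Tr(\Gamma_a)/Q$; the AME edge case $n=2t$ is handled identically using the $K=1$ version \eqref{eq:k_equals_1_distance_cond} of the distance condition. This establishes $D\geq 2$, hence $D=2$ and the QMDS property. The main obstacle, and the genuinely new idea, is the choice of $\Gamma'$: encoding the extra logical components of the small code as transition operators on the original code's external nodes is precisely what upgrades the partial delta from the original distance condition into the full delta the claim requires.
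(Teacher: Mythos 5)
Your proof is correct, but it takes a genuinely different route from the paper's. The paper argues operationally: it writes the encoded state of the big code in two ways, via $U_T^{(E)}$ and via $U_B^{(E)}$ for $B=(T\setminus\{j\})\cup\{i_{n-t}\}$, measures the remaining $n-t-1$ external qudits in the computational basis to collapse onto the state in Eq.~\eqref{eq:ud_p1_qmds_enc}, and concludes that the logical information of the small code is recoverable from \emph{any} $t$ of its $t+1$ qudits (by applying ${U_T^{(E)}}^\dagger$ or ${U_B^{(E)}}^\dagger$), hence any single erasure is correctable and $D'=2$. You instead verify the Knill--Laflamme-type distance condition \eqref{eq:k_great_1_distance_cond} algebraically, by embedding a single-qudit operator $\Gamma_a$ of the small code into an $(n-t)$-qudit operator $\Gamma'$ of the original code --- with transition operators $\ket{s_i}\bra{s_i'}$ on $n-t-1$ external nodes supplying the extra logical indices --- and then combining the original code's distance condition with the maximally-mixed property to recover the full $\delta_{\ul s,\ul s'}$; the bookkeeping (support size exactly $n-t=D-1$, the prefactor $Q^{n-t-1}$, the $K=1$ case via \eqref{eq:k_equals_1_distance_cond}) all checks out. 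Your approach is purely algebraic and avoids the measurement/post-selection step, which is arguably cleaner as a self-contained proof of the QMDS property; the paper's approach has the advantage of being constructive --- it exhibits the explicit decoder ${U_B^{(E)}}^\dagger$ for recovery from the alternative set $B$, which the paper reuses later (in the proof of Lemma~\ref{lm:sr_psi_out}, where party $P_2$ decodes the $[[t+1,t-1]]_Q$ code from the qudits in $W_{[t-1]}\widehat{W}_n$), whereas your argument only guarantees such a decoder exists.
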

\begin{proof}
See Appendix~\ref{ap:qmds}.
\end{proof}
The above two lemmas will be useful in deriving the lower bounds on entanglement cost later in this paper.

\section{Erasure correction in QMDS codes}
\label{s:eras_corr}

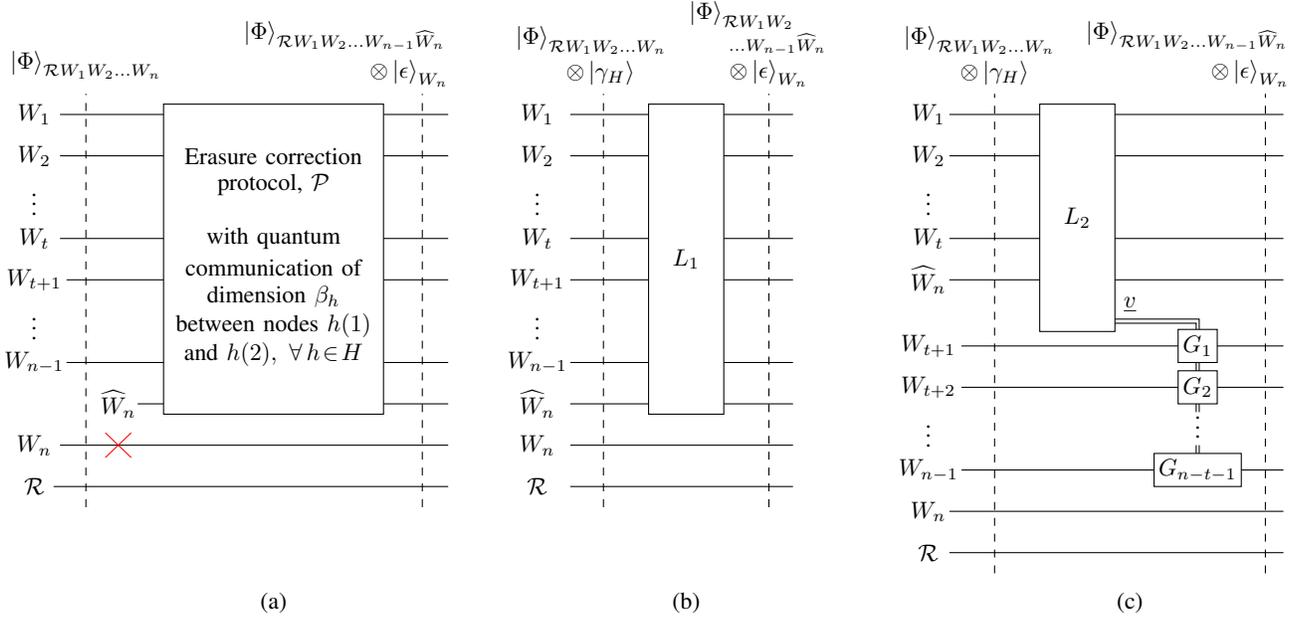
\begin{figure*}[t]
\begin{center}
\begin{tikzpicture}[xscale=0.4,yscale=0.55,every node/.style={scale=0.9}]
\begin{scope}[xshift=-20cm,xscale=4.3]
\node at (2.45,3.2) {(a)};
\node at (0.6,15) {$W_1$};
\draw (0.8,15) -- (1.6,15);
\node at (0.6,14) {$W_2$};
\draw (0.8,14) -- (1.6,14);
\node at (0.6,13) {$\vdots$};
\node at (0.6,12) {$W_t$};
\draw (0.8,12) -- (1.6,12);
\node at (0.6,11) {$W_{t+1}$};
\draw (0.85,11) -- (1.6,11);
\node at (0.6,10) {$\vdots$};
\node at (0.6,9) {$W_{n-1}$};
\draw (0.85,9) -- (1.6,9);
\node at (0.6,7) {$W_n$};
\draw (0.8,7) -- (3.8,7);
\node at (0.6,6) {$\mathcal{R}$};
\draw (0.75,6) -- (3.8,6);
\node at (1,16.2) {$\ket{\Phi}_{\mathcal{R}W_1 W_2\hdots W_n}$};
\draw [dashed] (1,5.5) -- (1,15.5);
\rd{\draw (1.15,6.7) -- (1.35,7.3);
\draw (1.15,7.3) -- (1.35,6.7);}
\node at (1.25,8) {$\widehat{W}_n$};
\draw (1.4,8) -- (1.6,8);
\draw (1.6,7.75) rectangle (3.3,15.25);
\node at (2.45,14) {Erasure correction};
\node at (2.45,13.3) {protocol, $\mathcal{P}$};
\node at (2.45,12) {with quantum};
\node at (2.45,11.3) {communication of};
\node at (2.45,10.6) {dimension $\beta_h$};
\node at (2.45,9.9) {between nodes $h(1)$};
\node at (2.45,9.2) {and $h(2),\,\,\forall\,h\!\in\!H$};
\draw (3.3,15) -- (3.8,15);
\draw (3.3,14) -- (3.8,14);
\draw (3.3,12) -- (3.8,12);
\draw (3.3,11) -- (3.8,11);
\draw (3.3,9) -- (3.8,9);
\draw (3.3,8) -- (3.8,8);
\draw [dashed] (3.6,5.5) -- (3.6,15.5);
\node at (3,16.9) {$\ket{\Phi}_{\mathcal{R}W_1 W_2\hdots W_{n-1}\widehat{W}_n}$};
\node at (3.5,16) {$\otimes\ket{\epsilon}_{W_n}$};
\end{scope}
\node at (4.25,3.2) {(b)};
\node at (-0.7,15) {$W_1$};
\draw (0.4,15) -- (3,15);
\node at (-0.7,14) {$W_2$};
\draw (0.4,14) -- (3,14);
\node at (-0.7,13) {$\vdots$};
\node at (-0.7,12) {$W_t$};
\draw (0.4,12) -- (3,12);
\node at (-0.7,11) {$W_{t+1}$};
\draw (0.4,11) -- (3,11);
\node at (-0.7,10) {$\vdots$};
\node at (-0.7,9) {$W_{n-1}$};
\draw (0.4,9) -- (3,9);
\node at (-0.7,7) {$W_n$};
\draw (0.4,7) -- (7.8,7);
\node at (-0.7,6) {$\mathcal{R}$};
\draw (0.4,6) -- (7.8,6);
\node at (1.1,16.8) {$\ket{\Phi}_{\mathcal{R}W_1 W_2\hdots W_n}$};
\node at (1.4,16) {$\otimes\ket{\gamma_H}$};
\node at (-0.7,8) {$\widehat{W}_n$};
\draw (0.4,8) -- (3,8);
\draw [dashed] (1.5,5.5) -- (1.5,15.5);
\draw (3,7.75) rectangle (5.5,15.25);
\node at (4.25,11.5) {$L_1$};
\draw (5.5,15) -- (7.8,15);
\draw (5.5,14) -- (7.8,14);
\draw (5.5,12) -- (7.8,12);
\draw (5.5,11) -- (7.8,11);
\draw (5.5,9) -- (7.8,9);
\draw (5.5,8) -- (7.8,8);
\draw [dashed] (7,5.5) -- (7,15.5);
\node at (6.1,17.4) {$\ket{\Phi}_{\mathcal{R}W_1 W_2}$};
\node at (7.3,16.7) {$_{\hdots W_{n-1}\widehat{W}_n}$};
\node at (7,16) {$\otimes\ket{\epsilon}_{W_n}$};
\begin{scope}[xshift=13cm]
\node at (6,3.2) {(c)};
\node at (6,3.1) {};
\node at (-0.7,15) {$W_1$};
\draw (0,15) -- (3,15);
\node at (-0.7,14) {$W_2$};
\draw (0,14) -- (3,14);
\node at (-0.7,13) {$\vdots$};
\node at (-0.7,12) {$W_t$};
\draw (0,12) -- (3,12);
\node at (-0.7,11) {$\widehat{W}_n$};
\draw (0,11) -- (3,11);
\node at (-0.7,9.4) {$W_{t+1}$};
\draw (0.4,9.4) -- (7.6,9.4);
\node at (-0.7,8.4) {$W_{t+2}$};
\draw (0.4,8.4) -- (7.6,8.4);
\node at (-0.7,7.4) {$\vdots$};
\node at (-0.7,6.4) {$W_{n-1}$};
\draw (0.4,6.4) -- (6.8,6.4);
\node at (-0.7,5.4) {$W_n$};
\draw (0,5.4) -- (11.1,5.4);
\node at (-0.7,4.4) {$\mathcal{R}$};
\draw (0,4.4) -- (11.1,4.4);
\node at (1,16.8) {$\ket{\Phi}_{\mathcal{R}W_1 W_2\hdots W_n}$};
\node at (1.5,16) {$\otimes\ket{\gamma_H}$};
\draw [dashed] (1.5,3.9) -- (1.5,15.5);
\draw (3,9.75) rectangle (5.5,15.25);
\node at (4.25,12.5) {$L_2$};
\draw (5.5,15) -- (11.1,15);
\draw (5.5,14) -- (11.1,14);
\draw (5.5,12) -- (11.1,12);
\draw (5.5,11) -- (11.1,11);
\draw (5.5,9.95) -- (8.2,9.95) -- (8.2,9.8);
\draw (5.5,10.05) -- (8.3,10.05) -- (8.3,9.8);
\node at (6,10.4) {$\ul{v}$};
\draw (7.6,9.8) rectangle (8.9,9);
\node at (8.25,9.4) {$G_1$};
\draw (8.2,9) -- (8.2,8.8);
\draw (8.3,9) -- (8.3,8.8);
\draw (8.9,9.4) -- (11.1,9.4);
\draw (7.6,8.8) rectangle (8.9,8);
\node at (8.25,8.4) {$G_2$};
\draw (8.9,8.4) -- (11.1,8.4);
\draw (8.2,8) -- (8.2,7.8);
\draw (8.3,8) -- (8.3,7.8);
\node at (8.25,7.6) {$\vdots$};
\draw (8.2,7) -- (8.2,6.8);
\draw (8.3,7) -- (8.3,6.8);
\draw (6.8,6.8) rectangle (9.7,6);
\node at (8.25,6.4) {$G_{n-t-1}$};
\draw (9.7,6.4) -- (11.1,6.4);
\draw [dashed] (10.5,3.9) -- (10.5,15.5);
\node at (7.9,16.9) {$\ket{\Phi}_{\mathcal{R}W_1 W_2\hdots W_{n-1}\widehat{W}_n}$};
\node at (10,16) {$\otimes\ket{\epsilon}_{W_n}$};
\end{scope}
\end{tikzpicture}
\captionsetup{justification=justified}
\caption{(a) Erasure correction to replace the erased node $W_n$ with $\widehat{W}_n$.
(b) Erasure correction using LOCC$(W_1$, $W_2$, $\hdots$\,,\,$W_n$,$\widehat{W}_n)$ protocol $L_1$ with resource state $\ket{\gamma_H}$.
(c) Erasure correction using LOCC$(W_1$, $W_2$, $\hdots$\,,\,$W_t$, $\widehat{W}_n)$ protocol $L_2$ and the unitary operators $G_1$, $G_2$, $\hdots$\,,\,$G_{n-t-1}$ with resource state $\ket{\gamma_H}$.}
\label{fig:eras_corr_as_locc}
\end{center}
\end{figure*}

In this section, we first model the erasure correction as a circuit involving an LOCC protocol (see Theorem~\ref{th:ec_locc}).
Then, we look at the Schmidt ranks of a specific subsystem for the states just before and after the LOCC protocol (see Lemmas~\ref{lm:sr_psi_in} and \ref{lm:sr_psi_out}).
In the next section, we will use these Schmidt ranks to derive a lower bound on the entanglement cost of erasure correction.

Here we consider a distributed quantum storage with $n$ nodes $W_1$, $W_2$, $\hdots$, $W_n$.
Each physical qudit in the logical state of an $[[n,2t-n]]_Q$ QMDS code is stored in one of the $n$ nodes.

Assume that node $W_e$ (where $e\in[n]$) is lost \textit{i.e.} the $e$th physical qudit is erased.
We can perform erasure correction to replace the erased qudit.
Using erasure correction, some of the $n-1$ non-erased nodes can perform joint quantum operations on their qudits along with an ancilla qudit and generate a new replacement qudit.
We refer to the non-erased nodes involved in these quantum operations as helper nodes.
Taking the replacement qudit in place of the $e$th qudit should yield the logical state.
This replacement qudit is stored in a replacement node $\widehat{W}_e$ in place of the lost node.
For any logical state $\ket{\phi}$ of the quantum MDS code, the erasure correction process should take the input state
\begin{equation}
\ket{\phi}_{W_1 W_2\hdots W_n}\ket{0}_{\widehat{W}_e}
\end{equation}
to the output state
\begin{equation}
\ket{\phi}_{W_1 W_2\hdots W_{e-1}\widehat{W}_e W_{e+1}\hdots W_n}\ket{\epsilon}_{W_e}.
\end{equation}
Here $\ket{\epsilon}$ is some pure state in $\Hs_Q$.

The following two lemmas talk about the number of helper nodes needed for erasure correction.

\begin{lemma}
\label{lm:d_geq_t}
Consider a distributed quantum storage with $n$ nodes using an $[[n,2t-n]]_Q$ quantum MDS code.
To replace a lost node, the replacement node needs quantum communication with at least $t$ of the other $n-1$ nodes.
\end{lemma}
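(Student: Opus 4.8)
The plan is to argue by contradiction using the monotonicity of Schmidt rank under LOCC (Lemma~\ref{lm:monotone_sr}). Suppose some protocol corrects the erasure of node $W_e$ while accessing only a set $S$ of the $n-1$ non-erased nodes, with $|S|\le t-1$; ``accessing only $S$'' means that every quantum-communication step takes place among the nodes of $S$ and the replacement node $\widehat{W}_e$, while the remaining non-erased nodes stay idle (no local quantum operations and no quantum communication). I would show that this forces the Schmidt rank of a carefully chosen subsystem to \emph{increase} under LOCC, which is impossible.

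First I would fix the purified picture of Fig.~\ref{fig:eras_corr_as_locc}: a reference $\mathcal{R}$ of dimension $Q^{2t-n}$ maximally entangled with the logical input, so that $\ket{\Phi}_{\mathcal{R}W_1\hdots W_n}$ is pure and erasure correction maps $\ket{\Phi}_{\mathcal{R}W_1\hdots W_n}\ket{0}_{\widehat{W}_e}$ to $\ket{\Phi}_{\mathcal{R}W_1\hdots\widehat{W}_e\hdots W_n}\ket{\epsilon}_{W_e}$. I then take the bipartition $X=S\cup\{\widehat{W}_e\}$ versus $Y=\mathcal{R}\cup\{W_e\}\cup(\text{idle nodes})$. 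Since every quantum-communication step is internal to $X$, only classical messages cross the cut, and nothing acts on $Y$, the whole procedure is an LOCC protocol with respect to $(X,Y)$, so Lemma~\ref{lm:monotone_sr} gives $\sr(X)_\text{out}\le\sr(X)_\text{in}$.

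Next I would evaluate both ranks using the fact that the purification of an $[[n,2t-n]]_Q$ QMDS code by its logical reference is an AME state on $2t$ qudits; equivalently, every set of at most $t$ of those qudits is maximally mixed~\cite{huber20,grassl22}. Before the protocol, $\widehat{W}_e$ and the fresh ancillas hold $\ket{0}$, so $\rho_X=\rho_S\otimes\ket{0}\bra{0}^{\otimes(\cdots)}$ with $\rho_S=\idmtx/Q^{|S|}$ (as $|S|\le t-1\le t$), giving $\sr(X)_\text{in}=Q^{|S|}$. After the protocol, $X$ contains the $|S|+1\le t$ reconstituted code qudits $S\cup\{\widehat{W}_e\}$, whose reduced state is $\idmtx/Q^{|S|+1}$; since tracing out any leftover ancillas in $X$ cannot raise rank, $\sr(X)_\text{out}\ge Q^{|S|+1}$. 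Combining with monotonicity yields $Q^{|S|+1}\le Q^{|S|}$, i.e.\ $Q\le 1$, contradicting $Q\ge 2$. Hence at least $t$ non-erased nodes must be accessed.

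The hard part will be the Schmidt-rank bookkeeping rather than the LOCC reduction. The ``increase by a factor $Q$'' rests on both the $|S|$ pre-correction qudits and the $|S|+1$ post-correction code qudits being maximally mixed, whereas the directly stated QMDS fact only covers subsets of size up to $n-t$, which is smaller than $t-1$ when $n<2t-1$. The clean way around this is to pass to the $2t$-qudit AME purification, where every subset of size at most $t$ (the half-size threshold) is maximally mixed, so both ranks are exactly the claimed powers of $Q$. A secondary point to state carefully is that ``accessing only $S$'' genuinely makes the procedure LOCC across $(X,Y)$, i.e.\ that no quantum channel crosses the cut; the unrestricted classical communication is already permitted by Lemma~\ref{lm:monotone_sr} and needs no separate handling.
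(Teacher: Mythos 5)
Your proposal is correct and follows essentially the same route as the paper: the paper's proof also argues by contradiction on the bipartition $\left(W_L\widehat{W}_e \,\middle|\, \mathcal{R}W_{[n]\setminus L}\right)$ with $|L|=t-1$, applies Schmidt-rank monotonicity under LOCC, and derives the contradiction $Q^{t-1}\geq Q^{t}$ from the maximal mixedness of $t-1$ versus $t$ code qudits (obtained via Lemma~\ref{lm:unit_qmds} rather than your AME-purification phrasing, but with identical content). One small repair: a partial trace \emph{can} increase rank, so justify $\sr(X)_{\text{out}}\geq Q^{|S|+1}$ not by ``tracing out ancillas cannot raise rank'' but by noting that the required output state on the code and reference systems is pure and therefore factorizes from any leftover ancillas, so the Schmidt rank across the cut is multiplicative and at least the rank of the code-qudit factor.
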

\begin{proof}
See Appendix~\ref{ap:qmds}.
\end{proof}

\begin{lemma}
\label{lm:t_is_suff}
Consider a distributed quantum storage with $n$ nodes using an $[[n,2t-n]]_Q$ quantum MDS code.
To replace a lost node, it is sufficient for the replacement node to communicate with any $t$ of the other $n-1$ nodes.
\end{lemma}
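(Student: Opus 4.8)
The plan is to exhibit an explicit \emph{download-and-return} protocol that accesses only a chosen set $S$ of $t$ non-erased nodes (together with the replacement node) and to verify that it restores the code state exactly. Write $e$ for the index of the lost node and fix any $S\subseteq[n]\setminus\{e\}$ with $|S|=t$; since $t<n$ we have $n-t\geq 1$, so the complementary set $[n]\setminus S$ has size $n-t$ and contains $e$.

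First I would invoke Lemma~\ref{lm:unit_qmds} with the information set $T=S$. This expresses the encoding of each logical basis state $\ket{\ul s}$ (with $\ul s\in\A_Q^{2t-n}$) in the form where the $t$ registers $W_S$ carry $U_S^{(E)}(\ket{\ul s}\ket{r_1}\cdots\ket{r_{n-t}})$, while the $n-t$ registers $W_{[n]\setminus S}$ carry the bare redundancy string $\ket{r_1}\cdots\ket{r_{n-t}}$, summed uniformly over all redundancy values in $\A_Q^{n-t}$. Let $i_e$ denote the position of $e$ within $[n]\setminus S$, so that before erasure the node $W_e$ holds the single redundancy symbol $\ket{r_{i_e}}$, which is therefore already determined by the contents of $W_S$.

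Next I would define a recovery operation $\mathcal{N}$ acting only on $W_S$ together with the fresh ancilla $\ket{0}_{\widehat W_e}$: (i) apply $(U_S^{(E)})^\dagger$ to the $t$ registers of $W_S$, which leaves them in $\ket{\ul s}\ket{r_1}\cdots\ket{r_{n-t}}$ term by term; (ii) coherently copy the redundancy slot carrying $r_{i_e}$ into $\widehat W_e$ with a generalized CNOT; (iii) reapply $U_S^{(E)}$ to $W_S$. Since the copied slot serves only as the control, step (iii) returns $W_S$ to its original content, so $\mathcal{N}$ applied to the full code state yields exactly that state with the extra register $\widehat W_e$ holding a perfect copy of $W_e$'s symbol $\ket{r_{i_e}}$. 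In the download-and-return realization, the $t$ qudits of $W_S$ are teleported (Lemma~\ref{lm:q_tele}) to the replacement node, $\mathcal{N}$ is performed locally there, and the $t$ qudits are returned; all quantum communication is between $\widehat W_e$ and the nodes of $S$ only, so exactly these $t$ helpers are accessed.

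The step I expect to carry the real content is arguing that this succeeds on the \emph{erased} (hence mixed) input rather than on the full pure state. The key point is that $\mathcal{N}$ is supported on $W_S\cup\{\widehat W_e\}$ and does not act on $W_e$; hence it commutes with tracing out $W_e$. Applying $\mathcal{N}$ to the full code state and then erasing $W_e$ gives the same result as first erasing $W_e$ and then applying $\mathcal{N}$, namely a state in which $\widehat W_e$ perfectly duplicates the (now-absent) $W_e$. Tracing out $W_e$ from such a state returns the pure code state with $\widehat W_e$ playing the role of $W_e$, and by linearity this holds for every logical superposition $\ket{\phi}$. I would close by recording that the only quantum communication used is between the replacement node and the $t$ nodes of $S$, which is precisely the claimed sufficiency and meets the matching necessity of Lemma~\ref{lm:d_geq_t}.
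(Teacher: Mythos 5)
Your overall architecture (fix the information set $S$ of $t$ helpers, download their qudits to the replacement node, act locally using the decomposition of Lemma~\ref{lm:unit_qmds}, and return them) is the same as the paper's, but the central step --- coherently copying the redundancy slot carrying $r_{i_e}$ into $\widehat W_e$ with a generalized CNOT and then discarding $W_e$ --- does not produce the code state. After your map $\mathcal{N}$ the joint state is
\begin{equation*}
\frac{1}{\sqrt{Q^{n-t}}}\sum_{r_1,\hdots,r_{n-t}\in\A_Q}U_S^{(E)}\big(\ket{\ul s}\ket{r_1}\hdots\ket{r_{n-t}}\big)_{W_S}\,\big(\ket{r_1}\hdots\ket{r_{n-t}}\big)_{W_{[n]\setminus S}}\,\ket{r_{i_e}}_{\widehat W_e},
\end{equation*}
where $W_e$, sitting inside $W_{[n]\setminus S}$, still holds $\ket{r_{i_e}}$. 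Your observation that $\mathcal{N}$ commutes with $\Tr_{W_e}$ is correct, but it does not give the conclusion you draw from it: in the state above $W_e$ remains maximally entangled with the other registers (its reduced state is maximally mixed, exactly as before, since $\mathcal{N}$ never acts on it), so tracing it out leaves a state of entropy $\log Q$ on $W_{[n]\setminus\{e\}}\widehat W_e$. Concretely, the partial trace over $W_e$ inserts a factor $\delta_{r_{i_e},r'_{i_e}}$ that kills the coherences between different values of $r_{i_e}$, yielding a classical mixture rather than the pure encoded state. A CNOT ``perfect duplicate'' is only a computational-basis correlation and cannot transfer the entanglement that $W_e$ carries. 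Indeed, if your claim held, then by the manifest $W_e\leftrightarrow\widehat W_e$ symmetry of the state above the output would also be the pure code state with $W_e$ in the $e$th slot, forcing both $W_e$ and $\widehat W_e$ to be unentangled from everything else --- contradicting the fact that every physical qudit of a QMDS code state is maximally mixed. This is the no-cloning obstruction in disguise.

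The missing idea, which is what the paper's proof supplies, is to actively decouple $W_e$ rather than copy it. After inverting $U_S^{(E)}$ at the replacement node, one \emph{measures} the slot carrying $r_{i_e}$; this collapses both that slot and the absent $W_e$ into a product pure state $\ket{\epsilon}\otimes\ket{\epsilon}$, destroying all correlation with $W_e$. One then creates a \emph{fresh} maximally entangled pair between that slot and a new ancilla, which becomes the replacement qudit held at $\widehat W_e$, reapplies $U_S^{(E)}$ and returns the $t$ qudits. The measure-then-re-entangle step is exactly the content your CNOT was meant to provide but cannot; with that substitution the rest of your argument (including the count of accessed nodes) goes through.
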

\begin{proof}
See Appendix~\ref{ap:qmds}.
\end{proof}

The above two lemmas show that we need $d\geq t$ helper nodes for successful erasure correction.
In this paper, we study the entanglement cost of erasure correction when $d=t$.
The entanglement cost when $d>t$ will be studied as part of future work.

Let $T=\{j_1,j_2,\hdots,j_t\}\subseteq[n]\setminus\{e\}$ of size $|T|=t$ indicate the set of helper nodes.
Then the quantum network used for the erasure correction is given by a connected graph $(V,H)$ with the vertices given by
\begin{equation}
V=\{\widehat{W}_e,W_{j_1},W_{j_2},\hdots,W_{j_t}\}.
\end{equation}
The set $H$ gives the list of undirected edges corresponding to the quantum channels \textit{i.e.} links in the network
\begin{equation}
H=\left\{\,\{A,B\}
\,\,\Bigg|\,\,\parbox{3.6cm}{$\exists$ a quantum channel\\between nodes $A,B\in V$}\right\}.
\end{equation}

We use the integer $\beta_h$ to quantify the total amount of quantum communication in the quantum channel $h\in H$.
More precisely, for $h=\{A,B\}$, if nodes $A$ and $B$ communicate quantum systems of dimensions $\eta_1$, $\eta_2$, $\hdots$, $\eta_\ell$ between them during the erasure correction, then $\beta_h=\eta_1\eta_2\hdots\eta_\ell$.

The definition of entanglement cost of a specific task in a network is somewhat straightforward.
The amount of quantum communication required to do the task can be defined as the entanglement cost.
This is because entanglement distribution and quantum communication become equivalent (due to quantum teleportation) under the assumption that classical communication is free of cost.
Extending this definition, in the multipartite setting, the entanglement cost can be defined for each channel $h\in H$.
Alternatively, we can look at the sum of the entanglement cost across all the channels (or edges) in $H$ in the multipartite setting.
Yamasaki et al~\cite[Section~II-B]{yamasaki17} gives such a definition for \textit{total graph-associated entanglement cost}.
Motivated by the definition in~\cite{yamasaki17}, we define the entanglement cost $\text{EC}_{e,T}(H)$ for quantum erasure correction in a given quantum MDS code as follows.
\begin{equation}
\text{EC}_{e,T}(H)\,:=\,\min_{p\in\mathcal{P}}
\,\,\sum_{h\in H}\log_Q\beta_h.
\end{equation}
Here, the minimization is over the set of all possible erasure correction protocols given by $\mathcal{P}$.
We assume free classical communication among all the $n-1$ non-erased nodes and the replacement node during the erasure correction process.

Physically, $\text{EC}_{e,T}(H)$ indicates the minimum number of $Q$-dimensional qudits needed to perform the quantum communication during erasure correction.
When $Q=2^m$, the quantity $m\,\text{EC}_{e,T}(H)$ indicates the minimum number of qubits communicated or EPR pairs consumed in the network.

\begin{figure*}[t]
\begin{center}
\begin{tikzpicture}[xscale=0.55,yscale=0.55,every node/.style={scale=0.85}]
\node at (-0.7,15) {$W_1$};
\draw (0,15) -- (9.5,15);
\node at (-0.7,14) {$W_2$};
\draw (0,14) -- (9.5,14);
\node at (-0.7,13) {$\vdots$};
\node at (-0.7,12) {$W_t$};
\draw (0,12) -- (9.5,12);
\node at (-0.7,11) {$\widehat{W}_n$};
\draw (0,11) -- (9.5,11);
\node at (-0.7,8.5) {$W_{t+1}$};
\draw (0,8.5) -- (4.2,8.5);
\node at (-0.7,7.5) {$W_{t+2}$};
\draw (0,7.5) -- (4.2,7.5);
\node at (-0.7,6.5) {$\vdots$};
\node at (-0.7,5.5) {$W_{n-1}$};
\draw (0,5.5) -- (4.2,5.5);
\node at (-0.7,4.5) {$W_n$};
\draw (0,4.5) -- (4.2,4.5);
\node at (-0.7,3.5) {$\mathcal{R}$};
\draw (0,3.5) -- (4.2,3.5);
\node at (1.5,16.8) {$\ket{\Phi}_{\mathcal{R}W_1 W_2\hdots W_n}$};
\node at (1.5,16) {$\otimes\ket{\gamma_H}$};
\draw [dashed] (1.5,3) -- (1.5,15.5);
\draw (6.7,8.5) -- (15,8.5);
\node at (11,8.9) {$W_1'$};
\draw (6.7,7.5) -- (15,7.5);
\node at (11,7.9) {$W_2'$};
\node at (11,7) {$\vdots$};
\draw (6.7,5.5) -- (15,5.5);
\node at (11,5.9) {$W_{t-2}'$};
\draw (6.7,4.5) -- (15,4.5);
\node at (11,4.9) {$W_{t-1}'$};
\draw (6.7,3.5) -- (15,3.5);
\node at (11,3.9) {$W_t'$};
\draw (4.2,8.75) rectangle (6.7,3.25);
\node at (5.45,6) {$U_T^{(E)}$};
\draw [dashed] (8,3) -- (8,15.5);
\node at (8,17.2) {$\ket{\psi_{\text{in}}}_{W_1 W_2\hdots W_t}$};
\node at (8.3,16.6) {$_{W_1'W_2'\hdots W_t'}$};
\node at (8,16) {$\otimes\ket{\gamma_H}$};
\draw (9.5,9.75) rectangle (12,15.25);
\node at (10.75,12.5) {$L_2$};
\draw (12,15) -- (21,15);
\node at (21.7,15) {$W_1$};
\draw (12,14) -- (21,14);
\node at (21.7,14) {$W_2$};
\node at (21.7,13) {$\vdots$};
\draw (12,12) -- (21,12);
\node at (21.7,12) {$W_t$};
\draw (12,11) -- (21,11);
\node at (21.7,11) {$\widehat{W}_n$};
\draw (12,10.05) -- (19.05,10.05) -- (19.05,8.9);
\draw (12,9.95) -- (18.95,9.95) -- (18.95,8.9);
\node at (12.7,10.4) {$\ul{v}$};
\draw [dashed] (13.5,3) -- (13.5,15.5);
\node at (13.5,16.6) {$\ket{\psi_{\text{out}}}_{W_1 W_2\hdots W_t\widehat{W}_n}$};
\node at (13.8,16) {$_{W_1'W_2'\hdots W_t'}$};
\draw (15,8.75) rectangle (17.5,3.25);
\node at (16.25,6) {${U_T^{(E)}}^\dagger$};
\draw (17.5,8.5) -- (18.5,8.5);
\draw (18.5,8.1) rectangle (19.5,8.9);
\node at (19,8.5) {$G_1$};
\draw (19.5,8.5) -- (21,8.5);
\node at (21.7,8.5) {$W_{t+1}$};
\draw (18.95,8.1) -- (18.95,7.9);
\draw (19.05,8.1) -- (19.05,7.9);
\draw (17.5,7.5) -- (18.5,7.5);
\draw (18.5,7.1) rectangle (19.5,7.9);
\node at (19,7.5) {$G_2$};
\draw (19.5,7.5) -- (21,7.5);
\node at (21.7,7.5) {$W_{t+2}$};
\draw (18.95,7.1) -- (18.95,6.9);
\draw (19.05,7.1) -- (19.05,6.9);
\node at (19,6.65) {$\vdots$};
\node at (21.7,6.5) {$\vdots$};
\draw (18.95,6.1) -- (18.95,5.9);
\draw (19.05,6.1) -- (19.05,5.9);
\draw (17.5,5.5) -- (18,5.5);
\draw (18,5.1) rectangle (20,5.9);
\node at (19,5.5) {$G_{n-t-1}$};
\draw (20,5.5) -- (21,5.5);
\node at (21.7,5.5) {$W_{n-1}$};
\draw (17.5,4.5) -- (21,4.5);
\node at (21.7,4.5) {$W_n$};
\draw (17.5,3.5) -- (21,3.5);
\node at (21.7,3.5) {$\mathcal{R}$};
\draw [dashed] (20.6,3) -- (20.6,15.5);
\node at (19,16.75) {$\ket{\Phi}_{\mathcal{R}W_1 W_2\hdots W_{n-1}\widehat{W}_n}$};
\node at (20.4,16) {$\otimes\ket{\epsilon}_{W_n}$};
\end{tikzpicture}
\captionsetup{justification=justified}
\caption{LOCC protocol for quantum erasure correction of node $W_n$ from the nodes $W_1$, $W_2$, $\hdots$, $W_t$.
A new node $\widehat{W}_n$ is initialized with resource state $\ket{\gamma_H}$.
We apply and invert the unitary operator $U_T^{(E)}$ acting on the systems $\mathcal{R}$, $W_{t+1}$, $W_{t+2}$, $\hdots$\,, $W_n$, which are not involved in the LOCC protocol $L_2$'s operations.}
\label{fig:pre_post_unitary_locc}
\end{center}
\end{figure*}

\subsection{Erasure correction as LOCC protocol}

We model the erasure correction as a $(t+1)$-partite LOCC protocol $L_2$ involving the $(t+1)$ parties in $W_T \widehat{W}_e$.
This LOCC protocol takes as input a joint state containing
\begin{itemize}
    \item the state $\ket{\Phi}_{\mathcal{R}W_{[n]}}$ obtained by encoding the maximally mixed state $\rho_S$ and
    \item the resource state $\ket{\gamma_H}$ shared among the replacement node and the $t$ helper nodes
\end{itemize}

Consider an $[[n,2t-n]]_Q$ quantum MDS code.
Let $\mathcal{S}$ be the system which contains the maximally mixed state\begin{equation}
\rho_\mathcal{S}=\frac{1}{Q^{2t-n}}\sum_{\ul{s}\in\A_Q^{2t-n}}\ket{\ul{s}}\bra{\ul{s}}.
\end{equation}
Take a reference system $\mathcal{R}$ such that joint system $\mathcal{RS}$ holds the pure state
\begin{equation}
\ket{\phi}_\mathcal{RS}=\frac{1}{\sqrt{Q^{2t-n}}}\sum_{\ul{s}\in\A_Q^{2t-n}}\ket{\ul{s}}_\mathcal{R}\ket{\ul{s}}_\mathcal{S}.
\end{equation}
After encoding the state $\rho_\mathcal{S}$, the state of the system $\mathcal{R}W_{[n]}$ is
\begin{equation}
\label{eq:qmds_enc_1}
\ket{\Phi}_{\mathcal{R}W_{[n]}}=\,(I_\mathcal{R}\,\otimes\,E_{\mathcal{SA}})
\,.\left(\ket{\phi}_{\mathcal{RS}}\,\ket{0}^{\otimes(2n-2t)}_{\mathcal{A}}\right).
\end{equation}
Here $E$ is an encoding operator for the $[[n,2t-n]]_Q$ quantum MDS code as given in Eq.~\eqref{eq:enc_basis_state}.

We can use the idea of quantum teleportation to formulate quantum erasure correction as a circuit using an $n$-partite LOCC protocol.
This is done by using a resource state to replace the quantum communication in each channel $h=\{h(1),h(2)\}\in H$ with a maximally entangled state.
\begin{equation}
\ket{\gamma_H}\,\,:=\,\,\bigotimes_{h\in H}\,\ket{\beta_h^+}_{h(1),h(2)}
\end{equation}
See Fig.~\ref{fig:eras_corr_as_locc}~{\!(b)} for an illustration with $T=[t]$ and $e=n$, .

The following theorem modifies this circuit to another circuit which uses a $(t+1)$-partite LOCC protocol as illustrated in Fig.~\ref{fig:eras_corr_as_locc}~{\!(c)}.
\begin{theorem}
\label{th:ec_locc}
Consider an $[[n,k]]_Q$ quantum code.
Let $e\in[n]$, $T=\{j_1,j_2,\hdots,j_t\}\subseteq[n]\backslash e$ and $[n]\backslash(T\cup\{e\})=\{j_{t+1},j_{t+2},\hdots,j_{n-1}\}$.
Let $H$ denote a set of quantum channels between pairs of parties in $\{\widehat{W}_e,W_{j_1},W_{j_2},\hdots,W_{j_t}\}$.
The erasure of the physical qudit in node $W_e$ can be corrected with quantum communication of dimension $\beta_h$ in channel $h$ for every $h\in H$ only if an LOCC$(W_{j_1},W_{j_2},\hdots,W_{j_t},\widehat{W}_e)$ protocol $L_2$ exists such that the operation
\begin{flalign}
&\,\,\,\,(L_2)_{W_T\widehat{W}_n}\otimes (G_1)_{W_{j_{t+1}}}\otimes\hdots\otimes(G_{n-t-1})_{W_{j_{n-1}}}\otimes\idmtx_{W_e\mathcal{R}}\,\,\,\,
\end{flalign}
takes the initial state
\begin{equation}
\ket{\Phi}_{\mathcal{R}W_1 W_2\hdots W_n}
\,\,\bigotimes_{h\in H}\,\ket{\beta_h^+}_{h(1),h(2)}
\end{equation}
to the final state
\begin{equation}
\ket{\Phi}_{\mathcal{R}W_1 W_2\hdots W_{e-1}\widehat{W}_e W_{e+1} W_{e+2}\hdots W_n}
\,\,\otimes\,\ket{\epsilon}_{W_e}
\end{equation}
where $\ket{\epsilon}$ is some fixed pure state in $\Hs_Q$ and $G_1$, $G_2$, $\hdots$~, $G_{n-t-1}$ are unitary operators over $\Hs_Q$ dependent on the local measurement outcomes from $L_2$.
\end{theorem}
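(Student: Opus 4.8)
The plan is to take an arbitrary successful erasure-correction protocol $\mathcal{P}$, as in Fig.~\ref{fig:eras_corr_as_locc}(a), and rewrite it into the claimed circuit in a few controlled moves. First I would remove the quantum channels one by one: a channel $h\in H$ transmits systems whose dimensions multiply to $\beta_h$, so by Lemma~\ref{lm:q_tele} it can be simulated using the pre-shared maximal entanglement $\ket{\beta_h^+}_{h(1),h(2)}$ and free classical communication. Carrying this out on every edge turns $\mathcal{P}$ into a protocol that consumes only the resource state $\ket{\gamma_H}=\bigotimes_{h\in H}\ket{\beta_h^+}_{h(1),h(2)}$ and otherwise uses local operations and classical communication; this is the LOCC protocol $L_1$ acting on $\ket{\Phi}_{\mathcal{R}W_1\hdots W_n}\otimes\ket{\gamma_H}$ of Fig.~\ref{fig:eras_corr_as_locc}(b).

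The central move is to shrink the party set of $L_1$ down to the $t{+}1$ nodes $W_{j_1},\hdots,W_{j_t},\widehat{W}_e$ that actually hold the entanglement of $\ket{\gamma_H}$. For this I would invoke Lemma~\ref{lm:locc_ent_unchanged} once for each of the $n-t-1$ non-helper nodes $W_{j_{t+1}},\hdots,W_{j_{n-1}}$. The hypothesis to verify is that such a node carries the uniform Schmidt vector $(1/\sqrt{Q},\hdots,1/\sqrt{Q})$ in both the input and the output state; this holds because each of these nodes stores a single code qudit disjoint from $\ket{\gamma_H}$, and by the MDS property any single physical qudit of the code is maximally mixed (as $1\le n-t$), in both $\ket{\Phi}_{\mathcal{R}W_{[n]}}$ and $\ket{\Phi}_{\mathcal{R}W_{[n]\setminus\{e\}}\widehat{W}_e}$. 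Each application of the lemma peels one such node off the LOCC at the cost of a final single-qudit unitary $G_i$ controlled by the measurement record $\ul{v}$, and iterating over all of them yields the LOCC$(W_{j_1},\hdots,W_{j_t},\widehat{W}_e)$ protocol $L_2$ together with $G_1,\hdots,G_{n-t-1}$. The reference $\mathcal{R}$ is never a party and the lost node $W_e$ is not touched by $L_2$, so they enter only through $\idmtx_{W_e\mathcal{R}}$; this reproduces exactly the operator in the statement and its action on the initial and final states, as in Fig.~\ref{fig:eras_corr_as_locc}(c).

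It is also useful to record the equivalent reformulation of Fig.~\ref{fig:pre_post_unitary_locc}, which will feed the Schmidt-rank computations of the subsequent lemmas. Using Lemma~\ref{lm:unit_qmds} one writes $\ket{\Phi}=\tfrac{1}{\sqrt{Q^t}}\sum_{x}\big(U_T^{(E)}\ket{x}\big)_{W_T}\otimes\ket{x}_{\mathcal{R}W_{[n]\setminus T}}$, so inserting $U_T^{(E)}$ on $\mathcal{R}W_{[n]\setminus T}$ before $L_2$ and $\big(U_T^{(E)}\big)^\dagger$ after it changes nothing: since $U_T^{(E)}$ acts on systems disjoint from the parties $W_T\widehat{W}_e$ of $L_2$, it commutes through $L_2$ and cancels against its inverse, leaving the same overall map while exposing the maximally entangled intermediate state $\ket{\psi_{\text{in}}}$.

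I expect the main obstacle to be the central move rather than the teleportation or the harmless $U_T^{(E)}$ insertion. Applying Lemma~\ref{lm:locc_ent_unchanged} repeatedly requires checking that, after a node has been removed, the next non-helper node still presents the uniform Schmidt vector in the reduced input and output states; this should follow because the corrections already extracted act on other single qudits and cannot disturb the maximally mixed marginal of the remaining non-helper nodes, but it has to be argued cleanly. One must also confirm that the product structure $(L_2)_{W_T\widehat{W}_e}\otimes(G_1)_{W_{j_{t+1}}}\otimes\hdots\otimes(G_{n-t-1})_{W_{j_{n-1}}}\otimes\idmtx_{W_e\mathcal{R}}$ genuinely survives the whole iteration and still carries the initial state to the final state, consistently with the sufficiency of $t$ helpers (Lemma~\ref{lm:t_is_suff}).
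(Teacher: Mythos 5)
Your proposal is correct and follows essentially the same route as the paper: replace each quantum channel by teleportation over $\ket{\gamma_H}$ to obtain the $n$-partite LOCC protocol $L_1$, then use the fact that every non-helper node is maximally mixed before and after correction to peel those $n-t-1$ parties off via successive applications of Lemma~\ref{lm:locc_ent_unchanged}, yielding $L_2$ and the unitaries $G_1,\hdots,G_{n-t-1}$. The iteration subtlety you flag is real but is not treated any more explicitly in the paper's own proof, and your remarks on the $U_T^{(E)}$ insertion match the paper's subsequent modification of the circuit.
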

\begin{proof}
The proof is by construction.
Assume that an erasure correction protocol $\mathcal{P}$ for node $W_e$ with quantum communication of dimension $\beta_j$ for every $j\in T$ exists.
Then we can do the erasure correction using an LOCC($W_1$, $W_2$, $\hdots$~, $W_{e-1}$, $W_{e+1}$, $W_{e+2}$, $\hdots$~, $W_n$, $\widehat{W}_n$) protocol $L_1$ by replacing the quantum communications over the channels in $H$ with quantum teleportations using the resource state $\ket{\gamma_H}$.
This is possible due to Lemma~\ref{lm:q_tele}.

From \cite[Section~II]{grassl22}, we know that each physical qudit of any logical state in the $[[n,2t-n]]_Q$ QMDS code is maximally entangled.
This implies that each node in the set $\{W_j\}_{j\notin T\cup\{e\}}$ before and after the erasure correction has the same Schmidt coefficients given by
\begin{equation}
\left(\frac{1}{\sqrt{Q}},\frac{1}{\sqrt{Q}},\hdots,\frac{1}{\sqrt{Q}}\right)\in\mathbb{R}^Q.
\end{equation}
Now, by successive application of Lemma~\ref{lm:locc_ent_unchanged} for the $n-t-1$ parties in $\{W_j\}_{j\notin T\cup\{e\}}$, we obtain the LOCC protocol $L_2$ and the unitary operators $G_1$, $G_2$, $\hdots$~, $G_{n-t-1}$.
\end{proof}

We further modify the circuit with the following two operations.
The unitary operator $U_T^{(E)}$ is applied to the system $\mathcal{R}W_{[n]\backslash T}$ (along with suitable ancillas) just before the LOCC protocol $L_2$.
Immediately after the execution of protocol $L_2$, the inverse operator ${U_T^{(E)\dagger}}$ is applied.
This modification does not affect the success or failure of the erasure correction.
The resultant circuit is illustrated in Fig.~\ref{fig:pre_post_unitary_locc}.

Without loss of generality, we take $e=n$ and $T=[t]$ here onwards.
For ease of notation, we refer to the $t$ qudits at the output of $U_T^{(E)}$ as belonging to individual systems $W_1'$, $W_2'$, $\hdots$\, $W_t'$.
For any set $L=\{i_1,i_2,\hdots,i_\ell\}\subseteq[t]$, the joint system $W_{i_1}'W_{i_2}'\hdots W_{i_\ell}'$ is denoted as $W_L'$.

\subsection{The state before the LOCC protocol $L_2$}

Let us derive the expression for the state $\ket{\psi_\text{in}}$ of the overall system $\mathcal{R}W_TW_T'\widehat{W}_n$ (excluding the resource state) just before the execution of $L_2$.
By Lemma~\ref{lm:unit_qmds}, the state $\ket{\Phi}_{\mathcal{R}W_{[n]}}$ can be written as
\begin{flalign}
&\,\,\frac{1}{\sqrt{Q^t}}
\sum_{\ul{s}\in\A_Q^{2t-n}}\sum_{\substack{r_1,r_2,\\\hdots,r_{n-t}\\\in\,\,\A_Q}}
\ket{\ul{s}}_\mathcal{R}\,U_T^{(E)}\!\left(\ket{\ul{s}}\ket{r_1}\!\hdots\ket{r_{n-t}}\right)_{W_T}&
\nonumber
\\*[-0.7cm]&\hspace{3.35cm}\,\ket{r_1}_{W_{t+1}}\ket{r_2}_{W_{t+2}}\hdots\ket{r_{n-t}}_{W_n}\!.\!\!\!\!\!&
\label{eq:phi_r_wn}
\end{flalign}
After applying $U_T^{(E)}$ in the subsystem $\mathcal{R}W_{[t+1,n]}$ of the state $\ket{\Phi}_{\mathcal{R}W_{[n]}}$, we obtain
\begin{flalign}
&\ket{\psi_\text{in}}&\nonumber
\\&\,\,=\frac{1}{\sqrt{Q^t}}
\sum_{\ul{s}\in\A_Q^{2t-n}}
\,\sum_{\substack{r_1,r_2,\hdots,\\r_{n-t}\in\A_Q}}
\,U_T^{(E)}\!\left(\ket{\ul{s}}\ket{r_1}\!\hdots\ket{r_{n-t}}\right)_{W_T}&
\nonumber
\\*[-0.55cm]&\hspace{4.2cm}U_T^{(E)}\!\left(\ket{\ul{s}}\ket{r_1}\!\hdots\ket{r_{n-t}}\right)_{W_T'}\!\!\!\!\!\!\!\!\!&
\\&\,\,=\frac{1}{\sqrt{Q^t}}
\sum_{\substack{m_1,m_2,\\\hdots,m_t\in\A_Q}}
\,(\ket{m_1}\ket{m_2}\hdots\ket{m_t})_{W_T}&
\nonumber
\\*[-0.55cm]&\hspace{4.2cm}(\ket{m_1}\ket{m_2}\hdots\ket{m_t})_{W_T}.&
\\&\,\,=
\!\left(\!\frac{1}{\sqrt{Q}}\sum_{\substack{m_1\\\in\A_Q}}\ket{m_1}_{W_1}\ket{m_1}_{W_1'}\!\right)
\!\left(\!\frac{1}{\sqrt{Q}}\sum_{\substack{m_2\\\in\A_Q}}\ket{m_2}_{W_2}\ket{m_2}_{W_2'}\!\right)\nonumber
\\*&\hspace{4.15cm}\hdots
\left(\!\frac{1}{\sqrt{Q}}
\sum_{\substack{m_t\\\in\A_Q}}\ket{m_t}_{W_t}\ket{m_t}_{W_t'}\!\right)\!.&\nonumber
\\*\label{eq:psi_in}
\end{flalign}
Clearly, for any $j\in[t]$, the joint system $W_jW_j'$ in the state $\ket{\psi_\text{in}}$ holds a pure state.
This leads to the following lemma.
\begin{lemma}
\label{lm:sr_psi_in}
For the state $\ket{\psi_\text{in}}_{W_T W_T'}$ as given in Eq.~\eqref{eq:psi_in},
\begin{eqnarray}
\sr(W_j W_j')_{\psi_\text{in}}=1
\end{eqnarray}
for any $j\in[t]$.
\end{lemma}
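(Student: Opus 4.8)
The plan is to read the conclusion directly off the product form already established in Eq.~\eqref{eq:psi_in}. That equation exhibits $\ket{\psi_\text{in}}$ as a tensor product over $j\in[t]$ of factors
\begin{equation}
\ket{\mu_j}_{W_j W_j'}:=\frac{1}{\sqrt{Q}}\sum_{m\in\A_Q}\ket{m}_{W_j}\ket{m}_{W_j'},
\end{equation}
each supported entirely on the joint subsystem $W_j W_j'$. So the only work remaining is to translate ``product state'' into ``Schmidt rank one'' across the correct cut.

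First I would fix $j\in[t]$ and regroup the factorization as
\begin{equation}
\ket{\psi_\text{in}}=\ket{\mu_j}_{W_j W_j'}\otimes\Bigl(\bigotimes_{i\in[t]\setminus\{j\}}\ket{\mu_i}_{W_i W_i'}\Bigr),
\end{equation}
so that $\ket{\psi_\text{in}}$ is manifestly a product of a pure state on $W_j W_j'$ and a pure state on the complementary systems. Tracing out the complement leaves $\rho(W_j W_j')_{\psi_\text{in}}=\ket{\mu_j}\bra{\mu_j}$, a rank-one projector. Since the Schmidt rank of a bipartition equals the rank of the corresponding reduced density operator, this gives $\sr(W_j W_j')_{\psi_\text{in}}=1$.

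There is essentially no obstacle here; the content lies entirely in the earlier computation leading to Eq.~\eqref{eq:psi_in}, which has already done the real work. The only subtlety worth flagging is the meaning of the bipartition: $\sr(W_j W_j')$ is taken with the pair $W_j W_j'$ grouped as a single party against all the remaining systems, \emph{not} with $W_j$ separated from $W_j'$. Across that internal cut each $\ket{\mu_j}$ is maximally entangled and would instead have Schmidt rank $Q$, so care is needed to apply the grouping correctly when invoking the lemma later.
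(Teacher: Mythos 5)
Your proof is correct and matches the paper's reasoning exactly: the paper likewise reads the result off the product form in Eq.~\eqref{eq:psi_in}, observing that each joint system $W_j W_j'$ holds a pure state and hence has Schmidt rank one against the rest. Your remark distinguishing the cut $W_j W_j'$ versus everything else from the internal cut $W_j$ versus $W_j'$ (where the rank would be $Q$) is a useful clarification but not a departure from the paper's argument.
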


\subsection{The state after the LOCC protocol $L_2$}

Let $\ket{\psi_\text{out}}$ denote the state of the system $\mathcal{R}W_TW_T'\widehat{W}_n$ immediately after the execution of $L_2$.
The expression for this state can be written by backtracing the steps in the circuit from the final erasure-corrected state.

The final erasure corrected state $\ket{\Phi}_{\mathcal{R}W_1 W_2\hdots W_{n-1}\widehat{W}_n}\otimes\ket{\epsilon}_{W_n}$ is given by
\begin{flalign}
&\frac{1}{\sqrt{Q^t}}\sum_{\substack{\ul{s}\in\\\A_Q^{2t\!-\!n}}}
\,\sum_{\substack{r_1,r_2,\\\hdots,\\r_{n-t}\\\in\,\,\A_Q}}
\,\ket{\ul{s}}_\mathcal{R}\,\,U_T^{(E)}\!\left(\ket{\ul{s}}\ket{r_1}\hdots\ket{r_{n-t}}\right)_{W_T}
\,\,\ket{r_{n-t}}_{\widehat{W}_n}&
\nonumber
\\*[-0.9cm]&\hspace{2.6cm}\ket{r_1}_{W_{t+1}}\ket{r_2}_{W_{t+2}}\hdots\ket{r_{n-t-1}}_{W_{n-1}}
\ket{\epsilon}_{W_n}\!.&
\nonumber
\\*[0.1cm]\label{eq:final_correct_state}
\end{flalign}
The state $\ket{\psi_\text{out}}$ immediately after the execution of $L_2$ can be obtained by suitably applying the operators $G_1^\dagger$, $G_2^\dagger$,\,$\hdots$\,, $G_{n-t-1}^\dagger$ and then $U_T^{(E)}$ to the state in Eq.~\eqref{eq:final_correct_state}.
The expression for $\ket{\psi_\text{out}}$ thus obtained is
\begin{flalign}
&\frac{1}{\sqrt{Q^t}}
\!\sum_{\substack{\ul{s}\,\in\\\A_Q^{2t\!-\!n}}}\sum_{\substack{r_1,r_2,\\\hdots,\\r_{n-t}\\\in\,\,\A_Q}}
\,U_T^{(E)}\!\left(\ket{\ul{s}}\ket{r_1}\ket{r_2}\hdots\ket{r_{n-t}}\right)_{W_T}
\,\,\ket{r_{n-t}}_{\widehat{W}_n}&
\nonumber
\\*[-0.9cm]&\hspace{2.45cm}U_T^{(E)}
\big(\!\ket{\ul{s}}G_1^\dagger\!\ket{r_1}\hdots G_{n-t-1}^\dagger\!\ket{r_{n-t-1}}\!\ket{\epsilon}\!\big)_{W_T'}.&
\nonumber
\\*[0.1cm]\label{eq:psi_out}
\end{flalign}

The following lemma gives the Schmidt ranks $\sr(W_jW_j')$ and $\sr(\widehat{W}_n)$ for the state $\ket{\psi_\text{out}}$.
\begin{lemma}
\label{lm:sr_psi_out}
For the state $\ket{\psi_\text{out}}_{W_T\widehat{W}_n W_T'}$ as given in Eq.~\eqref{eq:psi_out},
\begin{eqnarray}
\sr(\widehat{W}_n)_{\psi_\text{out}}&=&Q,
\label{eq:sr_rep_node}
\\\sr(W_jW_j')_{\psi_\text{out}}&=&Q^2
\label{eq:sr_psi_out}
\end{eqnarray}
for any $j\in[t]$.
\end{lemma}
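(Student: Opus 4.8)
The plan is to read off both Schmidt ranks directly from the explicit form of $\ket{\psi_\text{out}}$ in Eq.~\eqref{eq:psi_out}, using $\sr(\cdot)=\rank\rho(\cdot)$ together with the fact that $U_T^{(E)}$ is unitary, so that the $W_T$-register states $U_T^{(E)}(\ket{\ul s}\ket{r_1}\cdots\ket{r_{n-t}})$ are orthonormal as $(\ul s,\vec r)$ ranges over $\A_Q^{t}$, and the $W_T'$-register states $U_T^{(E)}(\ket{\ul s}G_1^\dagger\ket{r_1}\cdots G_{n-t-1}^\dagger\ket{r_{n-t-1}}\ket{\epsilon})$ are orthonormal in their $Q^{t-1}$ labels (note these do not depend on $r_{n-t}$, which is the crucial asymmetry).

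For $\sr(\widehat W_n)$ I would trace out $W_T$ and $W_T'$. Orthonormality of the $W_T$-images forces $(\ul s,\vec r)=(\ul s',\vec r')$, after which the $W_T'$-trace is automatically consistent, leaving $\rho(\widehat W_n)=\frac{1}{Q^t}\sum_{\ul s,\vec r}\ket{r_{n-t}}\bra{r_{n-t}}$. Since each value of $r_{n-t}$ is hit exactly $Q^{t-1}$ times, this collapses to $\frac1Q\idmtx$, of rank $Q$, giving Eq.~\eqref{eq:sr_rep_node}. This part is routine bookkeeping.

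For $\sr(W_jW_j')$ the bound $\le Q^2$ is immediate from $\dim\Hs_{W_j}\otimes\Hs_{W_j'}=Q^2$, so the work is the matching lower bound. I would compute the moments $\Tr[(M_{W_j}\otimes N_{W_j'})\rho(W_jW_j')]=\bra{\psi_\text{out}}M_{W_j}\otimes N_{W_j'}\otimes\idmtx\ket{\psi_\text{out}}$ for arbitrary $M,N$, pushing $M,N$ through $U_T^{(E)}$. The decisive input is the partial-trace identity $\Tr_{Y_t}[U_T^{(E)\dagger}M_{W_j}U_T^{(E)}]=\Tr(M)\,\idmtx$, which is equivalent to maximal mixedness of the marginal of the code's canonical state on $\{W_j\}$ together with the logical reference and the qudits $W_{t+1},\dots,W_{n-1}$. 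This marginal is maximally mixed because its complement is a $t$-qudit recovery set, which (by the correctable-erasure/decoupling property cited from \cite{huber20,grassl22}) is decoupled from the reference and itself maximally mixed. Feeding this identity into the moment collapses the $M$-dependence to $\Tr(M)$, and the remaining sum over the twisted labels rebuilds $\rho(W_j')$ via $\frac{1}{Q^{t-1}}U_T^{(E)}(\idmtx\otimes\ket{\epsilon}\bra{\epsilon})U_T^{(E)\dagger}=\rho(W_T')$. The outcome is the clean product form $\rho(W_jW_j')=\tfrac1Q\,\idmtx_{W_j}\otimes\rho(W_j')$; since $\rho(W_j)=\tfrac1Q\idmtx$ always, this reduces the claim to $\sr(W_jW_j')=Q\cdot\rank\rho(W_j')$, i.e.\ to showing $\rho(W_j')$ has full rank $Q$.

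The main obstacle is exactly this last step: the full-rankness of $\rho(W_j')=\Tr_{W_{[t]\setminus j}'}\bigl[U_T^{(E)}\bigl(\tfrac1{Q^{t-1}}\idmtx\otimes\ket{\epsilon}\bra{\epsilon}\bigr)U_T^{(E)\dagger}\bigr]$. Equivalently, I must rule out any nonzero $\ket u$ with $U_T^{(E)}(\Hs_Q^{\otimes(t-1)}\otimes\ket{\epsilon})$ orthogonal to $\ket u_{W_j}\otimes\Hs_{W_{[t]\setminus j}}$; this is where genuine quantum-MDS structure, rather than mere unitarity, is needed, and the natural tool is the smaller $[[t+1,t-1]]_Q$ QMDS code of Lemma~\ref{lm:unit_p1_qmds}, whose distance-$2$ erasure-correction property should forbid such an orthogonal single-qudit direction. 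I expect the hardest case to be the boundary $n-t=1$: there the two-qudit marginal of $\ket{\Phi}$ on $\{W_j,W_n\}$ need not be maximally mixed, so $\rho(W_jW_j')$ is not literally $\idmtx/Q^2$ and full-rankness of $\rho(W_j')$ cannot be read off from maximal mixedness of a small marginal; it must instead be extracted from the QMDS code of Lemma~\ref{lm:unit_p1_qmds}, and carefully verifying this boundary case is where I would concentrate the effort.
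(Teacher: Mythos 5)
Your proposal is correct in substance but follows a genuinely different route from the paper. For Eq.~\eqref{eq:sr_rep_node} the two arguments coincide. For Eq.~\eqref{eq:sr_psi_out} the paper argues operationally: it groups the complementary subsystem $W_{T\setminus\{j\}}W'_{T\setminus\{j\}}\widehat{W}_n$ into a single party, lets that party decode the $[[t+1,t-1]]_Q$ code of Lemma~\ref{lm:unit_p1_qmds} living on $W_T\widehat{W}_n$ and undo the $G_i^\dagger$ twists, which factors $\ket{\psi_\text{out}}$ up to local unitaries into $\ket{\chi}\otimes\ket{Q^+}$, and then reads off the Schmidt rank as the product $Q\cdot\sr(W'_j)_\chi$ with $\sr(W'_j)_\chi=Q$ supplied by Lemma~\ref{lm:entr_chi}. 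You instead compute $\rho(W_jW'_j)$ directly and obtain the factorization $\tfrac{1}{Q}\idmtx_{W_j}\otimes\rho(W'_j)$; your partial-trace identity $\Tr_{\text{last}}\bigl[U_T^{(E)\dagger}M_{W_j}U_T^{(E)}\bigr]=\Tr(M)\,\idmtx$ is correct and is exactly dual to the statement that any $t$ physical qudits of $\ket{\Phi}$ are maximally mixed (Lemma~\ref{lm:phi_entr_list}), and your residual state $\tfrac{1}{Q^{t-1}}U_T^{(E)}(\idmtx\otimes\ket{\epsilon}\bra{\epsilon})U_T^{(E)\dagger}$ is precisely $\rho(W_T)_\chi$ for the paper's $\ket{\chi}$. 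The two proofs therefore meet at the same crux, full rank of the one-qudit marginal of that state, which the paper settles via maximal mixedness of the two-qudit marginal $\rho(W_jW_n)_\Phi$ and you propose to settle via the distance-$2$ property of the small code; both work. Your density-matrix route is more self-contained, while the paper's local-unitary route extends more easily to larger subsets $W_JW'_J$ (for $|J|>1$ your partial-trace identity fails because the relevant bipartition is no longer balanced).

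Two small corrections. First, your justification of the partial-trace identity is garbled: the complement of $\{W_j\}\cup\mathcal{R}\cup\{W_{t+1},\dots,W_{n-1}\}$ is a $t$-qudit recovery set, and a recovery set is certainly not decoupled from the reference. What you actually need is that this $t$-qudit marginal of $\ket{\Phi}$ is maximally mixed, which holds because its own complement among the \emph{physical} qudits is an $(n-t)$-qudit correctable-erasure set, hence decoupled from $\mathcal{R}$ and maximally mixed, and the dimensions on the two sides of the bipartition both equal $Q^t$. Second, your worry about the boundary $n-t=1$ is unfounded: $\rho(W_jW_n)_\Phi$ is maximally mixed for every $n\leq 2t$ with $t\geq 2$ by the same purity argument (embed $\{j,n\}$ in any $t$-subset and apply Lemma~\ref{lm:phi_entr_list}), so no separate treatment of that case is needed, though your fallback via Lemma~\ref{lm:unit_p1_qmds} would also succeed.
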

\begin{proof}
See Appendix~\ref{ap:post_locc_sr}.
\end{proof}

\section{Entanglement cost}
\label{s:ent_cost}

In this section, we use the monotonicity of Schmidt rank under LOCC to obtain a lower bound on the entanglement cost of two different star networks (Theorem~\ref{th:ent_cost_bound_1} and Theorem~\ref{th:ent_cost_bound_2}).
For this, we compare the Schmidt ranks from the previous section.
Then we prove that these bounds are optimal (Corollary~\ref{co:ent_cost_1} and Corollary~\ref{co:ent_cost_2}).
This is by showing that the simple method of downloading the qudits from the helper nodes to the hub node, generating the replacement qudit and then sending the downloaded qudits back achieves the lower bound.

We consider the two possible star network topologies for the quantum network involved in erasure correction.
Recall that, for $T=\{j_1,j_2,\hdots,j_t\}$ the nodes in the quantum network, are given by $V=\{\widehat{W}_e,W_{j_1},W_{j_2},\hdots,W_{j_t}\}$.
The first topology is the star topology in which the replacement node is the hub.
\begin{equation}
H_1:=\left\{\{\widehat{W}_e,W_{j_1}\},\{\widehat{W}_e,W_{j_1}\},\hdots,\{\widehat{W}_e,W_{j_t}\}\right\}
\end{equation}
The resource state corresponding to this network can be written as
\begin{equation}
\ket{\gamma_{H_1}}\,\,=\,\,\bigotimes_{j\in T}\,\left|\beta_{\{\widehat{W}_e,W_j\}}^+\right>_{\widehat{W}_e,W_j}.
\end{equation}

The following theorem and corollary give the entanglement cost of quantum erasure correction using the network given by $H_1$.

\begin{theorem}
\label{th:ent_cost_bound_1}
Consider an $[[n,2t-n]]_Q$ quantum MDS code with $t>1$.
Let $e\in[n]$ and $T\subseteq[n]\backslash e$ such that $|T|=t$.
The erasure of node $W_e$ can be corrected with quantum communication over the star network $H_1$ with the replacement node as the hub only if
\begin{equation}
\log_Q\beta_{\{\widehat{W}_e,W_j\}}\geq 2
\end{equation}
for every $j\in T$.
\end{theorem}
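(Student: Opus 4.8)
The plan is to apply the monotonicity of Schmidt rank under LOCC (Lemma~\ref{lm:monotone_sr}) to the protocol $L_2$, using the carefully chosen subsystem bipartition $W_j W_j'$ whose Schmidt ranks we computed in Lemmas~\ref{lm:sr_psi_in} and~\ref{lm:sr_psi_out}. By Theorem~\ref{th:ec_locc}, successful erasure correction over the network $H_1$ implies the existence of an LOCC$(W_{j_1},\hdots,W_{j_t},\widehat{W}_e)$ protocol $L_2$ transforming $\ket{\psi_\text{in}}$ into $\ket{\psi_\text{out}}$ (after the pre- and post-application of $U_T^{(E)}$). The key point is that in the network $H_1$ the hub is the replacement node $\widehat{W}_e$, so the only quantum channel incident to a helper node $W_j$ is the edge $\{\widehat{W}_e, W_j\}$, which carries a maximally entangled state of dimension $\beta_{\{\widehat{W}_e,W_j\}}$.

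First I would fix a helper node $W_j$ (with $j \in T$) and consider the bipartition of the global system into the part held by $W_j W_j'$ versus everything else. Since $W_j'$ is an auxiliary system produced by $U_T^{(E)}$ and not touched by any communication, and since $L_2$ restricted to $W_j$'s side only has access to the single quantum channel to the hub, the relevant resource across this cut is the maximally entangled state $\ket{\beta_{\{\widehat{W}_e,W_j\}}^+}$. I would then track the Schmidt rank across the cut $W_j W_j'$ before and after $L_2$. Before $L_2$, by Lemma~\ref{lm:sr_psi_in}, $\sr(W_j W_j')_{\psi_\text{in}} = 1$; after $L_2$, by Lemma~\ref{lm:sr_psi_out}, $\sr(W_j W_j')_{\psi_\text{out}} = Q^2$. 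The subtlety is that naively Lemma~\ref{lm:monotone_sr} gives monotonicity \emph{with respect to the input resource}, so the increase from $1$ to $Q^2$ in the Schmidt rank of $W_j W_j'$ must be entirely supplied by the entanglement carried across the channel $\{\widehat{W}_e, W_j\}$.

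The main step, and the principal obstacle, is to make rigorous that the resource state contributes a factor of $\beta_{\{\widehat{W}_e,W_j\}}$ to the Schmidt rank of the cut at $W_j W_j'$, so that monotonicity yields $Q^2 \le 1 \cdot \beta_{\{\widehat{W}_e,W_j\}}$, i.e.\ $\beta_{\{\widehat{W}_e,W_j\}} \ge Q^2$. Concretely, I would form the augmented cut consisting of $W_j W_j'$ together with $W_j$'s half of the EPR state $\ket{\beta_{\{\widehat{W}_e,W_j\}}^+}$; across this augmented cut the joint input Schmidt rank is the product $1 \cdot \beta_{\{\widehat{W}_e,W_j\}}$ (the EPR half contributes exactly $\beta_{\{\widehat{W}_e,W_j\}}$), while $L_2$ acts as LOCC on this cut and hence cannot increase the Schmidt rank. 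Since after $L_2$ the resource is consumed and $\sr(W_j W_j')_{\psi_\text{out}} = Q^2$ is a lower bound on the Schmidt rank across the augmented cut of the output, monotonicity gives $Q^2 \le \beta_{\{\widehat{W}_e,W_j\}}$. Taking $\log_Q$ of both sides yields $\log_Q \beta_{\{\widehat{W}_e,W_j\}} \ge 2$ for every $j \in T$, which is the claim. The delicate point worth spelling out is the bookkeeping of which systems lie on each side of the cut throughout $L_2$, and confirming that only the single edge to the hub crosses the cut in the star topology $H_1$; this is exactly where the star structure is used and where the bound $2$ (rather than $1$) comes from.
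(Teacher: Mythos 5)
Your proposal is correct and follows essentially the same route as the paper: invoke Theorem~\ref{th:ec_locc} to reduce erasure correction to the LOCC protocol $L_2$ acting on $\ket{\psi_\text{in}}\otimes\ket{\gamma_{H_1}}$, then apply Schmidt-rank monotonicity (Lemma~\ref{lm:monotone_sr}) across the cut $W_jW_j'$ (including $W_j$'s half of the resource), using $\sr(W_jW_j')_{\psi_\text{in}}=1$ from Lemma~\ref{lm:sr_psi_in}, the multiplicative contribution $\beta_{\{\widehat{W}_e,W_j\}}$ from the single incident edge of the star, and $\sr(W_jW_j')_{\psi_\text{out}}=Q^2$ from Lemma~\ref{lm:sr_psi_out}. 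The ``augmented cut'' bookkeeping you spell out is exactly what the paper writes compactly as $\sr(W_jW_j')_{\psi_\text{in}}\,\sr(W_jW_j')_{\gamma_{H_1}}\geq\sr(W_jW_j')_{\psi_\text{out}}$.
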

\begin{proof}
If erasure correction is possible, then $\ket{\psi_\text{out}}$ can be obtained by applying the LOCC protocol $L_2$ to the state $\ket{\psi_\text{in}}\otimes\ket{\gamma_{H_1}}$.
Then, from the monotonicity of Schmidt rank under LOCC and Lemmas~\ref{lm:sr_psi_in} and \ref{lm:sr_psi_out}, for any $j\in T$,
\begin{eqnarray}
\sr\left(W_j W_j'\right)_{\psi_\text{in}}
\sr\left(W_j W_j'\right)_{\gamma_{H_1}}
&\!\!\!\geq\!\!\!&
\sr\left(W_j W_j'\right)_{\psi_\text{out}}
\,\,\,\,\,\,\,\,
\\1\,.\,\beta_{\{\widehat{W}_e,W_j\}}
&\!\!\!\geq\!\!\!&
Q^2
\\\log_Q\beta_{\{\widehat{W}_e,W_j\}}
&\!\!\!\geq\!\!\!&
2.
\end{eqnarray}
\end{proof}

\begin{corollary}
\label{co:ent_cost_1}
Consider an $[[n,2t-n]]_Q$ quantum MDS code.
The entanglement cost of correcting a single physical qudit using a quantum network $H_1$ is given by
\begin{equation}
\text{EC}_{e,T}(H_1)=2t.
\end{equation}
\begin{proof}
Applying Theorem~\ref{th:ent_cost_bound_1}, we obtain
\begin{equation}
\sum_{h\in H_1}\log_Q\beta_h\geq 2t.\label{eq:ec_lb_1}
\end{equation}
From Lemma~\ref{lm:t_is_suff}, we know that through quantum communication with nodes $W_{j_1}$, $W_{j_2}$, $\hdots$, $W_{j_t}$, the replacement node $\widehat{W}_e$ can perform the erasure correction.
This can be done simply by node $\widehat{W}_e$ downloading the $t$ qudits from the $t$ helper nodes, performing the operations required for erasure correction locally and sending back the $t$ qudits.
This download-and-return protocol is illustrated in Fig.~\ref{fig:network_diag_1}.
The quantum communication needed for this erasure correction protocol is $2t$ qudits, which is the same as the lower bound obtained in Eq.~\eqref{eq:ec_lb_1}.
Hence, the entanglement cost of quantum erasure correction in the network $H_1$ is equal to $2t$ qudits (each of dimension $Q$).
\end{proof}
\end{corollary}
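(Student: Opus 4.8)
The plan is to pin down the value $2t$ by matching a lower bound inherited from the preceding theorem against an explicitly constructed protocol. First I would dispose of the lower bound. The hub-centered star $H_1$ has exactly $t$ edges, namely one link $\{\widehat{W}_e, W_j\}$ for each helper node $W_j$ with $j \in T$. Theorem~\ref{th:ent_cost_bound_1} already supplies $\log_Q \beta_{\{\widehat{W}_e, W_j\}} \geq 2$ for each such $j$, so summing over the $t$ links gives $\sum_{h \in H_1} \log_Q \beta_h \geq 2t$. Since this holds for every admissible protocol, it holds for the minimizer in the definition of $\text{EC}_{e,T}(H_1)$, yielding $\text{EC}_{e,T}(H_1) \geq 2t$.

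For the matching upper bound I would exhibit the download-and-return protocol of Fig.~\ref{fig:network_diag_1} and verify that it costs precisely $2t$. By Lemma~\ref{lm:t_is_suff}, communication between the replacement node and the $t$ helper nodes already suffices to correct the erasure. Concretely, the hub $\widehat{W}_e$ first downloads the single $Q$-dimensional physical qudit from each of the $t$ helper nodes; using these $t$ qudits together with a local ancilla it then applies, purely locally, the recovery operation that produces the replacement qudit while leaving the $t$ downloaded qudits unaltered, and finally it returns each qudit to its originating helper node. Because every helper qudit is restored to its pre-erasure content, the resulting global state is exactly the logical state with $\widehat{W}_e$ substituted for $W_e$, as required.

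It then remains to tally the cost edge by edge. On each link $\{\widehat{W}_e, W_j\}$ the protocol sends one $Q$-dimensional qudit in each direction, so $\beta_{\{\widehat{W}_e, W_j\}} = Q^2$ and $\log_Q \beta_{\{\widehat{W}_e, W_j\}} = 2$; summing over the $t$ links gives total cost $2t$, which meets the lower bound. As the two bounds coincide, the minimum over $\mathcal{P}$ is attained and equals $2t$, establishing $\text{EC}_{e,T}(H_1) = 2t$.

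I expect the substantive point to be the achievability step rather than the counting. One must argue that the recovery carried out at the hub can be realized so as to simultaneously output the replacement qudit and preserve each downloaded helper qudit intact, so that the return phase exactly reconstructs the original physical qudits on the helper nodes and therefore costs only a further $t$. This is what keeps the protocol at $2t$ rather than above it, and it is ultimately guaranteed by the sufficiency of $t$ helper nodes in Lemma~\ref{lm:t_is_suff} together with the encoding structure captured by $U_T^{(E)}$.
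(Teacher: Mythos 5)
Your proposal is correct and follows essentially the same route as the paper: the lower bound comes from summing Theorem~\ref{th:ent_cost_bound_1} over the $t$ edges of $H_1$, and the matching upper bound is the download-and-return protocol whose achievability is delegated to Lemma~\ref{lm:t_is_suff}, giving $t$ qudits down and $t$ back for a total of $2t$. One small descriptive imprecision: the hub does not leave the downloaded qudits ``unaltered'' --- in the protocol of Lemma~\ref{lm:t_is_suff} it inverts $U_T^{(E)}$, measures out one qudit, creates a fresh maximally entangled pair with an ancilla, and re-applies $U_T^{(E)}$ before returning the qudits --- but since you correctly rely on that lemma for the validity of the recovery, the count of $2t$ communicated qudits and hence the conclusion stand.
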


In the second star topology, one of the helper nodes (say $W_{j_1}$) is the hub.
\begin{eqnarray}
H_2:=\left\{
\begin{array}{c}
\hspace{-1.4cm}
\{W_{j_1},\widehat{W}_e\},\{W_{j_1},W_{j_2}\},
\\*[0.1cm]\hspace{0.4cm}\{W_{j_1},W_{j_3}\},\hdots,\{W_{j_1},W_{j_t}\}
\end{array}
\!\!\!\right\}
\end{eqnarray}
The resource state corresponding to the network $H_2$ can be written as
\begin{equation}
\ket{\gamma_{H_2}}\,\,=\,\,
\Big|\beta_{\{W_{j_1}\widehat{W}_e\}}^+\Big>_{\!W_{j_1},\widehat{W}_e}
\,\,\bigotimes_{j\in T\setminus\{j_1\}}\Big|\beta_{\{W_{j_1},W_j\}}^+\Big>_{\!W_{j_1},W_j}.
\end{equation}

The following theorem and corollary give the entanglement cost of quantum erasure correction using the network $H_2$.
See Fig.~\ref{fig:network_diag_2} for an illustration of the quantum communications involved in the erasure correction using this network.

\begin{theorem}
\label{th:ent_cost_bound_2}
Consider an $[[n,2t-n]]_Q$ quantum MDS code.
Let $e\in[n]$, $T\subseteq[n]\backslash e$ such that $|T|=t$ and $j_1\in T$.
The erasure of node $W_e$ can be corrected with quantum communication over the star network $H_2$ with a helper node $W_{j_1}$ as the hub only if
\begin{equation}
\log_Q\beta_{\{W_{j_1},W_j\}}\geq 2
\end{equation}
for every $j\in T\setminus\{j_1\}$ and
\begin{equation}
\log_Q\beta_{\{W_{j_1},\widehat{W}_e\}}\geq 1.
\end{equation}
\end{theorem}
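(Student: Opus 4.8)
The plan is to mirror the Schmidt-rank monotonicity argument used in the proof of Theorem~\ref{th:ent_cost_bound_1}, but applied to two different bipartitions suited to the topology of $H_2$. As in Theorem~\ref{th:ec_locc}, if the erasure correction succeeds then $\ket{\psi_\text{out}}$ is obtained from $\ket{\psi_\text{in}}\otimes\ket{\gamma_{H_2}}$ by the LOCC$(W_{j_1},\hdots,W_{j_t},\widehat{W}_e)$ protocol $L_2$. Since Schmidt rank is multiplicative under tensor products and the resource state $\ket{\gamma_{H_2}}$ factors into maximally entangled pairs along the edges of $H_2$, the Schmidt rank of any party (or group of parties) across a given bipartition in $\ket{\psi_\text{in}}\otimes\ket{\gamma_{H_2}}$ equals the product of its Schmidt rank in $\ket{\psi_\text{in}}$ and the dimensions $\beta_h$ of the EPR pairs that cross that bipartition.

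For the first family of bounds I would fix $j\in T\setminus\{j_1\}$ and take the bipartition $(W_jW_j',\,\text{rest})$. In $H_2$ the node $W_j$ is a leaf, incident to exactly one edge $\{W_{j_1},W_j\}$, so the only EPR pair crossing this cut in $\ket{\gamma_{H_2}}$ is the one of dimension $\beta_{\{W_{j_1},W_j\}}$, while $W_j'$ has no support in $\ket{\gamma_{H_2}}$; hence $\sr(W_jW_j')_{\gamma_{H_2}}=\beta_{\{W_{j_1},W_j\}}$. Combining this with $\sr(W_jW_j')_{\psi_\text{in}}=1$ from Lemma~\ref{lm:sr_psi_in} and $\sr(W_jW_j')_{\psi_\text{out}}=Q^2$ from Lemma~\ref{lm:sr_psi_out}, the monotonicity of Schmidt rank under $L_2$ (Lemma~\ref{lm:monotone_sr}) gives $Q^2\leq 1\cdot\beta_{\{W_{j_1},W_j\}}$, i.e. $\log_Q\beta_{\{W_{j_1},W_j\}}\geq 2$. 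I would note explicitly that $W_j$ lies entirely on one side of the cut while the hub $W_{j_1}$ and all remaining $L_2$ parties lie on the other side, so the hypotheses of Lemma~\ref{lm:monotone_sr} are satisfied even though $L_2$ acts across the cut.

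For the single remaining bound I would instead take the bipartition $(\widehat{W}_e,\,\text{rest})$. The replacement node is likewise a leaf of $H_2$, incident only to the edge $\{W_{j_1},\widehat{W}_e\}$, so $\sr(\widehat{W}_e)_{\gamma_{H_2}}=\beta_{\{W_{j_1},\widehat{W}_e\}}$; moreover $\widehat{W}_e$ carries only a fixed ancilla in $\ket{\psi_\text{in}}$ (it does not appear in Eq.~\eqref{eq:psi_in}), so $\sr(\widehat{W}_e)_{\psi_\text{in}}=1$. Using $\sr(\widehat{W}_e)_{\psi_\text{out}}=Q$ from Lemma~\ref{lm:sr_psi_out} (under the convention $e=n$) and again applying monotonicity under $L_2$ yields $Q\leq\beta_{\{W_{j_1},\widehat{W}_e\}}$, that is $\log_Q\beta_{\{W_{j_1},\widehat{W}_e\}}\geq 1$.

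I do not expect a genuine obstacle here: the argument is structurally identical to that of Theorem~\ref{th:ent_cost_bound_1}, and the only real care needed is bookkeeping, namely (i) verifying that each leaf node of $H_2$ is incident to a single edge so that exactly one $\beta_h$ factor appears in the resource-state Schmidt rank for the chosen cut, with every other EPR pair lying wholly on the $\text{rest}$ side, and (ii) choosing the single-node cut $(\widehat{W}_e,\,\text{rest})$ rather than a paired cut for the hub-to-replacement edge, which is precisely what produces the weaker lower bound of $1$ in place of $2$. The asymmetry between the two bounds traces directly to the two distinct Schmidt-rank values $Q^2$ and $Q$ supplied by Lemma~\ref{lm:sr_psi_out}.
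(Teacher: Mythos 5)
Your proposal is correct and follows essentially the same route as the paper: both apply Schmidt-rank monotonicity under the LOCC protocol $L_2$ to the two cuts $(W_jW_j',\text{rest})$ and $(\widehat{W}_e,\text{rest})$, using $\sr(W_jW_j')_{\psi_\text{in}}=1$, $\sr(W_jW_j')_{\psi_\text{out}}=Q^2$ and $\sr(\widehat{W}_e)_{\psi_\text{out}}=Q$ from Lemmas~\ref{lm:sr_psi_in} and~\ref{lm:sr_psi_out}, with the single crossing EPR pair of $H_2$ contributing the factor $\beta_h$. Your explicit bookkeeping of which edges of $\ket{\gamma_{H_2}}$ cross each cut is exactly the step the paper leaves implicit.
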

\begin{proof}
If erasure correction is possible, then $\ket{\psi_\text{out}}$ can be obtained by applying the LOCC protocol $L_2$ to the state $\ket{\psi_\text{in}}\otimes\ket{\gamma_{H_2}}$.
Then, from the monotonicity of Schmidt rank under LOCC and Lemmas~\ref{lm:sr_psi_in} and \ref{lm:sr_psi_out}, for any $j\in T\setminus\{j_1\}$,
\begin{eqnarray}
\sr\left(W_j W_j'\right)_{\psi_\text{in}}
\sr\left(W_j W_j'\right)_{\gamma_{H_2}}
&\!\!\!\geq\!\!\!&
\sr\left(W_j W_j'\right)_{\psi_\text{out}}
\,\,\,\,\,\,\,\,
\\1\,.\,\beta_{\{W_{j_1},W_j\}}
&\!\!\!\geq\!\!\!&
Q^2
\\\log_Q\beta_{\{W_{j_1},W_j\}}
&\!\!\!\geq\!\!\!&
2.
\end{eqnarray}
From Eq.~\eqref{eq:sr_rep_node} in Lemma~\ref{lm:sr_psi_out}, we obtain
\begin{eqnarray}
\sr(\widehat{W}_e)_{\psi_\text{in}}
\sr(\widehat{W}_e)_{\gamma_{H_2}}
&\!\!\!\geq\!\!\!&
\sr(\widehat{W}_e)_{\psi_\text{out}}
\,\,\,\,\,\,\,\,
\\1\,.\,\beta_{\{W_{j_1}\widehat{W}_e\}}
&\!\!\!\geq\!\!\!&
Q
\\\log_Q\beta_{\{W_{j_1}\widehat{W}_e\}}
&\!\!\!\geq\!\!\!&
1.
\end{eqnarray}
\end{proof}


\begin{corollary}
\label{co:ent_cost_2}
Consider an $[[n,2t-n]]_Q$ quantum MDS code.
The entanglement cost of correcting a single physical qudit using the quantum network $H_2$ is given by
\begin{equation}
\text{EC}_{e,T}(H_2)=2t-1.
\end{equation}
\end{corollary}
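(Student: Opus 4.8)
The plan is to mirror the structure of the proof of Corollary~\ref{co:ent_cost_1}, combining the lower bound from Theorem~\ref{th:ent_cost_bound_2} with an explicit download-and-return protocol that achieves it. First I would establish the lower bound by summing the per-channel inequalities from Theorem~\ref{th:ent_cost_bound_2}. The network $H_2$ has $t$ channels: the $t-1$ channels $\{W_{j_1},W_j\}$ for $j\in T\setminus\{j_1\}$, each contributing $\log_Q\beta_{\{W_{j_1},W_j\}}\geq 2$, and the single channel $\{W_{j_1},\widehat{W}_e\}$ contributing $\log_Q\beta_{\{W_{j_1},\widehat{W}_e\}}\geq 1$. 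Adding these gives
\begin{equation}
\sum_{h\in H_2}\log_Q\beta_h\geq 2(t-1)+1=2t-1,
\label{eq:ec_lb_2}
\end{equation}
so $\text{EC}_{e,T}(H_2)\geq 2t-1$.

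Next I would exhibit a protocol attaining this value. Take the hub $W_{j_1}$ as the node that performs the local recovery. The $t-1$ other helper nodes $W_{j_2},\hdots,W_{j_t}$ each send their single physical qudit to $W_{j_1}$; the hub now holds all $t$ helper qudits. By Lemma~\ref{lm:t_is_suff}, these $t$ qudits suffice to reconstruct the erased qudit, so $W_{j_1}$ performs the required erasure-correction operations locally, producing the recovered physical qudit together with the qudits to be restored at the other helper nodes. The hub then sends back one qudit to each of the $t-1$ nodes $W_{j_2},\hdots,W_{j_t}$ and sends the single recovered qudit to the replacement node $\widehat{W}_e$. This is exactly the download-and-return protocol illustrated in Fig.~\ref{fig:network_diag_2}.

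I would then tally the communication cost. Each of the $t-1$ outer channels carries one qudit in each direction (download and return), contributing $\log_Q\beta_{\{W_{j_1},W_j\}}=2$, while the channel to $\widehat{W}_e$ carries a single qudit, contributing $\log_Q\beta_{\{W_{j_1},\widehat{W}_e\}}=1$. The total is $2(t-1)+1=2t-1$ qudits, matching the lower bound in Eq.~\eqref{eq:ec_lb_2}. Since a protocol achieving $2t-1$ exists and no protocol can do better, we conclude $\text{EC}_{e,T}(H_2)=2t-1$. The only point requiring care is the correctness of the local reconstruction step, but this is guaranteed by Lemma~\ref{lm:t_is_suff}, which already states that quantum communication with any $t$ helper nodes is sufficient for erasure correction; here all $t$ qudits are simply colocated at the hub before the operation is applied. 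Thus the argument is routine once the lower bound and the sufficiency lemma are in hand, and I do not anticipate a substantive obstacle.
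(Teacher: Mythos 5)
Your proposal is correct and follows essentially the same route as the paper: the lower bound is obtained by summing the per-channel inequalities of Theorem~\ref{th:ent_cost_bound_2}, and the matching upper bound comes from the download-and-return protocol with hub $W_{j_1}$ (whose correctness rests on Lemma~\ref{lm:t_is_suff}), totalling $2(t-1)+1=2t-1$ qudits. The only cosmetic difference is that the paper identifies the qudit sent to $\widehat{W}_e$ explicitly as the ancilla qudit from the construction in the proof of Lemma~\ref{lm:t_is_suff}, whereas you call it the recovered qudit; the substance is identical.
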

\begin{proof}
Applying Theorem~\ref{th:ent_cost_bound_2}, we obtain
\begin{equation}
\sum_{h\in H_2}\log_Q\beta_h\geq 2t-1.\label{eq:ec_lb_2}
\end{equation}
From Lemma~\ref{lm:t_is_suff}, we know that through quantum communication with nodes $W_{j_1}$, $W_{j_2}$, $\hdots$, $W_{j_t}$, the replacement node $\widehat{W}_e$ can perform the erasure correction.
This can be done in network $H_2$ by the hub node $W_{j_1}$ downloading the $t-1$ qudits from the other $t-1$ helper nodes, performing the operations required for erasure correction locally and sending back the $t-1$ qudits to respective helper nodes and the ancilla qudit used to perform the erasure correction (see proof of Lemma~\ref{lm:t_is_suff}) to the replacement node.
This protocol, which is similar to the download-and-return protocol from the proof of Corollary~\ref{co:ent_cost_1}, is illustrated in Fig.~\ref{fig:network_diag_2}.
The quantum communication needed for this erasure correction protocol is $2t-1$ qudits, which is the same as the lower bound obtained in Eq.~\eqref{eq:ec_lb_2}.
Hence, the entanglement cost of quantum erasure correction in the network $H_2$ is equal to $2t-1$ qudits (each of dimension $Q$).

\end{proof}

\section{Conclusion}
\label{s:conc}

In this paper, we derived the entanglement cost of erasure correction in quantum MDS codes using star networks.
We looked at the case when the number of helper nodes is minimal ($d=t$).
We modeled the erasure correction as a circuit with an LOCC protocol and then applied the monotonicity of Schmidt rank to derive lower bounds on entanglement cost.
The important idea we used was to apply and invert a suitable unitary operation on nodes not directly involved in the erasure correction.
Then we showed that these bounds are optimal and hence give the expressions for the entanglement cost.
We observe that the entanglement cost can be achieved by the simple download-and-return erasure correction protocols.

As future work, we can look at the entanglement cost while using a non-minimal number of helper nodes ($d>t$).
We can also study the entanglement cost of more general network topologies and code families.
Another direction is to explore the case of approximate quantum erasure correction.
Whether our results can be expanded to the case of distributed quantum \textit{error} correction will be a very interesting problem.
Note that, unlike erasures, we do not know the location of the physical qudits facing errors.

\section*{Acknowledgements}
This work was supported by funding from the NWO Zwaartekracht Grant QSC and an NWO VICI Grant.
The author thanks \'Alvaro G. I\~nesta and Bethany Davies for critical feedback on the manuscript.

\bibliographystyle{IEEEtran}
\balance
\bibliography{refs-dqs}
\onecolumn
\appendices

\counterwithin{lemma}{section}
\counterwithin{definition}{section}

\section{Properties of LOCC protocols}
\label{ap:locc}

Here, we give the proof for Lemma~\ref{lm:locc_ent_unchanged}.
To do this, we first need to define \textit{majorization}.
\begin{definition}
Let $\ul{\sigma}=(\sigma_1,\sigma_2,\hdots,\sigma_\ell)\in\mathbb{R}^\ell$ and $\ul{\mu}=(\mu_1,\mu_2,\hdots,\mu_\ell)\in\mathbb{R}^\ell$ be two vectors such that
\begin{eqnarray}
\sigma_1\geq\sigma_2\geq\hdots\geq\sigma_\ell\geq 0\,,
\\\mu_1\geq\mu_2\geq\hdots\geq\mu_\ell\geq 0\,.
\end{eqnarray}
We say that $\ul{\sigma}$ majorizes $\ul{\mu}$ (denoted as $\ul{\sigma}\succ\ul{\mu}$) if
\begin{equation}
\sum_{i=1}^{k}\sigma_i\geq\sum_{i=1}^{k}\mu_i
\end{equation}
for every $k\in[\ell]$.
\end{definition}

The following lemma (commonly known as Nielsen's theorem) gives a necessary and sufficient condition for a pure state to be transformed through bipartite LOCC to another pure state.
\begin{lemma}\cite[Theorem~1]{nielsen99}
\label{lm:nielsen_theorem}
A bipartite state $\ket{\phi_\text{in}}_{AB}$ can be transformed to another bipartite state $\ket{\phi_\text{out}}_{AB}$ with local operations and classical communication if and only if $\ul{\lambda}(A)_{\phi_\text{out}}\succ\ul{\lambda}(A)_{\phi_\text{in}}$.
\end{lemma}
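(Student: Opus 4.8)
The plan is to establish both implications of the stated equivalence, treating the vectors $\ul{\lambda}(A)_{\phi_\text{in}}$ and $\ul{\lambda}(A)_{\phi_\text{out}}$ through the spectra of the reduced density operators $\rho(A)_{\phi_\text{in}}$ and $\rho(A)_{\phi_\text{out}}$, whose eigenvalues are the squared Schmidt coefficients. The first reduction I would make is structural: for a \emph{deterministic} pure-to-pure transformation it is without loss of generality to assume the LOCC protocol has the one-way ``measure-and-correct'' form, in which $A$ applies a single quantum instrument $\{M_k\}_k$ with $\sum_k M_k^\dagger M_k = \idmtx_A$, sends the outcome $k$ to $B$, and $B$ applies a correcting unitary $U_k$. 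This collapse of a general two-way protocol to a single one-way round is the Lo--Popescu reduction; determinism then forces every measurement branch to reproduce $\ket{\phi_\text{out}}$ up to a local unitary on $B$, i.e. $(I_A\otimes U_k)(M_k\otimes I_B)\ket{\phi_\text{in}} = \sqrt{p_k}\,\ket{\phi_\text{out}}$ with $p_k=\bra{\phi_\text{in}}M_k^\dagger M_k\otimes I_B\ket{\phi_\text{in}}$.

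For \textbf{necessity}, I would trace out $B$ in the branch identity above. Since the partial trace is insensitive to the unitary $U_k$ acting on $B$, this yields $p_k\,\rho(A)_{\phi_\text{out}} = M_k\,\rho(A)_{\phi_\text{in}}\,M_k^\dagger$ for each $k$. A polar decomposition of $M_k\,\rho(A)_{\phi_\text{in}}^{1/2}$ then lets me rewrite this as $\rho(A)_{\phi_\text{in}} = \sum_k p_k\, V_k^\dagger\, \rho(A)_{\phi_\text{out}}\, V_k$ for suitable unitaries $V_k$, after using $\sum_k M_k^\dagger M_k = \idmtx_A$ to check that the weights $p_k$ sum to one. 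Thus $\rho(A)_{\phi_\text{in}}$ is a convex mixture of unitary conjugates of $\rho(A)_{\phi_\text{out}}$; by Uhlmann's theorem (equivalently, the Hardy--Littlewood--P\'olya characterization of majorization via doubly stochastic maps) its spectrum is majorized by that of $\rho(A)_{\phi_\text{out}}$, which is the content of the stated condition $\ul{\lambda}(A)_{\phi_\text{out}}\succ\ul{\lambda}(A)_{\phi_\text{in}}$.

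For \textbf{sufficiency}, I would run the argument constructively. Assuming $\ul{\lambda}(A)_{\phi_\text{out}}\succ\ul{\lambda}(A)_{\phi_\text{in}}$, the Hardy--Littlewood--P\'olya lemma says the input spectrum is reachable from the output spectrum by a finite chain of elementary two-component averaging operations (T-transforms), each mixing only a single pair of eigenvalues. I would realize the \emph{reverse} of each T-transform as one elementary LOCC step: a two-outcome measurement $\{M_1,M_2\}$ performed by $A$ in the Schmidt basis, whose outcome is sent to $B$ and corrected by a conditional permutation unitary. Chaining these steps assembles an explicit LOCC protocol carrying $\ket{\phi_\text{in}}$ to $\ket{\phi_\text{out}}$, and it suffices to verify that a single T-transform between neighboring Schmidt vectors is implementable by such a two-outcome measurement with $M_1^\dagger M_1 + M_2^\dagger M_2 = \idmtx_A$.

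The main obstacle I anticipate is twofold. On the necessity side, the delicate point is the passage from $p_k\,\rho(A)_{\phi_\text{out}} = M_k\,\rho(A)_{\phi_\text{in}}\,M_k^\dagger$ to a \emph{unitary}-mixture representation via polar decomposition, which requires care when $\rho(A)_{\phi_\text{in}}$ is singular (zero Schmidt coefficients) so that $\rho(A)_{\phi_\text{in}}^{1/2}$ is not invertible and the $V_k$ are only partial isometries that must be completed to unitaries. On the sufficiency side, the technical crux is constructing, for a single T-transform, the two measurement operators explicitly and confirming both that they form a legal instrument and that each post-measurement branch is locally unitarily equivalent to the intended intermediate state; handling the dimension and rank changes along the chain is where the bookkeeping is heaviest.
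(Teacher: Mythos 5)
Your outline is correct, and since the paper does not prove this lemma itself but imports it via the citation to Nielsen, your reconstruction---the Lo--Popescu reduction to a one-way measure-and-correct protocol, the polar-decomposition argument turning $p_k\,\rho(A)_{\phi_\text{out}} = M_k\,\rho(A)_{\phi_\text{in}}\,M_k^\dagger$ into a unitary-mixture representation for necessity, and the Hardy--Littlewood--P\'olya T-transform chain implemented by two-outcome measurements for sufficiency---is essentially the same argument as in the cited source, with the caveats you flag (singular reduced states forcing partial isometries, legality of each elementary instrument) being exactly the points treated there. One remark: Nielsen's majorization condition is on the \emph{squared} Schmidt coefficients, i.e.\ the spectra of the reduced density operators as you correctly use, whereas the paper's $\ul{\lambda}(A)_\phi$ denotes the unsquared coefficients, so your reading silently repairs a small imprecision in the statement as transcribed.
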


\newtheorem*{orig_locc_lemma_ent_unchanged}{Lemma~\ref{lm:locc_ent_unchanged}}
\begin{orig_locc_lemma_ent_unchanged}
Let $\ket{\psi_\text{in}}_{P_1P_2\hdots P_m}$ and $\ket{\psi_\text{out}}_{P_1P_2\hdots P_m}$ be two $m$-partite pure states such that
\begin{equation}
\ul{\lambda}(P_m)_{\psi_\text{out}}=\ul{\lambda}(P_m)_{\psi_\text{in}}
=\left(\frac{1}{\sqrt{Q}},\frac{1}{\sqrt{Q}},\hdots,\frac{1}{\sqrt{Q}}\right)\in\mathbb{R}^Q.
\end{equation}
An LOCC$(P_1,P_2,\hdots,P_m)$ protocol $L_1$ which transfroms $\ket{\psi_\text{in}}$ to $\ket{\psi_\text{out}}$ exists if and only if an LOCC$(P_1,P_2,\hdots,P_{m-1})$ protocol $L_2$ exists such that $\ket{\psi_\text{out}}$ can be obtained by
\begin{enumerate}
\item applying $(L_2)_{P_1P_2\hdots P_{m-1}}\,\otimes\,\idmtx_{P_m}$ on $\ket{\psi_\text{in}}$ to obtain the state $\ket{\psi_\text{mid}}$ and
\item applying $\idmtx_{P_1P_2\hdots P_{m-1}}\,\otimes\,G(\ul{v})_{P_m}$ on $\ket{\psi_\text{mid}}$.
Here $G(\ul{v})$ is a unitary operator dependent on the local measurements $\ul{v}$ obtained from the execution of $L_2$ in the previous step.
\end{enumerate}
\end{orig_locc_lemma_ent_unchanged}
\begin{proof}
If part of the statement is proved by taking $L_1$ to be the combined protocol involving steps 1 and 2.
The proof for the only if part is by construction as given below.

Any local operation happening in $P_m$ can be done with either a unitary operator or a projective measurement (with suitable ancillas).
We show below that any local projective measurement in $P_m$ during the execution of $L_1$ can be replaced with a suitable unitary operator.

Consider any local projective measurement done by $P_m$ during the protocol $L_1$'s execution while transforming $\ket{\psi_\text{in}}$ to $\ket{\psi_\text{out}}$.
Let $\ket{\phi_\text{in}}_{P_1P_2\hdots P_m}$ and $\ket{\phi_\text{out}}_{P_1P_2\hdots P_m}$ be the intermediate states just before and after the measurement.
From Lemma~\ref{lm:nielsen_theorem}, we know that
\begin{equation}
\ul{\lambda}(P_m)_{\psi_\text{out}}\,\succ\,
\ul{\lambda}(P_m)_{\phi_\text{out}}\,\succ\,
\ul{\lambda}(P_m)_{\phi_\text{in}}\,\succ\,
\ul{\lambda}(P_m)_{\psi_\text{in}}.
\end{equation}
However, since $\ul{\lambda}(P_m)_{\psi_\text{out}}=\ul{\lambda}(P_m)_{\psi_\text{out}}=\left(\frac{1}{\sqrt{Q}},\frac{1}{\sqrt{Q}},\hdots,\frac{1}{\sqrt{Q}}\right)\in\mathbb{R}^Q$, we get
\begin{equation}
\ul{\lambda}(P_m)_{\phi_\text{out}}=
\ul{\lambda}(P_m)_{\phi_\text{in}}=\left(\frac{1}{\sqrt{Q}},\frac{1}{\sqrt{Q}},\hdots,\frac{1}{\sqrt{Q}}\right)\in\mathbb{R}^Q.
\label{eq:schmidt_coeff_unchnaged}
\end{equation}
Let the Schmidt decomposition of $\ket{\phi_\text{in}}$ between the subsystems $P_m$ and $P_1P_2\hdots P_{m-1}$ be
\begin{equation}
\ket{\phi_\text{in}}=\frac{1}{\sqrt{Q}}
\,\sum_{j=1}^{Q}\,
\ket{a_j}_{P_m}\ket{b_j}_{P_1 P_2\hdots P_{m-1}}.
\end{equation}
Assume that the measurement outcome $v$ is obtained after the local measurement.
Denote the projector corresponding to the outcome $v$ as $V$.
Then the state $\ket{\phi_\text{out}}$ is given by
\begin{eqnarray}
\ket{\phi_\text{out}}=\frac{V_{P_m}\otimes\idmtx_{P_1 P_2\hdots P_{m-1}}
\ket{\phi_\text{in}}}{\sqrt{\bra{\phi_\text{in}}V_{P_m}\otimes\idmtx_{P_1 P_2\hdots P_{m-1}}
\ket{\phi_\text{in}}}}
=\sum_{j\in J}
\mu_j\ket{\widehat{a}_j}_{P_m}\ket{b_j}_{P_1 P_2\hdots P_{m-1}}
\end{eqnarray}
where $J=\{j\in[Q]:\bra{a_j}V\!\ket{a_j}\neq 0\}\subseteq[Q]$ and
\begin{eqnarray}
\mu_j=\sqrt{\frac{\bra{a_j}V\!\ket{a_j}}{\sum_{\ell\in J}
\bra{a_\ell}V\!\ket{a_\ell}}}\,,
\hspace{1cm}
\ket{\widehat{a}_j}=\frac{V\!\ket{a_j}}{\bra{a_j}V\!\ket{a_j}}
\hspace{1cm}
\forall j\in J.
\end{eqnarray}
Now, the state held by the party $P_m$ alone can be described with the density matrix
\begin{equation}
\rho(P_m)_{\phi_\text{out}}
=\sum_{j\in J}\,\mu_j^2\ket{\widehat{a}_j}\bra{\widehat{a}_j}.
\end{equation}
Applying \cite[Theorem~11.10]{nielsen10}, we obtain
\begin{equation}
\mathbf{S}(P_m)_{\phi_\text{out}}
\,\,\leq\,\,H\big(\{\mu_j\}_{j\in J}\big)
\label{eq:app_phi_out_entr_ub}
\end{equation}
where equality is achieved if and only if $\left\{\ket{\widehat{a}_i}\right\}_{j\in J}$ is a set of orthornormal states.
However, from Eq.~\eqref{eq:schmidt_coeff_unchnaged}, we know that
\begin{equation}
\mathbf{S}(P_m)_{\phi_\text{out}}
=\log Q.
\label{eq:app_phi_out_entr_eq}
\end{equation}
Taken together, Eq.~\eqref{eq:app_phi_out_entr_ub} and \eqref{eq:app_phi_out_entr_eq} imply that
\begin{equation}
H\big(\{\mu_j\}_{j\in J}\big)=\log Q.
\end{equation}
This in turn implies that $J=[Q]$ and $\left\{\ket{\widehat{a}_j}\right\}_{j\in Q}$ is a set of $Q$ orthornormal states.

\textit{Constructing $L_2$ from $L_1$.}
Take some unitary operator $g_v$ which maps $\ket{a_j}$ to $\ket{\widehat{a}_j}$ (with suitable ancillas) for every $j\in[Q]$.
Let $p(v)$ be the probability of getting a measurement outcome $v$ and $\Lambda$ be the set of all possible measurement outcomes.
Then we can replace the local measurement by $P_m$ with
\begin{enumerate}
\item some other party (say $P_1$) locally generating and measuring in the computational basis the state
$\sum_{i\in\Lambda}\sqrt{p(i)}\ket{i}$ and
\item based on that measurment outcome $v'$, party $P_m$ applying the unitary operator $g_{v'}$.
\end{enumerate}
By doing the above replacement for every local measurement by $P_m$ in LOCC$(P_1,P_2,\hdots,P_m)$ protocol $L_1$, we obtain the LOCC$(P_1,P_2,\hdots,P_{m-1})$ protocol $L_2$ and the unitary operator $G$.
Specifically, the unitary operator $G$ is obtained by combining the operators $\{g_{v'}\}$ over all the local measurements in $P_m$ and other unitary operators in $P_m$ during execution of $L_1$.
\end{proof}

\section{Properties of quantum MDS codes}
\label{ap:qmds}

\newtheorem*{orig_lemma_unit_qmds}{Lemma~\ref{lm:unit_qmds}}
\begin{orig_lemma_unit_qmds}
Consider an $[[n,2t-n]]_Q$ quantum MDS code with an encoding operator $E$.
For any $T\subseteq[n]$ of size $|T|=t$, there exists a unitary operator $U_T^{(E)}$ over $\Hs_Q^{\otimes t}$ such that the encoding is given by
\begin{equation}
\label{ap_eq:unit_qmds}
E\left(\ket{\ul{s}}\ket{0}^{\otimes(2n-2t)}\right)
\,\,=\,\,
\frac{1}{\sqrt{Q^{n-t}}}\!\sum_{\substack{r_1,r_2,\hdots,\\r_{n-t}\,\in\,\A_Q}}
U_T^{(E)}\!\left(\ket{\ul{s}}\ket{r_1}\!\ket{r_2}\hdots\ket{r_{n-t}}\right)_{W_T}
\,\,(\ket{r_1}\!\ket{r_2}\hdots\ket{r_{n-t}})_{W_{[n]\setminus{D}}}
\end{equation}
for all $\ul{s}\in\A_Q^{2t-n}$.
\end{orig_lemma_unit_qmds}

\begin{proof}
Let the logical state obtained by encoding the state $\ket{\ul{s}}\in\Hs_Q^{2t-n}$ be $\ket{\phi_\ul{s}}$.
From \cite[Observation~6]{huber20}, we know that any set of $n-t$ physical qudits in a code state of quantum MDS code is in a maximally mixed state.
This implies
\begin{equation}
\rho(W_{[n]\setminus T})_{\phi_\ul{s}}=\frac{1}{Q^{n-t}}\sum_{\ul{r}\in\A_Q^{n-t}}\ket{\ul{r}}\bra{\ul{r}}.
\end{equation}
Hence, by Schmidt decomposition, we write
\begin{equation}
\ket{\phi_\ul{s}}=\frac{1}{\sqrt{Q^{n-t}}}
\,\sum_{\ul{r}\in\A_Q^{n-t}}
\,\ket{\alpha_{\ul{s},\ul{r}}}_{W_T}
\,\ket{\ul{r}}_{W_{[n]\setminus T}}.
\label{eq:psi_s_schmidt}
\end{equation}
Taking $U_T$ to be the unitary operator over $\Hs_Q^{\otimes t}$ which maps $\ket{\ul{s}}\ket{\ul{r}}$ to $\ket{\alpha_{\ul{s},\ul{r}}}$, we obtain the result.
\end{proof}

\newtheorem*{orig_lemma_unit_p1_qmds}{Lemma~\ref{lm:unit_p1_qmds}}
\begin{orig_lemma_unit_p1_qmds}
Let $U_T^{(E)}$ be the unitary operator as defined in Lemma~\ref{lm:unit_qmds}.
The $[[t+1,t-1]]_Q$ quantum code with the encoding
\begin{eqnarray}
\label{eq:enc_unit_p1_qmds}
\ket{s_1}\ket{s_2}\hdots\ket{s_{t-1}}
\,\mapsto\,\frac{1}{\sqrt{Q}}
\sum_{r\in\A_Q}U_T^{(E)}\big(\!\ket{s_1}\ket{s_2}\hdots\ket{s_{t-1}}\ket{r}\!\big)
\,\ket{r}
\hspace{0.8cm}\forall (s_1,s_2,\hdots,s_{2t-d-1})\in\A_Q^{2t-d-1}
\end{eqnarray}
is a quantum MDS code.
\end{orig_lemma_unit_p1_qmds}

\begin{proof}
The encoding of the $[[n,2t-n]]_Q$ quantum code from Lemma~\ref{lm:unit_qmds} is given by
\begin{flalign}
&\hspace{0.2cm}E\left(\ket{s_1}\ket{s_2}\hdots\ket{s_{2t-n}}\ket{0}^{\otimes(2n-2t)}\right)
=\frac{1}{\sqrt{Q^{n-t}}}\sum_{\substack{r_1,r_2,\hdots,\\r_{n-t}\,\in\,\A_Q}}
U_T^{(E)}\!\left(\ket{s_1}\ket{s_2}\hdots\ket{s_{2t-n}}\ket{r_1}\ket{r_2}\hdots\ket{r_{n-t-1}}\ket{r_{n-t}}\right)_{W_T}&\nonumber
\\*[-0.5cm]&\hspace{10cm}\ket{r_1}_{W_{i_1}}\ket{r_2}_{W_{i_2}}\hdots\ket{r_{n-t-1}}_{W_{i_{n-t-1}}}\ket{r_{n-t}}_{W_{i_{n-t}}}\!\!\!\!\!\!&
\label{eq:tmp_1}
\end{flalign}
for all $(s_1,s_2,\hdots,s_{2t-n})\in\A_Q^{2t-n}$.
Take any $j\in T$.
Let $B=(T\setminus\{j\})\cup\{i_{n-t}\}$.
By Lemma~\ref{lm:unit_qmds}, we can now write
\begin{flalign}
&\hspace{0.2cm}E\left(\ket{s_1}\ket{s_2}\hdots\ket{s_{2t-n}}\ket{0}^{\otimes(2n-2t)}\right)
=\frac{1}{\sqrt{Q^{n-t}}}\sum_{\substack{r_1,r_2,\hdots,\\r_{n-t-1},f\\\in\,\A_Q}}
U_B^{(E)}\!\left(\ket{s_1}\ket{s_2}\hdots\ket{s_{2t-n}}\ket{r_1}\ket{r_2}\hdots\ket{r_{n-t-1}}\ket{f}\right)_{W_B}&\nonumber
\\*[-0.75cm]&\hspace{11cm}\ket{r_1}_{W_{i_1}}\ket{r_2}_{W_{i_2}}\hdots\ket{r_{n-t-1}}_{W_{i_{n-t-1}}}\ket{f}_{W_j}\!.\!\!\!\!\!\!&
\label{eq:tmp_2}
\end{flalign}
Note that Eq.~\eqref{eq:tmp_1} and \eqref{eq:tmp_2} give two different expressions for the same state.
Assume that after measuring the $n-t-1$ qudits in systems $W_{i_1},W_{i_2},\hdots,W_{i_{n-t-1}}$ of this state in the computational basis, we obtain the outcome $(s_{2t-n+1},s_{2t-n+2},\hdots,s_{t-1})\in\A_Q^{n-t-1}$.
The new state we obtain after discarding the measured systems is
\begin{eqnarray}
\label{eq:tmp_3}
\frac{1}{\sqrt{Q}}\sum_{r_{n-t}\in\A_Q}
U_T^{(E)}\!\left(\ket{s_1}\ket{s_2}\hdots\ket{s_{t-1}}\ket{r_{n-t}}\right)_{W_T}\,\ket{r_{n-t}}_{W_{n-t}}
=\frac{1}{\sqrt{Q}}\sum_{f\in\A_Q}
U_B^{(E)}\!\left(\ket{s_1}\ket{s_2}\hdots\ket{s_{t-1}}\ket{f}\right)_{W_B}\,\ket{f}_{W_j}
\end{eqnarray}
Note that this state is the same as the encoded state of the $[[t+1,t-1]]_Q$ in Eq.~\eqref{eq:enc_unit_p1_qmds}.
The logical qudits of the $[[t+1,t-1]]_Q$ code can be recovered from the $t$ physical qudits corresponding to $T$ by applying ${U_T^{(E)}}^\dagger$.
From Eq.~\eqref{eq:tmp_3}, it is also clear that the logical qudits of the $[[t+1,t-1]]_Q$ code can be recovered from the $t$ physical qudits corresponding to $B$ by applying ${U_B^{(E)}}^\dagger$.
Hence, the logical qudits of the code can be recovered from any $t$ of its physical qudits.

In other words, the erasure of any one physical qudit can be corrected.
This means the distance of the code is $D'=2$, which leads to equality in the Singleton bound.
\begin{eqnarray}
2(D'-1)=2=(t+1)-(t-1).
\end{eqnarray}
Hence the $[[t+1,t-1]]_Q$ code with the encoding in Eq.~\eqref{eq:enc_unit_p1_qmds} is a quantum MDS code.
\end{proof}

\newtheorem*{orig_lemma_d_geq_t}{Lemma~\ref{lm:d_geq_t}}
\begin{orig_lemma_d_geq_t}
Consider a distributed quantum storage with $n$ nodes using an $[[n,2t-n]]_Q$ quantum MDS code.
To replace a lost node, the replacement node needs quantum communication with at least $t$ of the other $n-1$ nodes.
\end{orig_lemma_d_geq_t}

\begin{proof}
The proof is by contradiction.
Let $e\in[n]$ indicate an erased physical qudit.
Assume that the erased qudit can be replaced after quantum communication with some other physical qudits denoted by $L\subseteq[n]\backslash\{e\}$ where $|L|=t-1$.
Let the system containing the new qudit which replaces the erased qudit be $\widehat{W}_e$.

Consider the maximally mixed state
\begin{equation}
\rho_\mathcal{S}=\frac{1}{\sqrt{Q^{2t-n}}}
\sum_{\ul{s}\in\A_Q^{2t-n}}\ket{\ul{s}}\bra{\ul{s}}.
\end{equation}
in the system $\mathcal{S}$.
Let $\mathcal{R}$ be a reference system such that
\begin{equation}
\ket{\phi}_\mathcal{RS}=\frac{1}{\sqrt{Q^{2t-n}}}\sum_{\ul{s}\in\A_Q^{2t-n}}\ket{\ul{s}}_\mathcal{R}\ket{\ul{s}}_\mathcal{S}.
\end{equation}
Let $\ket{\chi_\text{in}}$ be the state obtained after $\rho_\mathcal{S}$ is encoded by the QMDS code.
The erased qudit in $W_e$ can be replaced with a new physical qudit in $\widehat{W}_e$ using quantum communication with $W_L$ only if there exists a bipartite LOCC$(W_L \widehat{W}_e,$
$\mathcal{R}W_{[n]\setminus L})$ protocol which transforms
\begin{equation}
\ket{\chi_\text{in}}=\frac{1}{\sqrt{Q^t}}\sum_{\ul{s}\in\A_Q^{2t-n}}\ket{\ul{s}}_\mathcal{R}
\sum_{\substack{r_1,r_2,\hdots,r_{n-t}\\\in\A_Q}}U^{(E)}_{L\cup\{e\}}(\ket{\ul{s}}\ket{r_1}\ket{r_2}\hdots\ket{r_{n-t}})_{W_L W_e}
\,\,\,\,(\ket{r_1}\ket{r_2}\hdots\ket{r_{n-t}})_{W{[n]\setminus(L\cup\{e\})}}
\label{eq:d_geq_t_in}
\end{equation}
to the state 
\begin{equation}
\ket{\chi_\text{out}}=\frac{1}{\sqrt{Q^t}}\sum_{\ul{s}\in\A_Q^{2t-n}}\ket{\ul{s}}_\mathcal{R}
\sum_{\substack{r_1,r_2,\hdots,r_{n-t}\\\in\A_Q}}U^{(E)}_{L\cup\{e\}}(\ket{\ul{s}}\ket{r_1}\ket{r_2}\hdots\ket{r_{n-t}})_{W_L \widehat{W}_e}
\,\,\,\,(\ket{r_1}\ket{r_2}\hdots\ket{r_{n-t}})_{W{[n]\setminus(L\cup\{e\})}}
\,\,\,\,\ket{\epsilon}_{W_e}
\label{eq:d_geq_t_out}
\end{equation}
where $\ket{\epsilon}$ is some pure state in $\Hs_Q$.
The states in Eq.~\eqref{eq:d_geq_t_in} and Eq.~\eqref{eq:d_geq_t_out} are due to Lemma~\ref{lm:unit_qmds}.
The existence of such an LOCC protocol implies
\begin{equation}
\sr(W_L \widehat{W}_e)_{\chi_\text{in}}\geq \sr(W_L \widehat{W}_e)_{\chi_\text{out}}
\label{eq:d_geq_t_3}
\end{equation}
due to the Schmidt rank monotonicity under LOCC.
However, it can be shown that
\begin{eqnarray}
&\sr(W_L \widehat{W}_e)_{\chi_\text{in}}=\sr(W_L)_{\chi_\text{in}}=Q^{t-1},&
\\&\sr(W_L \widehat{W}_e)_{\chi_\text{out}}=Q^t&
\end{eqnarray}
which leads to a contradiction.

The above contradiction shows that the erasure of a physical qudit cannot be corrected using quantum communication with only $t-1$ (or fewer) non-erased qudits.
\end{proof}

\newtheorem*{orig_lemma_t_is_suff}{Lemma~\ref{lm:t_is_suff}}
\begin{orig_lemma_t_is_suff}
Consider a distributed quantum storage with $n$ nodes using an $[[n,2t-n]]_Q$ quantum MDS code.
To replace a lost node, it is sufficient for the replacement node to communicate with any $t$ of the other $n-1$ nodes.
\end{orig_lemma_t_is_suff}
\begin{proof}
Let $\in[n]$ indicate the erased qudit and $T\subseteq[n]\{e\}$ indicate any set of $t$ other non-erased qudits.
By Lemma~\ref{lm:unit_qmds}, we can write the encoded state corresponding to any $\ket{\phi}\in\Hs^{\otimes{(2t-n)}}$ as
\begin{eqnarray}
\frac{1}{\sqrt{Q^{n-t}}}
\sum_{\substack{r_1,r_2,\hdots,\\r_{n-t}\in\A_Q}}
U_T^{(E)}\!\left(\ket{\phi}\ket{r_1}\ket{r_2}\hdots\ket{r_{n-t}}\right)_{W_T}
\,\,(\ket{r_1}\ket{r_2}\hdots\ket{r_{n-t-1}})_{W_{[n]\setminus(T\cup\{e\})}}\ket{r_{n-t}}_{W_e}.
\end{eqnarray}
A replacement node $\widehat{W}_e$ can download the $t$ physical qudits from $W_T$ and invert the unitary to obtain
\begin{eqnarray}
\frac{1}{\sqrt{Q^{n-t}}}
\sum_{\substack{r_1,r_2,\hdots,\\r_{n-t}\in\A_Q}}
\left(\ket{\phi}\ket{r_1}\ket{r_2}\hdots\ket{r_{n-t}}\right)_{\widehat{W}_e}
\,\,(\ket{r_1}\ket{r_2}\hdots\ket{r_{n-t-1}})_{W_{[n]\setminus(T\cup\{e\})}}\ket{r_{n-t}}_{W_e}.
\end{eqnarray}
The node $\widehat{W}_e$ then measures its last qudit (and obtains some pure state $\ket{\epsilon}\in\Hs_Q$).
\begin{eqnarray}
\frac{1}{\sqrt{Q^{n-t-1}}}
\sum_{\substack{r_1,r_2,\hdots,\\r_{n-t-1}\in\A_Q}}
\left(\ket{\phi}\ket{r_1}\ket{r_2}\hdots\ket{r_{n-t-1}}\ket{\epsilon}\right)_{\widehat{W}_e}
\,\,(\ket{r_1}\ket{r_2}\hdots\ket{r_{n-t-1}})_{W_{[n]\setminus(T\cup\{e\})}}\ket{\epsilon}_{W_e}.
\end{eqnarray}
It generates a new maximally entangled state using this qudit and an ancilla qudit.
\begin{eqnarray}
\frac{1}{\sqrt{Q^{n-t}}}
\sum_{\substack{r_1,r_2,\hdots,\\r_{n-t-1},\,r\,\in\,\A_Q}}
\left(\ket{\phi}\ket{r_1}\ket{r_2}\hdots\ket{r_{n-t-1}}\ket{r}\ket{r}\right)_{\widehat{W}_e}
\,\,(\ket{r_1}\ket{r_2}\hdots\ket{r_{n-t-1}})_{W_{[n]\setminus(T\cup\{e\})}}\ket{\epsilon}_{W_e}.
\end{eqnarray}
Then it applies the unitary operator $U_T^{(E)}$ on its first $t$ qudits and sends them to their respective nodes in $W_T$\,.
\begin{eqnarray}
\frac{1}{\sqrt{Q^{n-t}}}
\sum_{\substack{r_1,r_2,\hdots,\\r_{n-t-1},\,r\,\in\,\A_Q}}
U_T^{(E)}\!\left(\ket{\phi}\ket{r_1}\ket{r_2}\hdots\ket{r_{n-t-1}}\ket{r}\right)_{W_T}
\,\,(\ket{r_1}\ket{r_2}\hdots\ket{r_{n-t-1}})_{W_{[n]\setminus(T\cup\{e\})}}\ket{r}_{\widehat{W}_e}
\ket{\epsilon}_{W_e}.
\end{eqnarray}
Now the erasure correction is complete with the replacement node communicating with only $t$ non-erased physical qudits.
This process incurs an overall quantum communication of $2t$ qudits and no classical communication.
\end{proof}

\section{Proof for Lemma~\ref{lm:sr_psi_out}}
\label{ap:post_locc_sr}

\begin{lemma}
\label{lm:phi_entr_list}
Consider the state $\ket{\Phi}_{\mathcal{R}W_{[n]}}$ as defined in \eqref{eq:qmds_enc_1}.
For any $A\subseteq[n]$,
\begin{flalign}
&\hspace{5cm}\,\,\mathbf{S}(W_A)_\Phi=
\begin{cases}
\,\,\,|A|\log Q & \,\,\,\,\text{if }1\leq|A|\leq t,
\\[0.05cm]\,\,\,(2t-|A|)\log Q & \,\,\,\,\text{if }t+1\leq|A|\leq n.
\end{cases}\!\!\!\!\!\!&
\end{flalign}
\end{lemma}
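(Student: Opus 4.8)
The plan is to show that the purification $\ket{\Phi}_{\mathcal{R}W_{[n]}}$ is an absolutely maximally entangled state on the $2t$ qudits carried by $\mathcal{R}$ and $W_{[n]}$, and then to read off every marginal entropy from two facts: that any at most $t$ of these qudits are maximally mixed, and that the global state is pure. Recall that $\rho_\mathcal{S}$ is the maximally mixed state on $\Hs_Q^{\otimes(2t-n)}$, so the reference $\mathcal{R}$ in $\ket{\phi}_{\mathcal{RS}}$ consists of $2t-n$ qudits and the combined system $\mathcal{R}W_{[n]}$ has exactly $(2t-n)+n=2t$ qudits.

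First I would establish the key structural claim: $\ket{\Phi}$, regarded as a pure state on these $2t$ qudits, is the code state of a $[[2t,0]]_Q$ quantum MDS code of distance $t+1$. This is the standard correspondence between $[[n,2t-n]]_Q$ QMDS codes and $\text{AME}(2t,Q)$ states, in which the $2t-n$ reference qudits of $\mathcal{R}$ play the role of the extra physical qudits of the enlarged QMDS code \cite{huber20,grassl22}. A convenient way to make this explicit is via Lemma~\ref{lm:unit_qmds}: substituting the decomposition of the encoded state into \eqref{eq:qmds_enc_1} exhibits $\ket{\Phi}$ in a form that is manifestly symmetric between $\mathcal{R}$ and the ``tail'' qudits $W_{[t+1,n]}$, both of which appear as computational-basis labels of the single free index $\ul{m}=(\ul{s},\ul{r})\in\A_Q^t$; closure of QMDS codes under this extension then yields the $[[2t,0]]_Q$ code. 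Granting this, the $K=1$ distance condition \eqref{eq:k_equals_1_distance_cond} with $D=t+1$ says that for any operator $\Gamma$ supported on at most $t$ of the $2t$ qudits, $\bra{\Phi}\Gamma\ket{\Phi}=Q^{-2t}\Tr(\Gamma)$; choosing $\Gamma$ to probe an arbitrary at-most-$t$-qudit marginal forces that marginal to be maximally mixed, hence to have entropy equal to its number of qudits times $\log Q$.

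With this in hand the two cases of the lemma follow immediately. When $1\leq|A|\leq t$, the subsystem $W_A$ is a set of at most $t$ of the $2t$ qudits, so it is maximally mixed and $\mathbf{S}(W_A)_\Phi=|A|\log Q$. When $t+1\leq|A|\leq n$, I would instead pass to the complement: since $\ket{\Phi}$ is pure on $\mathcal{R}W_{[n]}$, we have $\mathbf{S}(W_A)_\Phi=\mathbf{S}(\mathcal{R}W_{[n]\setminus A})_\Phi$, and the complementary system $\mathcal{R}W_{[n]\setminus A}$ contains $(2t-n)+(n-|A|)=2t-|A|$ qudits. Because $|A|\geq t+1$ gives $2t-|A|\leq t-1\leq t$, this complement is again an at-most-$t$-qudit marginal, hence maximally mixed, giving $\mathbf{S}(W_A)_\Phi=(2t-|A|)\log Q$.

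The main obstacle is the structural claim of the second paragraph, namely verifying the reference-inclusive $t$-uniformity of $\ket{\Phi}$. The purely physical small marginals ($|A|\leq n-t$) are already immediate from \cite[Observation~6]{huber20}: tracing out $\mathcal{R}$ leaves the encoding of $\rho_\mathcal{S}$, an average of code states each of whose at-most-$(n-t)$-qudit marginals is the maximally mixed state, so the average is maximally mixed as well. The content that genuinely requires the AME viewpoint is the maximal mixedness of marginals that either exceed $n-t$ physical qudits or involve the reference qudits; establishing that uniformity (through the QMDS--AME correspondence, or by a direct computation built on Lemma~\ref{lm:unit_qmds}) is where the real work lies, after which the entropy formula is just bookkeeping via purity.
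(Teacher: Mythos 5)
Your case split and the complement trick for $|A|\geq t+1$ are sound and match the paper's proof, but the input you feed them rests on a claim that is false in the generality you need: that $\ket{\Phi}$, with $\mathcal{R}$ split into $2t-n$ individual qudits, is an AME$(2t,Q)$ state, i.e.\ a $[[2t,0]]_Q$ QMDS code state of distance $t+1$. The purification depends on the choice of logical basis (equivalently, on the encoding operator $E$), and $t$-uniformity across cuts that \emph{split} $\mathcal{R}$ is not preserved under that choice. Concretely, take the $[[4,2,2]]_2$ code ($t=3$) with the logical basis $\ket{\overline{ab}}\propto\ket{0,a,b,a\oplus b}+\ket{1,\bar a,\bar b,\overline{a\oplus b}}$: in the six-qubit purification the three labels consisting of the first reference qubit and $W_1,W_2$ always satisfy the parity constraint $r_1\oplus w_1\oplus w_2=0$, so that three-qubit marginal has rank $4$ rather than $8$ and the state is not AME$(6,2)$ (even though AME$(6,2)$ exists). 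The QMDS--AME correspondence you cite produces \emph{some} AME state after a suitable choice of bases; it does not assert that every purification is one, and Lemma~\ref{lm:unit_qmds} only exhibits symmetry between $W_{[n]\setminus T}$ and $\mathcal{R}$ \emph{as a block}, which is strictly weaker than the distance-$(t+1)$ condition \eqref{eq:k_equals_1_distance_cond} on all $2t$ qudits. Since you yourself defer this ``real work'' to the correspondence, the proof as written has a genuine gap.

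The good news is that your argument never actually uses a marginal that splits $\mathcal{R}$: for $|A|\leq t$ you use only $W_A$, and for $|A|\geq t+1$ you use only $\mathcal{R}W_{[n]\setminus A}$ with all of $\mathcal{R}$ present. So the statement you need is the block version: for every $T\subseteq[n]$ with $|T|=t$, the bipartition $\big(W_T,\,\mathcal{R}W_{[n]\setminus T}\big)$ of the pure state $\ket{\Phi}$ is maximally entangled. This is immediate from Lemma~\ref{lm:unit_qmds}: applying the local unitary ${U_T^{(E)}}^\dagger$ on $W_T$ puts $\ket{\Phi}$ in the form $Q^{-t/2}\sum_{\ul{s},\ul{r}}\big(\ket{\ul{s}}\ket{\ul{r}}\big)_{W_T}\,\ket{\ul{s}}_{\mathcal{R}}\ket{\ul{r}}_{W_{[n]\setminus T}}$, a rank-$Q^t$ maximally entangled state. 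Then every subsystem of $W_T$ (hence every $W_A$ with $|A|\leq t$, by embedding $A$ in a suitable $T$) and every subsystem of $\mathcal{R}W_{[n]\setminus T}$ (hence every $\mathcal{R}W_{[n]\setminus A}$ with $|A|\geq t+1$, by choosing $T\subseteq A$) is maximally mixed, and the entropy count goes through exactly as you wrote it. This repaired route is precisely the paper's proof.
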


\begin{proof}
By Lemma~\ref{lm:unit_qmds}, for any $T\subseteq[n]$ such that $|T|=t$, we can write
\begin{eqnarray}
\ket{\Phi}_{\mathcal{R}W_{[n]}}
=\frac{1}{\sqrt{Q^t}}
\sum_{\ul{s}\in\A_Q^{2t-n}}\sum_{\substack{r_1,r_2,\hdots,\\r_{n-t}\,\in\,\A_Q}}
\ket{\ul{s}}_\mathcal{R}\,U_T^{(E)}\!\left(\ket{\ul{s}}\ket{r_1}\!\hdots\ket{r_{n-t}}\right)_{W_T}
(\ket{r_1}\ket{r_2}\hdots\ket{r_{n-t}})_{W_{[n]\setminus T}}.
\end{eqnarray}
Assume that there are two parties $P_1$ and $P_2$ with $P_1$ holding the subsystem $W_T$ and $P_2$ holding $\mathcal{R}W_{[n]\setminus T}$.
After party $P_1$ locally inverts the unitary operator $U_T^{(E)}$, we obtain
\begin{eqnarray}
\frac{1}{\sqrt{Q^t}}
\sum_{\ul{s}\in\A_Q^{2t-n}}\sum_{\substack{r_1,r_2,\hdots,\\r_{n-t}\,\in\,\A_Q}}
\ket{\ul{s}}_\mathcal{R}\,\left(\ket{\ul{s}}\ket{r_1}\!\hdots\ket{r_{n-t}}\right)_{W_T}
(\ket{r_1}\ket{r_2}\hdots\ket{r_{n-t}})_{W_{[n]\setminus T}}.
\end{eqnarray}
Note that the above state is a maximally entangled state with respect to the bipartition $(W_T,\mathcal{R}W_{[n]\setminus T})$.
This implies that $W_T$ or any of its subsystems containing some of its qudits is in the maximally mixed state.
Hence, for any $A_1\subseteq T$,
\begin{equation}
\mathbf{S}(W_{A_1})=|A_1|\log Q.
\end{equation}
Similarly, $\mathcal{R}W_{[n]\setminus T}$ or any of its subsystems containing some of its qudits is in the maximally mixed state.
Hence, for any $A_2=T\cup B$ where $B\subseteq[n]\setminus T$,
\begin{equation}
\mathbf{S}(W_{A_2})=\mathbf{S}(W_{T\cup B})=\mathbf{S}(\mathcal{R}W_{[n]\setminus(T\cup B)})=(2t-n+n-t-|B|)\log Q=(t-|B|)\log Q=(2t-|A_2|)\log Q.
\end{equation}
\end{proof}

\begin{lemma}
\label{lm:entr_chi}
Let $U_T^{(E)}$ be the unitary operator as defined in Lemma~\ref{lm:unit_qmds}.
Consider the state
\begin{flalign}
\ket{\chi}_{\mathcal{R}W_{[n-1]}}
=\frac{1}{\sqrt{Q^{t-1}}}
\sum_{\ul{s}\in\A_Q^{2t-n}}
\sum_{\substack{r_1,r_2,\hdots,\\r_{n-t-1}\in\A_Q}}
\!\ket{\ul{s}}_\mathcal{R}
\,U_T^{(E)}\!\left(\ket{\ul{s}}\!\ket{r_1}\!\ket{r_2}\hdots\ket{r_{n-t-1}}\!\ket{\epsilon}\right)_{W_T}
\,\ket{r_1}_{W_{t+1}}\!\ket{r_2}_{W_{t+2}}\hdots\ket{r_{n-t-1}}_{W_{n-1}}
\end{flalign}
jointly held by parties $\mathcal{R}, W_1, W_2,\hdots,W_{n-1}$\,.
Here $\ket{\epsilon}$ is some pure state in $\Hs_Q$.
Then, for any $j\in[n-1]$,
\begin{equation}
\sr(W_j)_\chi=Q.
\end{equation}
\end{lemma}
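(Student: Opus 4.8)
The plan is to reduce the statement to the maximal mixedness of two-qudit marginals already recorded in Lemma~\ref{lm:phi_entr_list}, so that I never have to analyze the unknown unitary $U_T^{(E)}$ directly. The key observation is that $\ket{\chi}$ is exactly the state obtained from the full $[[n,2t-n]]_Q$ encoding $\ket{\Phi}_{\mathcal{R}W_{[n]}}$ of Lemma~\ref{lm:phi_entr_list} by projecting its $n$th physical qudit onto a fixed pure state and discarding it. Concretely, writing $\ket{\bar\epsilon}:=\sum_{r\in\A_Q}\overline{\braket{r|\epsilon}}\,\ket{r}$, I would first verify the identity
\begin{equation}
\ket{\chi}_{\mathcal{R}W_{[n-1]}}=\sqrt{Q}\,\big(\idmtx_{\mathcal{R}W_{[n-1]}}\otimes\bra{\bar\epsilon}_{W_n}\big)\ket{\Phi}_{\mathcal{R}W_{[n]}},
\end{equation}
which follows by inserting the $U_T^{(E)}$-expression for $\ket{\Phi}$ from the proof of Lemma~\ref{lm:phi_entr_list} and using $\sum_{r}\braket{\bar\epsilon|r}\,U_T^{(E)}(\cdots\ket{r})=U_T^{(E)}(\cdots\ket{\epsilon})$. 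The prefactor $\sqrt{Q}$ is fixed by $\rho(W_n)_\Phi=\tfrac1Q\idmtx$, itself the $|A|=1$ case of Lemma~\ref{lm:phi_entr_list}.

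Fix any $j\in[n-1]$. Since $j\neq n$ and $t\geq 2$, the set $\{j,n\}$ has size $2\leq t$, so Lemma~\ref{lm:phi_entr_list} gives $\mathbf{S}(W_{\{j,n\}})_\Phi=2\log Q$; equivalently the two-qudit marginal is maximally mixed on $\Hs_Q\otimes\Hs_Q$, i.e. $\rho(W_jW_n)_\Phi=\tfrac{1}{Q^2}\idmtx$. Writing $E$ for the remaining systems $\mathcal{R}W_{[n-1]\setminus\{j\}}$, this lets me expand
\begin{equation}
\ket{\Phi}=\frac1Q\sum_{a,b\in\A_Q}\ket{a}_{W_j}\ket{b}_{W_n}\ket{\mu_{ab}}_E,
\end{equation}
where maximal mixedness of $\rho(W_jW_n)_\Phi$ forces the $Q^2$ vectors $\{\ket{\mu_{ab}}\}$ to be orthonormal. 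Projecting $W_n$ onto $\bra{\bar\epsilon}$ then yields $\ket{\chi}=\tfrac{1}{\sqrt Q}\sum_{a}\ket{a}_{W_j}\ket{\nu_a}_E$ with $\ket{\nu_a}:=\sum_b\braket{\bar\epsilon|b}\ket{\mu_{ab}}$, and a one-line computation gives $\braket{\nu_a|\nu_{a'}}=\delta_{a,a'}\,\|\bar\epsilon\|^2=\delta_{a,a'}$. Hence $\rho(W_j)_\chi=\tfrac1Q\sum_a\ket{a}\bra{a}=\tfrac1Q\idmtx$, which has rank $Q$, so $\sr(W_j)_\chi=Q$.

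The only delicate point is the projection step: projecting $W_n$ could a priori collapse $W_j$ onto a lower-rank state, and it is precisely the maximal mixedness of the joint pair $\{W_j,W_n\}$ (rather than of $W_j$ alone) that rescues the argument, since it guarantees that the full family $\{\ket{\mu_{ab}}\}$ --- and therefore every image family $\{\ket{\nu_a}\}$ obtained from an arbitrary unit vector $\bar\epsilon$ --- stays orthonormal. I would also flag the harmless complex-conjugation bookkeeping in passing from $\ket{\epsilon}$ to $\ket{\bar\epsilon}$; since $\ket{\bar\epsilon}$ is again an arbitrary unit vector, this costs nothing. Note that this single argument covers both $j\in[t]$, where $W_j$ is entangled through $U_T^{(E)}$, and $j\in[t+1,n-1]$, where $W_j$ carries a bare index, with no case distinction; the latter range can alternatively be dispatched by a direct partial trace if a more self-contained check is wanted.
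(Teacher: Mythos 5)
Your proposal is correct and follows essentially the same route as the paper: both identify $\ket{\chi}$ (up to normalization) with the state obtained from $\ket{\Phi}_{\mathcal{R}W_{[n]}}$ by projecting $W_n$ onto a fixed pure state, and both deduce $\sr(W_j)_\chi=Q$ from the maximal mixedness of the two-qudit marginal $\rho(W_jW_n)_\Phi$ supplied by Lemma~\ref{lm:phi_entr_list}. Your write-up merely makes explicit two points the paper leaves implicit --- the orthonormal-family argument showing that conditioning on $W_n$ preserves the maximal mixedness of $W_j$, and the complex-conjugation bookkeeping in the projected vector --- both of which are handled correctly.
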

\begin{proof}
Recall from Eq.~\eqref{eq:phi_r_wn} that
\begin{flalign}
\ket{\Phi}_{\mathcal{R}W_{[n]}}
=\frac{1}{\sqrt{Q^t}}
\sum_{\ul{s}\in\A_Q^{2t-n}}
\,\sum_{\substack{r_1,r_2,\hdots,\\r_{n-t}\in\A_Q}}
\ket{\ul{s}}_\mathcal{R}
\,\,U_T^{(E)}\!\left(\ket{\ul{s}}\ket{r_1}\ket{r_2}\hdots\ket{r_{n-t}}\right)_{W_T}
\,\,\,\ket{r_1}_{W_{t+1}}\ket{r_2}_{W_{t+2}}\hdots\ket{r_{n-t}}_{W_n}\!.
\end{flalign}
Now assume that we measure the qudit in $W_n$ in some basis $\{\ket{\mu_j}\}_{j=0}^{Q-1}$ with $\ket{\mu_0}=\ket{\epsilon}$ and obtain the outcome 0.
After this measurement, we obtain the state
\begin{flalign}
&\hspace{0.5cm}\frac{1}{\sqrt{Q^{t-1}}}
\sum_{\ul{s}\in\A_Q^{2t-n}}\sum_{\substack{r_1,r_2,\hdots,\\r_{n-t-1}\in\A_Q}}
\ket{\ul{s}}_\mathcal{R}
\,\,U_T^{(E)}\!\left(\ket{\ul{s}}\ket{r_1}\ket{r_2}\hdots\ket{r_{n-t-1}}\ket{\epsilon}\right)_{W_T}
\,\,\,\ket{r_1}_{W_{t+1}}\ket{r_2}_{W_{t+2}}\hdots\ket{r_{n-t-1}}_{W_{n-1}}\ket{\epsilon}_{W_n}&
\\*[0.2cm]&\hspace{0.8cm}\,\,=\,\,\ket{\chi}_{\mathcal{R}W_{[n-1]}}\ket{\epsilon}_{W_n}.
\end{flalign}
By Lemma~\ref{lm:phi_entr_list}, we know $\mathbf{S}(W_jW_n)_\Phi=2\log Q$.
Hence the subsystem $W_j W_n$, which has two qudits, in $\ket{\Phi}_{\mathcal{R}W_{[n]}}$  is in the maximally mixed state
\begin{equation}
\rho(W_jW_n)_\Phi=\frac{1}{Q^2}\sum_{i,j\,\in\,\A_Q}\ket{i}\ket{\mu_j}\bra{i}\bra{\mu_j}.
\end{equation}
This implies that after the measurement with outcome 0 as described above, the subsystem $W_j$ will be in the state 
\begin{equation}
\rho(W_j)_\chi=\frac{1}{Q}\,\sum_{\,i\in\A_Q}\ket{i}\bra{i}.
\end{equation}
This implies that $\sr(W_j)_\chi=Q$.
\end{proof}

\newtheorem*{orig_lemma_sr_psi_out}{Lemma~\ref{lm:sr_psi_out}}
\begin{orig_lemma_sr_psi_out}
For the state $\ket{\psi_\text{out}}_{W_T W_T'}$ as given in Eq.~\eqref{eq:psi_out},
\begin{eqnarray}
\sr(\widehat{W}_n)_{\psi_\text{out}}&=&Q,
\label{eq:ap_sr_rep_node}
\\\sr(W_J W_J')_{\psi_\text{out}}&\geq&Q^2
\label{eq:ap_sr_psi_out}
\end{eqnarray}
for any $J\subset T$ with $1\leq|J|\leq t-1$.
\end{orig_lemma_sr_psi_out}

\begin{proof}
Recall from Eq.~\eqref{eq:psi_out} that
\begin{flalign}
\ket{\psi_\text{out}}=\frac{1}{\sqrt{Q^t}}\sum_{\ul{s}\in\A_Q^{2t-n}}\sum_{\substack{r_1,r_2,\hdots,\\r_{n-t}\in\A_Q}}
\,U_T^{(E)}\!\left(\ket{\ul{s}}\ket{r_1}\ket{r_2}\hdots\ket{r_{n-t}}\right)_{W_T}
\,\,\ket{r_{n-t}}_{\widehat{W}_n}
\,\,U_T^{(E)}
\big(\!\ket{\ul{s}}\,G_1^\dagger\!\ket{r_1}\hdots\,G_{n-t-1}^\dagger\!\ket{r_{n-t-1}}\ket{\epsilon}\!\big)_{W_T'}.
\nonumber\\*[-0.4cm]
\end{flalign}
The $Q$-dimensional qudit in subsystem $\widehat{W}_n$ for the state $\ket{\psi_\text{out}}$ is in the maximally mixed state.
Hence we obtain Eq.~\eqref{eq:ap_sr_rep_node}.

Without loss of generality, take $T=\{1,2,\hdots,t\}$ and $J=\{1,2,\hdots,j\}$ where $j:=|J|$.
Let $L=T\setminus J$.
Assume that there are two parties $P_1$ and $P_2$ with $P_1$ holding the joint system $W_t W_J'$ and $P_2$ holding the system $W_{[t-1]}W_L'\widehat{W}_n$ of the state $\ket{\psi_\text{out}}$.
Consider the $t+1$ qudits corresponding to $W_T\widehat{W}_n$ in state $\ket{\psi_\text{out}}$.
From Lemma~\ref{lm:unit_p1_qmds}, we know that these qudits form a $[[t+1,t-1]]_Q$ quantum MDS code with distance
\begin{equation}
D'=(t+1)-t+1=2.
\end{equation}
Any $t$ physical qudits of this code can be used to recover its logical qudits.
The party $P_2$ holds $t$ physical qudits of this code corresponding to $W_{[t-1]}\widehat{W}_n$.
After $P_2$ decodes these $t$ qudits to recover the logical qudits of this code, we obtain
\begin{flalign}
\frac{1}{\sqrt{Q^t}}\sum_{\ul{s}\in\A_Q^{2t-n}}
\sum_{\substack{r_1,r_2,\hdots,\\r_{n-t-1},f\in\A_Q}}
\left(\ket{\ul{s}}\ket{r_1}\ket{r_2}\hdots\ket{r_{n-t-1}}\right)_{W_{[t-1]}}
\ket{f}_{\widehat{W}_n}\,\ket{f}_{W_t}
\,U_T^{(E)}
\big(\!\ket{\ul{s}}\,G_1^\dagger\!\ket{r_1}\hdots\,G_{n-t-1}^\dagger\!\ket{r_{n-t-1}}\ket{\epsilon}\!\big)_{W_T'}.
\end{flalign}
Then the party $P_2$ applies the unitary operator $G_i^\dagger$ to the $(2t-n+i)$th qudit in $W_{[t-1]}$ for all $i\in[n-t-1]$ to obtain
\begin{flalign}
&\frac{1}{\sqrt{Q^t}}\sum_{\ul{s}\in\A_Q^{2t-n}}\,
\sum_{\substack{r_1,r_2,\hdots,\\r_{n-t-1},\\f\in\A_Q}}
(\ket{\ul{s}}G_1^\dagger\!\ket{r_1}\hdots G_{n-t-1}^\dagger\!\ket{r_{n-t-1}})_{W_{[t-1]}}
\,\ket{f}_{\widehat{W}_n}\,\ket{f}_{W_j}
\,U_T^{(E)}\big(\!\ket{\ul{s}}\,G_1^\dagger\!\ket{r_1}\hdots\,G_{n-t-1}^\dagger\!\ket{r_{n-t-1}}\ket{\epsilon}\!\big)_{W_T'}\!\!\!\!&\nonumber
\\*&=\frac{1}{\sqrt{Q^t}}\sum_{\ul{s}\in\A_Q^{2t-n}}\,
\sum_{\substack{r_1,r_2,\hdots,\\r_{n-t-1},f\in\A_Q}}
\!\left(\ket{\ul{s}}\ket{r_1}\ket{r_2}\hdots\ket{r_{n-t-1}}\right)_{W_{[t-1]}}
\ket{f}_{\widehat{W}_n}\,\ket{f}_{W_t}
\,U_T^{(E)}\big(\!\ket{\ul{s}}\ket{r_1}\ket{r_2}\hdots\ket{r_{n-t-1}}\ket{\epsilon}\!\big)_{W_T'}&
\\*&=\,\ket{\chi}_{W_{[t-1]}W_T'}\,\ket{Q^+}_{\widehat{W}_n W_t}.
\label{eq:psi_out_modified}
\end{flalign}
Since local unitary operations do not change the Schmidt rank, the Schmidt rank of the subsystem held by $P_1$ \textit{i.e.} $W_t W_J'$ for the state in Eq.~\eqref{eq:psi_out_modified} is the same as $\sr(P_1)_{\psi_\text{out}}$.
Hence,
\begin{eqnarray}
\sr(W_tW_J')_{\psi_\text{out}}
\,\,=\,\,\sr(W_t)_{Q^+}\sr(W_J')_\chi
\,\,=\,\,Q\,\sr(W_J')_\chi
\,\,=\,\,Q^{|J|+1}.
\end{eqnarray}
The last equality in the above equation is due to Lemma~\ref{lm:entr_chi}.
\end{proof}

\end{document}